\definecolor{DarkGreen}{rgb}{0,0.5,0}
\newcolumntype{L}[1]{>{\raggedright\arraybackslash}p{#1}}
\newcolumntype{R}[1]{>{\raggedleft\arraybackslash}p{#1}}
\newtheorem{theorem}{Theorem}
\newtheorem{lemma}{Lemma}
\newtheorem{corollary}{Corollary}
\newtheorem{proposition}{Proposition}
\newenvironment{newproof}{\paragraph{\emph{Proof:}}}{\hfill$\square$ \\}
\theoremstyle{definition}
\newtheorem{definition}{Definition}
\newtheorem{property}{Property}
\newtheorem{csmodel}{Calibrated sensitivity model}
\newtheorem{assumption}{Assumption}
\theoremstyle{remark}
\newtheorem{remark}{Remark}
\def\inprob{\stackrel{p}{\rightarrow}}
\def\indist{\rightsquigarrow}
\newcommand\ind{\protect\mathpalette{\protect\independenT}{\perp}}
\def\independenT#1#2{\mathrel{\rlap{$#1#2$}\mkern4mu{#1#2}}}
\DeclareMathOperator*{\argmax}{arg\,max}
\DeclareSymbolFont{bbold}{U}{bbold}{m}{n}
\DeclareSymbolFontAlphabet{\mathbbold}{bbold}
\newcommand{\one}{\mathbbold{1}}
\def\cov{\text{cov}}
\def\bbE{\mathbb{E}}
\def\bbP{\mathbb{P}}
\def\bbR{\mathbb{R}}
\def\bbV{\mathbb{V}}
\title{\textbf{Calibrated sensitivity models}}
\author{Alec McClean\footnote{Division of Biostatistics, Department of Population Health, New York University Grossman School of Medicine}, Zach Branson$^\dagger$, and Edward H. Kennedy\footnote{Department of Statistics \& Data Science, Carnegie Mellon University} \\ \\ 
	\texttt{hadera01@nyu.edu, \{zach, edward\} @ stat.cmu.edu} 
}
\begin{document}
\maketitle
\begin{abstract}
	In causal inference, sensitivity models assess how unmeasured confounders could alter causal analyses, but the sensitivity parameter --- which quantifies the degree of unmeasured confounding --- is often difficult to interpret. For this reason, researchers sometimes compare the sensitivity parameter to an estimate of measured confounding. This is known as calibration, or benchmarking. However, calibrated estimates are not always interpreted correctly, and uncertainty in the estimate of measured confounding is rarely accounted for. To address these limitations, we propose calibrated sensitivity models, which directly bound the degree of unmeasured confounding by a multiple of measured confounding. We develop a clear framework for interpreting calibrated sensitivity models and derive statistical methods for accounting for uncertainty due to estimating measured confounding. Incorporating this uncertainty shows causal analyses may be either less or more robust to unmeasured confounding than suggested by standard approaches. We develop efficient estimators and inferential methods for bounds on the average treatment effect with three calibrated sensitivity models, establishing parametric efficiency and asymptotic normality under doubly robust style nonparametric conditions. We illustrate our methods with an analysis of the effect of mothers' smoking on infant birthweight. \\ 

\noindent {\bf Keywords:} causal inference, sensitivity analysis, calibration, doubly robust estimation
\end{abstract}

\newpage
\section{Introduction} \label{sec: intro}

In causal inference, the goal is often to estimate whether exposure to a treatment causes a change in outcomes. Experiments where exposure to treatment is randomized facilitate valid estimation of causal effects under minimal assumptions, but are often infeasible, unethical, or too expensive.  Therefore, researchers frequently estimate causal effects from observational data, where exposure to treatment is not randomized. To estimate causal effects with observational data, researchers routinely invoke the \emph{no unmeasured confounding} assumption, which says that the treatment is as-if randomized within observed covariate strata.  Unfortunately, this assumption is often implausible, because there may be variables beyond the ones observed that are associated with the treatment and outcomes of the study.  Thus, it is imperative to conduct a sensitivity analysis to understand how robust causal analyses are to unmeasured confounding.

\bigskip

We focus on \emph{partial identification} sensitivity analyses. In simplified terms, these analyses impose a bound
\begin{equation} \label{eq:ug}
	U \leq \gamma,
\end{equation}
where $\gamma$ is a sensitivity parameter and $U$ is some quantification of unmeasured confounding, e.g., the difference in counterfactual regression functions \citep{diaz2013sensitivity} or the odds ratio of the probability of treatment \citep{rosenbaum2002sensitivity, tan2006distributional}.  The model in \eqref{eq:ug} implies bounds on the causal effect which can be estimated from observed data.  To understand the impact of unmeasured confounding, one can vary the sensitivity parameter, thereby allowing for different degrees of unmeasured confounding, and estimate bounds on and construct confidence intervals for the causal effect. Often, researchers determine the value of $\gamma$ where the confidence interval includes zero, because it indicates the level of unmeasured confounding where the causal effect estimate is non-significant. Researchers often appeal to a sense that if the resulting $\gamma$ value is large, causal analyses are robust to unmeasured confounding; conversely, if the $\gamma$ value is small, analyses are said to be sensitive to unmeasured confounding.

\bigskip

However, it can be difficult to gain intuition for the absolute size of the sensitivity parameter $\gamma$. Therefore, some have proposed \emph{calibrating} (or, \emph{benchmarking}) results by estimating measured confounding  (e.g., \citet{cinelli2020making, veitch2020sense, franks2020flexible}, among others). Typically, researchers leave out one or more variables from their data, allowing them to act as proxies for unmeasured confounders, and estimate a quantification of measured confounding analogous to the unmeasured confounding $U$ in their sensitivity model (e.g., if $U$ is the odds ratio of propensity scores, then so is measured confounding).  Customarily, one can then decide whether the causal effect estimate is robust to unmeasured confounding by comparing the sensitivity parameter to estimated measured confounding.  For example, one might compare the level of the sensitivity parameter where confidence intervals for the causal effect include zero to the estimated measured confounding. If the measured confounding were much smaller, this might be evidence that the causal effect estimate is robust to unmeasured confounding, because unmeasured confounders would need to have a larger impact on the causal effect than measured confounders to potentially reverse conclusions from the causal analysis.

\bigskip

This approach, which we refer to as ``post hoc calibration'', suffers from two drawbacks. First, researchers typically do not interpret their implied model. This can obscure the assumptions being made. By construction, a partial identification sensitivity model bounds the impact of an unmeasured confounder when added to the observed covariates. Post hoc calibration then implicitly links this unidentifiable quantity to the effect of adding an observed covariate --- or group of covariates --- to the other observed covariates.  As we show, this extrapolation implies some desiderata for the calibration procedure and highlights pitfalls that can arise in interpretation. 

\bigskip

Second, existing approaches rarely account for uncertainty in the estimate for measured confounding, though there are notable exceptions. \citet{cinelli2020making} provide uncertainty quantification that is limited to linear models. Meanwhile, other recent works have proposed the bootstrap or stacked estimating equation methods for uncertainty quantification \citep{hong2021did, sjolander2022sensitivity, chernozhukov2022long}, but only apply to specific semiparametric sensitivity models\footnote{We review semiparametric sensitivity models in Section~\ref{sec:setup}. They are an alternative to partial identification sensitivity models that we consider in this paper.} or rely on parametric estimators for measured confounding. Furthermore, prior work seems to have overlooked a crucial phenomenon: correctly accounting for this uncertainty can yield different conclusions for how robust the causal effect is to unmeasured confounding --- both more and less robust are possible, because the relative size of the sensitivity parameter depends on our uncertainty about measured confounding. 
 
\bigskip

To construct a model with an interpretable sensitivity parameter which avoids the deficiencies of post hoc calibration, we propose \emph{\textbf{calibrated sensitivity models}}. In simplified terms, calibrated sensitivity models use sensitivity models as a building block to impose a bound
\begin{equation} \label{eq:ugm}
	U \leq \Gamma M,
\end{equation}
where $U$ is unmeasured confounding, $M$ is \emph{measured confounding}, and $\Gamma$ is a sensitivity parameter. In other words, calibrated sensitivity models bound the degree of unmeasured confounding by a multiple of measured confounding. To our knowledge, explicitly defining bounds on unmeasured confounding within a calibrated sensitivity framework is new. Exploring the implications of defining bounds based on measured confounding is the primary contribution of this work.

\bigskip

Exploring the implications of defining bounds based on measured confounding is the primary contribution of this work.  To that end, our main contributions are as follows. First, we formally define several calibrated sensitivity models. These are based on popular sensitivity models defined at the level of the causal effect, the outcome regression, and the propensity score \citep{luedtke2015statistics,rosenbaum2002sensitivity}. Second, we demonstrate how to interpret these models. The interpretation depends on the observed covariates because the model bounds the additional impact of adding an unmeasured confounder to all observed covariates by (a multiple of) the additional impact of adding an observed covariate (or group of covariates) \emph{to all the other observed covariates}. We construct two simple data generating processes to illustrate possible pitfalls where these models could be misinterpreted. Additionally, this interpretation suggests simple heuristics by which to choose the covariate subsets for calibration and how to analyze results from a calibrated sensitivity analysis. 

\bigskip

Third, in a nonparametric framework, we establish theory and methods for estimating the bounds and the causal effect which account for uncertainty in estimating measured confounding. In Appendix~\ref{app:robustness}, we extend these methods to estimate and conduct inference on a one-number summary of study robustness --- the level of $\Gamma$ such that the bounds implied by the calibrated sensitivity model contain zero. Our estimators attain parametric efficiency and asymptotic normality under doubly-robust style nonparametric conditions on their nuisance function estimators. This allows us to use flexible estimators for nuisance functions like outcome regressions and the propensity score but still conduct valid inference within calibrated sensitivity analyses. We provide an illustrative data analysis, which demonstrates how to analyze data with a calibrated sensitivity model, and highlights the differences between our novel methods and standard approaches.  Our code is available at \url{https://github.com/alecmcclean/Calibrated-sensitivity-models}. Finally, we conclude the paper by discussing our proposed models in light of a popular distinction in the literature, between formal and informal benchmarking approaches.

\bigskip

While this paper focuses on nonparametric sensitivity analyses with doubly robust-style estimators, the core principles of calibrated sensitivity models generalize to other inferential paradigms. In particular, the idea of parameterizing sensitivity analyses directly in terms of measured confounding could be applied to Bayesian causal analysis \citep{zheng2024sensitivity}.

\subsection{Structure of the paper}

In Section~\ref{sec:setup} we provide notation and other setup, and review standard sensitivity analyses. In Section~\ref{sec:csm}, we introduce calibrated sensitivity models, discuss how to interpret them, and provide some guidelines for constructing them. We also introduce three examples to build intuition. In Section~\ref{sec:partial_id}, we identify bounds on the ATE under the three example models. In Section~\ref{sec:est}, we develop theory and methods for estimation and inference for bounds on the ATE. In Section~\ref{sec:illustrations}, we illustrate our methods with a real data analysis on the effect of mothers’ smoking on infant birth weight. In Section~\ref{sec:discussion}, we discuss our models in the context of formal and informal benchmarking.

\section{Setup and background} \label{sec:setup}

\subsection{Notation} \label{sec:notation}

We use $\bbE$ for expectation, $\bbV$ for variance, and $\bbP$ for probability. We use $\bbP_n \{ f(Z) \} = \frac1n \sum_{i=1}^{n} f (Z_i)$ as shorthand for the sample average of $f(X)$. When $x \in \bbR^d$ we let $\lVert x \rVert^2 = \sum_{j=1}^{d} x_j^2$, and for generic possibly random functions $f$ we let $\lVert f(Z) \rVert_p^p = \int_{\mathcal{Z}} f(z)^p d\bbP(z)$ denote the $L_p^p (\bbP)$ norm for $1 \leq p < \infty$ and let $\lVert f(Z) \rVert_{\infty} = \sup_{x \in \mathcal{X}} |f(x)|$. We use $\indist$ to denote convergence in distribution and $\inprob$ for convergence in probability. Finally, for a positive integer $d$ we use the notation $[d] = \{1, \dots, d \}$.

\subsection{Data, assumptions, and the average treatment effect} \label{sec:data}

We assume we observe $n$ observations $\{ Z_i \}_{i=1}^{n} \stackrel{iid}{\sim} \mathcal{P}$ where $Z$ is a tuple $(X, A, Y)$, $X \in \bbR^d$ are $d$-dimensional covariates, $A \in \{0, 1\}$ is a binary exposure, and $Y \in \bbR$ is an outcome. We use subscript notation to develop calibrated sensitivity models: $X_{-j}$ excludes the $j^{th}$ covariate, $X_{j}$ is the $j^{th}$ covariate, and, for any set $S \subseteq [d]$, $X_{-S}$ excludes the covariates corresponding to $S$. We let $Z_{-S} = \{X_{-S}, A, Y \}$ and assume $\{ Z_{-S} \}_{i=1}^{n} \stackrel{iid}{\sim} \mathcal{P}_{-S}$, i.e., the data with only covariates $X_{-S}$ is drawn from some distribution $\mathcal{P}_{-S}$.  Additionally, we denote unmeasured confounders by $W$ and define potential outcomes $Y^a$ as the outcome that would have been observed under exposure $A = a$.  Finally, we will refer to $\pi_a(X) = \bbP(A = a \mid X)$ as the ``propensity score'' and $\mu_a (X) = \bbE(Y \mid A = a, X)$ as the ``outcome regression function'', and generically we will refer to them as ``nuisance functions''.  Other nuisance functions will be defined when they appear. We make two standard causal assumptions.
\begin{assumption} \label{asmp:consistency}
	\emph{Consistency: } $A = a \implies Y = Y^a$.	
\end{assumption}
\begin{assumption} \label{asmp:positivity}
	\emph{Positivity: } $\exists\ \varepsilon > 0$ such that $\mathbb{P}\left\{ \varepsilon \leq \pi_1(X) \leq 1 - \varepsilon \right\} = 1$.	
\end{assumption}
Consistency says that we observe the potential outcome relevant to the observed exposure. It would be violated if, for example, there were interference between subjects such that one subject's treatment affected another's outcome. Positivity says that all subjects have a non-zero probability of exposure.  The literature addressing violations of each of these assumptions is too large to summarize here, but see, for example, \citet{tchetgen2012causal} and \citet{westreich2010invited} for discussion of violations of consistency and positivity, respectively.

\bigskip

While some of our methods generalize to other causal estimands, for simplicity we focus on the ATE, $\psi_\ast = \bbE(Y^1 - Y^0).$ If, in addition to consistency and positivity, no unmeasured confounding held --- i.e., if $(Y^0, Y^1) \ind A \mid X$ --- then the ATE could be identified by fully observed quantities. Specifically, the ATE can be identified by the adjusted mean difference: $\psi = \bbE \{ \mu_1 (X) - \mu_0 (X) \}.$ We define the adjusted mean difference with a general covariate set $X_{-S}$ as
\begin{equation} \label{eq:mde}
	\psi_{-S} = \bbE \{ \mu_1(X_{-S}) - \mu_0 (X_{-S}) \}.
\end{equation}
Without no unmeasured confounding or an alternative assumption (e.g., that an instrumental variable or regression discontinuity exists), one cannot identify $\psi_\ast$ such that $\psi_\ast = \psi$. To address this, there is a large literature on sensitivity analyses, which impose an assumption on the effect of unmeasured confounding.

\subsection{Sensitivity analyses} \label{sec:literature}

Modern sensitivity models could arguably be split into two types: models which admit point identification of the causal effect, and models which admit partial (or, set) identification of the causal effect. Generally, point identification models impose a specific form on the relationship between the unmeasured confounder and the observed data.  For example, \citet{nabi2024semiparametric} consider a model of the form $p(Y^a = y \mid A = 1-a, X = x) = p(Y^a = y \mid A = a, X = x) q_a(Y^a = y, X = x; \gamma)$, where $p$ is the density of $Y^a$ and $q_a(Y^a = y, X = x; \gamma)$ is a researcher-specified transformation which depends on the sensitivity parameter $\gamma$. There are many popular models in the literature (e.g., \citet{robins1999association, brumback2004sensitivity, nabi2024semiparametric}, among others). Under such an assumption, the ATE (for example) can be identified and estimated, and typically researchers examine how the ATE changes with different values of the sensitivity parameter $\gamma$.  

\bigskip

In this paper, we focus on partial identification sensitivity analysis, which generally comprise the following steps.  First, the researcher imposes a bound $U \leq \gamma$ for $\gamma \in (0, \infty)$ which implies bounds on the causal effect.\footnote{Often, researchers consider multi-dimensional constraints, which impose multiple bounds on multiple notions of measured confounding, with $U_1 \leq \gamma_1$ and $U_2 \leq \gamma_2$, etc. We focus on a one-dimensional model for simplicity.} We denote the lower and upper bounds on the ATE as $\ell: \mathcal{P} \times (0, \infty) \to \bbR$ and $u: \mathcal{P} \times (0, \infty) \to \bbR$, respectively. For various levels of $\gamma$, the researcher estimates the bounds and constructs confidence intervals for the bounds and, by extension, the causal effect. Finally, to interpret levels of $\gamma$, the researcher estimates a quantification of measured confounding $M$, and compares levels of $\gamma$ to $\widehat{M}$. Generally, $M$ is a measured quantity analogous to $U$, where one or more variables are left out at a time, capturing measured confounding implied by those variables.  As far as we are aware, these steps constitute a comprehensive sensitivity analysis in the current state of the literature, though the final calibration step, comparing $\gamma$ to $\widehat M$, is not always performed.  Researchers might also examine other statistics, such as the level of $\gamma$ where confidence intervals for the causal effect include zero, i.e., where a significant effect estimate is nullified.

\bigskip

Within partial identification sensitivity analyses, the key choices are how to parameterize the effect of unmeasured confounding, via $U$, and then how to choose its observed data equivalent, $M$. Here, we review choices of unmeasured confounding, summarizing a non-exhaustive snapshot of the literature.  Early partial identification models imposed bounds which follow directly from bounds on the data itself \citep{manski1990bounds, robins1989analysis}, e.g., if the outcome is bounded this implies a bound on the ATE.  More recent work has bounded the error due to unmeasured confounding at the level of the causal effect itself \citep{luedtke2015statistics}, the odds ratio of the propensity score \citep{rosenbaum2002sensitivity, tan2006distributional}, the direct change in the propensity score \citep{masten2018identification}, the change in or ratio of outcome regression functions \citep{diaz2013sensitivity, luedtke2015statistics}, the change in explained variance of the outcome regressions or the propensity score \citep{cinelli2020making,chernozhukov2022long, huang2025variance}, and the proportion of units confounded \citep{bonvini2022sensitivitya}.

\section{Calibrated sensitivity analyses: models and interpretation} \label{sec:csm}

This section introduces calibrated sensitivity analysis. First, we outline the generic steps of partial identification calibrated sensitivity analyses, highlighting where they differ from partial identification sensitivity analyses. After introducing a simple example, we discuss how to interpret calibrated sensitivity models. We then outline some practical guidelines for choosing covariates for calibration. We conclude with several additional examples of calibrated sensitivity models. These examples highlight how calibrated sensitivity models provide interpretable sensitivity parameters, clarify assumptions by explicitly incorporating measured confounding, and naturally accommodate statistical uncertainty---advantages we build on in the sections that follow.

\subsection{Calibrated sensitivity analyses} 

Calibrated sensitivity analyses comprise the following steps. First, the researcher imposes a bound $U \leq \Gamma M$ for $\Gamma \in (0, \infty)$ which implies bounds on the causal effect.  We denote lower and upper bounds on the ATE as $\mathcal{L}: \mathcal{P} \times (0, \infty) \to \bbR$ and $\mathcal{U}: \mathcal{P} \times (0, \infty) \to \bbR$. The bounds differ from a standard sensitivity analysis because $\mathcal{L}$ and $\mathcal{U}$ are different functionals from $\ell$ and $u$ for a standard sensitivity analysis --- $\mathcal{L}$ and $\mathcal{U}$ depend on measured confounding. For various levels of $\Gamma$, the researcher estimates the bounds and constructs confidence intervals for the bounds and, by extension, the causal effect. Unlike with standard sensitivity analyses, the method for constructing confidence intervals accounts for uncertainty in estimating measured confounding. These steps constitute a comprehensive calibrated sensitivity analysis. Unlike with standard sensitivity analyses, a final post hoc calibration step is not necessary, because $\Gamma$ is already calibrated by construction.  

\subsection{Example: maximum leave-one-out effect differences model}

To ground ideas, we begin with a simple subtractive calibrated sensitivity model, which we will revisit in the next two sections to clarify its interpretation and practical use. \color{black} It is the \emph{maximum leave-one-out effect differences} model (which we also refer to as the ``effect differences model'').  
\begin{csmodel} \emph{(Maximum leave-one-out effect differences)} \label{csm:max_loo_fx}
	\begin{equation}
		\left|\psi_\ast - \psi \right|\leq \Gamma \max_{j \in [d]} \left| \psi - \psi_{-j} \right|, \label{eq:max_loo_fx}
	\end{equation}
	where $\Gamma \in (0, \infty)$ is the sensitivity parameter, $\psi$ is the adjusted mean difference in \eqref{eq:mde} with $X$ covariates (i.e., all covariates) and $\psi_{-j}$ is the adjusted mean difference with $X_{-j}$ covariates (i.e., without covariate $X_j$). 
\end{csmodel}
The maximum LOO effect differences model bounds the causal bias by the maximum change in the adjusted mean difference from leaving out one covariate at a time \citep{luedtke2015statistics}.  While quantifying confounding at the level of the causal effect is less popular than other approaches, we include it because the resulting analysis is simple and therefore eases exposition of the main ideas in this paper. A drawback with this model is that it may be too coarse, and would hide information about the changes in the data generating process required to change the causal effect.

\subsection{Interpreting calibrated sensitivity models} \label{sec:interpretation}

The effect differences model illustrates the central, untestable assumption behind calibrated sensitivity analyses: \emph{that the effect of adding an unmeasured confounder to the observed covariates can be bounded by a multiple of the effect of adding a measured covariate to the remaining covariates.} Crucially, this interpretation is tied to the specific set of covariates observed, because the model bounds the additional effect of an unmeasured confounder when added to the measured covariates already included in the analysis. This dependence on the observed covariate set is not merely a technical detail but shapes the substantive meaning of the calibrated sensitivity analysis. Moreover, it dictates our approach to defining measured confounding, which captures the additional effect of a measured confounder (or set of confounders) when added to the other observed covariates, to best mimic the effect of an unmeasured confounder when added to all the observed covariates.

\medskip

This covariate-dependent interpretation of calibrated sensitivity models, while important, is not new. The principle that sensitivity analyses are interpreted in the context of the data observed has been emphasized previously in the literature. \citet{rosenbaum2002covariance, robins2002covariance, rosenbaum2002rejoinder} involve an illuminating discussion on this point. In particular, \citet[Section 7]{rosenbaum2002rejoinder} emphasizes that the meaning and strength of any sensitivity analysis is inextricably linked to the particular set of covariates controlled for in the analysis. Indeed, \citeauthor{rosenbaum2002rejoinder} pointed out this applies to other popular statistical quantities: ``The magnitude and interpretation of a regression coefficient depends upon which other variables are in the model," and ``cannot be understood without reference to the other parts of the model.'' The same argument applies to sensitivity parameters. This contextual dependence is not a limitation but rather an essential feature that ensures the sensitivity analysis addresses the specific inferential problem at hand, given the available data. In our case, the same argument applies to a calibrated sensitivity analysis.

\medskip

We further discuss this interpretation in the context of ``formal benchmarking'' in Section~\ref{sec:discussion} \citep{cinelli2020making}. There, we also provide two examples to illustrate the observed-covariate-dependent interpretation we take throughout the paper. Next, we use this interpretation to suggest some guidelines for constructing the models and interpretating the calibrated sensitivity parameter $\Gamma$.

\subsection{Guidelines for choosing covariates and interpreting the calibrated sensitivity parameter}

Here, we suggest some practical guidelines for choosing the measured confounding $M$ in the bound. First, the measured confounding $M$ should be defined analogously to the unmeasured confounding $U$. This provides a structured basis for reasoning about plausible values of $\Gamma$. For example, if $U = | \psi_\ast - \psi|$, then measured confounding should depend on $| \psi -\psi_{-S}|$ for different covariate sets $S$. When the analogous measure is difficult to estimate, researchers may consider a smooth approximation. For instance, when unmeasured confounding involves an $L_\infty$ norm, researchers might define measured confounding using a smooth approximation, since $L_\infty$ norms cannot be estimated at $\sqrt{n}$ rates under nonparametric assumptions \citep{lepski1999estimation}. This consideration applies to the odds ratio example we introduce in Section~\ref{sec:examples}. In Section~\ref{sec:est}, we return to this issue and show how projecting $M$ onto a finite-dimensional model can enable valid inference.

\smallskip

Second, the choice of which covariates to omit when constructing $M$ should follow a pre-specified protocol that guards against cherry-picking. Ideally, this decision would be informed by substantive knowledge about which covariates, when omitted, might mimic the effect of an unmeasured confounder. However, such prior knowledge is often unavailable. In these cases, we recommend defining $M$ using a maximum over a broad, pre-specified set of covariate subsets to capture a wide range of potential confounding patterns. The covariate subsets used to construct $M$, the rationale for their selection, and any sensitivity of $M$ to these choices should be reported. This procedural approach helps ensure that the resulting $M$ reflects a systematic data-driven exploration rather than selective reporting.

\smallskip

Third, it can be useful to inspect the estimated $M$ and its components, as these values inform the plausible range of $\Gamma$. If a single covariate produces a very large change in the estimand, it may be implausible that an unmeasured confounder could have a similar effect---particularly if the omitted covariate is already known to be a strong confounder. In that case, smaller values of $\Gamma$ may be more appropriate. Conversely, if all covariates appear weak, then larger $\Gamma$ values may be plausible. Either way, reporting the magnitude of $M$ and discussing which covariates drive it provides helpful context for interpreting the sensitivity analysis.

\subsection{Two more example calibrated sensitivity models} \label{sec:examples}

We now present two additional examples of calibrated sensitivity models. These models use different definitions of confounding---based on the propensity score and the outcome regression---to illustrate how the same calibration logic applies across popular sensitivity analyses. The second model is the \emph{maximum leave-one-out $L_\infty$ propensity score odds ratio} model (which we also refer to as the ``odds ratio model''). 
\begin{csmodel} \label{csm:max_loo_odds}
	\emph{(Maximum leave-one-out $L_\infty$ propensity score odds ratio)}
	\begin{equation}
		\sup_{x, w, \widetilde w} \left| \log \left[ \frac{\text{odds} \{ \pi_1(x, w) \}}{\text{odds} \{ \pi_1(x, \widetilde w) \}} \right] \right| \leq \Gamma \max_{j \in [d]} \sup_{x_{-j}, x_j, \widetilde x_j} \left| \log \left[ \frac{\text{odds} \{ \pi_1(x_{-j}, x_j) \}}{\text{odds} \{ \pi_1(x_{-j}, \widetilde x_j) \}} \right] \right|
	\end{equation}
	where $\Gamma \in (0, \infty)$ is the sensitivity parameter, $\pi_1$ is the propensity score, and $\text{odds}(p) = p/(1-p)$. 
\end{csmodel}
The maximum LOO odds ratio model uses the maximum leave-one-out error again and a popular quantification of unmeasured confounding --- the maximum propensity score odds ratio induced by the unmeasured confounder \citep{rosenbaum2002sensitivity, yadlowsky2022bounds}.  We state the model in terms of the log-odds ratio so that the calibrated sensitivity parameter $\Gamma$ takes the same range and has the same meaning as in the other two models. In the odds ratio model, the unmeasured and measured confounding are defined in terms of a supremum, which is a popular definition in the literature. 

\smallskip

The third model is the \emph{average leave-some-out $L_2$ outcome regression differences} model (which we also refer to as the ``outcome model''). 
\begin{csmodel} \label{csm:avg_lso_out}
	\emph{(Average leave-some-out $L_2$ outcome regression differences)} For $a \in \{0,1\}$,
	\begin{align}
		&\left\lVert \bbE(Y^a \mid A=a, X) - \bbE(Y^a \mid A = 1-a, X) \right\rVert_2^2 \nonumber \\
		&\hspace{1in} \leq  \frac{\Gamma}{|\mathcal{S}|} \sum_{S \in \mathcal{S}} \left\lVert \mu_a(X_{-S}) - \bbE \big\{ \mu_a(X) \mid A = 1-a, X_{-S} \big\} \right\rVert_2^2, \label{eq:def_out}
	\end{align}
	where $\Gamma \in (0, \infty)$ is the sensitivity parameter and $\mathcal{S}$ is a set of subsets of $[d]$.
\end{csmodel}
The average LSO outcome model is also based on a popular quantification of unmeasured confounding --- the difference between the potential outcome regressions under the observed treatment and the opposite treatment \citep{diaz2013sensitivity, luedtke2015statistics}.  We use this model to demonstrate alternative choices one might make when constructing a calibrated sensitivity model. First, unmeasured and measured confounding are in terms of $L_2^2$ norms rather than the usual $L_\infty$ norm. Second, measured confounding is the average LSO error rather than the maximum LOO error in the two previous models.

\smallskip

We conclude this section with a general assumption. It asserts that the overall quantification of measured confounding is bounded and non-zero. 
\begin{assumption} \label{asmp:bounded} \emph{Bounded and non-zero measured confounding:}
	\begin{enumerate}
		\item Under the effect differences model, $|\psi - \psi_{-j}| < \infty$ for all $j \in [d]$ and there exists $j \in [d]$ such that $|\psi - \psi_{-j}| > 0$.
		\item Under the odds ratio model, $\sup_{x_{-j}, x_j, \widetilde x_j} \left| \log \left[ \frac{\text{odds} \{ \pi_1(x_{-j}, x_j) \}}{\text{odds} \{ \pi_1(x_{-j}, \widetilde x_j) \}} \right] \right| < \infty$ for all $j \in [d]$ and there exists $j \in [d]$ such that $\sup_{x_{-j}, x_j, \widetilde x_j} \left| \log \left[ \frac{\text{odds} \{ \pi_1(x_{-j}, x_j) \}}{\text{odds} \{ \pi_1(x_{-j}, \widetilde x_j) \}} \right] \right| > 0$.
		\item Under the outcome model, for $a\in \{0,1\}$, $\lVert \mu_a(X_{-S}) - \bbE \{ \mu_a(X) \mid A = 1-a, X_{-S} \} \rVert_2^2 < \infty$ for all $S \in \mathcal{S}$ and there exists $S \in \mathcal{S}$ such that $\lVert \mu_a(X_{-S}) - \bbE \{ \mu_a(X) \mid A = 1-a, X_{-S} \} \rVert_2^2 > 0$.
	\end{enumerate}
\end{assumption}
This assumption is necessary so that the calibrated sensitivity analyses are meaningful. If measured confounding is infinite, then the bound on unmeasured confounding is also infinite, and the bounds on the ATE will be infinitely wide. Meanwhile, if measured confounding is zero, then the bound on unmeasured confounding is again zero, and the model assumes that no unmeasured confounding holds.

\section{Partial identification} \label{sec:partial_id}

In this section, we provide partial identification results for bounds on the ATE from the three examples in Section~\ref{sec:csm}, and then establish that the bounds are differentiable with respect to measured confounding, which we use to provide estimation convergence guarantees subsequently.

\begin{restatable}{proposition}{proppartialid} \label{prop:partial_id}
	\emph{\textbf{(Partial identification)}} Suppose Assumptions~\ref{asmp:consistency} and~\ref{asmp:positivity} hold. Under the maximum LOO effect differences model (calibrated sensitivity model~\ref{csm:max_loo_fx}),
	\begin{align}
		\mathcal{U}(\Gamma) &= \psi + \Gamma \max_{j \in [d]} \left| \psi - \psi_{-j} \right| \text{ and } \nonumber  \\
		\mathcal{L}(\Gamma) &= \psi - \Gamma \max_{j \in [d]} \left| \psi - \psi_{-j} \right| \label{eq:fx_id}
	\end{align}
	Under the maximum LOO odds ratio model (calibrated sensitivity model~\ref{csm:max_loo_odds}), 
	\begin{align}
		\mathcal{L}(\Gamma) &= \bbE \Big[ A Y + (1-A) \theta_1^- \{ X; \exp(\Gamma M) \} - \Big\{ (1-A)Y + A\theta_0^+ \{ X; \exp(\Gamma M) \} \Big\} \Big] \text{ and } \nonumber \\
		\mathcal{U}(\Gamma) &= \bbE \Big[ AY + (1-A) \theta_1^+ \{ X; \exp(\Gamma M) \} - \Big\{ (1-A) Y + A\theta_0^- \{ X; \exp(\Gamma M) \} \Big\} \Big] \label{eq:odds_id_upper}
	\end{align}
	where, e.g., 
	\begin{equation} \label{eq:theta}
		\theta_1^{-} (X; t) = \inf \Big\{ \bbE \{ Y L(Y) \mid A = 1, X \} : L \in \mathbb{L} \Big\}, 
	\end{equation}
	and 
	$\mathbb{L} = \left\{ L : \bbR \to \bbR \text{ measurable } : \begin{array}{l}
		0 \leq L(y) \leq t L(\tilde y) \text{ for all } y, \tilde y, \\
		\bbE \{ L(Y) \mid A = 1, X \} = 1
	\end{array} \right\}$, and 
	\begin{equation} \label{eq:M_odds}
		M = \max_{j \in [d]} \sup_{x_{-j}, x_j} \left| \log \left[ \frac{\text{odds} \{ \pi_1(x_{-j}, x_j) \}}{\text{odds} \{ \pi_1(x_{-j}, \widetilde x_j) \}} \right] \right|.
	\end{equation}
	Finally, under the average LSO outcome model (calibrated sensitivity model~\ref{csm:avg_lso_out}),
	\begin{align}
		\mathcal{U}(\Gamma) &= \psi + \Gamma \sum_{a \in \{0,1\}} \lVert \pi_{1-a}(X) \rVert_2 \sqrt{\frac{1}{|\mathcal{S}|} \sum_{S\in \mathcal{S}} \lVert \mu_a(X_{-S}) - \bbE \big\{ \mu_a(X) \mid A = 1-a, X_{-S} \} \rVert_2^2 } \text{ and } \nonumber \\ 
		\mathcal{L}(\Gamma) &= \psi - \Gamma \sum_{a \in \{0,1\}} \lVert \pi_{1-a}(X) \rVert_2 \sqrt{\frac{1}{|\mathcal{S}|} \sum_{S\in \mathcal{S}} \lVert \mu_a(X_{-S}) - \bbE \big\{ \mu_a(X) \mid A = 1-a, X_{-S} \} \rVert_2^2 } \label{eq:out_id}
	\end{align}
\end{restatable}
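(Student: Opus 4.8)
The plan is to treat all three models through one template: a calibrated sensitivity model constrains unmeasured confounding by $\Gamma M$, where $M$ is a functional of the observed-data law and hence point identified; moreover Assumption~\ref{asmp:positivity} propagates to every sub-vector, since $\bbE\{\pi_1(X) \mid X_{-S}\} \in [\varepsilon, 1-\varepsilon]$, so all conditional means appearing in $M$ and in the bounds are well-defined. The bounds on $\psi_\ast$ then coincide with those of the underlying \emph{standard} sensitivity model evaluated at sensitivity level $\Gamma M$ (or $e^{\Gamma M}$), which I read off model by model. I expect the odds ratio model to be the only nonroutine case.

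\emph{Effect differences model (calibrated sensitivity model~\ref{csm:max_loo_fx}).} Here the claim is immediate. The adjusted mean differences $\psi$ and $\psi_{-j}$ in \eqref{eq:mde} are identified, hence so is $M = \max_{j \in [d]} |\psi - \psi_{-j}|$, and the model inequality $|\psi_\ast - \psi| \le \Gamma M$ rearranges to the bounds in \eqref{eq:fx_id}. The interval is sharp because, absent any further restriction, $\psi_\ast - \psi$ may equal any value in $[-\Gamma M, \Gamma M]$.

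\emph{Odds ratio model (calibrated sensitivity model~\ref{csm:max_loo_odds}).} Let $M$ be as in \eqref{eq:M_odds}, a functional of $\pi_1$, and set $\Lambda = \exp(\Gamma M)$. Introduce a latent $W$ with $(Y^0, Y^1) \ind A \mid X, W$; the model reads $\sup_{x,w,\widetilde w} \bigl| \log[\text{odds}\{\pi_1(x,w)\} / \text{odds}\{\pi_1(x,\widetilde w)\}] \bigr| \le \log \Lambda$. First I would pass from this latent bound to the marginal form: because $\bbP(A = 1 \mid X = x, Y^a = y) = \bbE\{\pi_1(x, W) \mid X = x, Y^a = y\}$ is a mixture of the values $\pi_1(x, w)$, and $\pi_1(x)$ is in turn a mixture of the values $\bbP(A = 1 \mid X = x, Y^a = \cdot)$, monotonicity of $p \mapsto p/(1-p)$ gives $\text{odds}\{\pi_1(x)\} / \text{odds}\{\bbP(A = 1 \mid X = x, Y^a = y)\} \in [\Lambda^{-1}, \Lambda]$. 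Next, two applications of Bayes' rule show $d\bbP(Y^1 = y \mid A = 0, X = x) = \rho(x, y)\, d\bbP(Y^1 = y \mid A = 1, X = x)$ with $\rho(x, y) = \text{odds}\{\pi_1(x)\} / \text{odds}\{\bbP(A = 1 \mid X = x, Y^1 = y)\}$; thus $\rho(x, \cdot)$ is nonnegative, integrates to one against $\bbP(Y^1 \mid A = 1, X = x)$, and obeys $\rho(x, y) \le \Lambda\, \rho(x, \widetilde y)$, i.e.\ $\rho(x, \cdot) \in \mathbb{L}$ with $t = \Lambda$. Using consistency to replace $Y^1$ by $Y$ on $\{A = 1\}$, $\bbE(Y^1 \mid A = 0, X) = \bbE\{Y \rho(X, Y) \mid A = 1, X\} \in [\theta_1^-\{X; \Lambda\}, \theta_1^+\{X; \Lambda\}]$, and symmetrically $\bbE(Y^0 \mid A = 1, X) \in [\theta_0^-\{X; \Lambda\}, \theta_0^+\{X; \Lambda\}]$. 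Since $\bbE(Y^1) = \bbE\{A Y + (1-A)\, \bbE(Y^1 \mid A = 0, X)\}$ and $\bbE(Y^0) = \bbE\{(1-A) Y + A\, \bbE(Y^0 \mid A = 1, X)\}$, taking $\mathcal{L}(\Gamma) = \inf \bbE(Y^1) - \sup \bbE(Y^0)$ and $\mathcal{U}(\Gamma) = \sup \bbE(Y^1) - \inf \bbE(Y^0)$ gives \eqref{eq:odds_id_upper}. Sharpness holds because the constraints on the two unidentified conditionals $\bbP(Y^1 \mid A = 0, X)$ and $\bbP(Y^0 \mid A = 1, X)$ decouple across $x$ and across the two potential outcomes, so the extremal likelihood ratios realizing $\theta_a^{\pm}$ are jointly attained by a law compatible with the observed data and with some latent $W$ meeting the odds-ratio bound (cf.\ \citet{rosenbaum2002sensitivity, tan2006distributional}).

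\emph{Outcome model (calibrated sensitivity model~\ref{csm:avg_lso_out}).} By consistency $\bbE(Y^a \mid A = a, X) = \mu_a(X)$, so iterated expectations give $\psi_\ast - \psi = \bbE[\pi_0(X)\{\bbE(Y^1 \mid A = 0, X) - \mu_1(X)\}] - \bbE[\pi_1(X)\{\bbE(Y^0 \mid A = 1, X) - \mu_0(X)\}]$. The triangle inequality followed by Cauchy--Schwarz on each term bounds $|\psi_\ast - \psi|$ by $\sum_{a \in \{0,1\}} \|\pi_{1-a}(X)\|_2\, \|\bbE(Y^a \mid A = a, X) - \bbE(Y^a \mid A = 1-a, X)\|_2$; substituting the model's bound on each of the two $L_2^2$ norms --- the identified quantity $|\mathcal{S}|^{-1} \sum_{S \in \mathcal{S}} \|\mu_a(X_{-S}) - \bbE\{\mu_a(X) \mid A = 1-a, X_{-S}\}\|_2^2$ --- yields \eqref{eq:out_id}, with sharpness following by taking each within-treatment shift $\bbE(Y^a \mid A = 1-a, X) - \mu_a(X)$ proportional to $\pi_{1-a}(X)$, with a sign chosen to align the two terms and a scaling that saturates the norm constraint, which attains equality in every inequality used. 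The hard part is the odds ratio model: justifying the passage from the latent-$W$ odds-ratio bound to the marginal sensitivity model (the mixture/monotonicity step and the two Bayes-rule inversions) and arguing joint sharpness of the resulting ATE bounds; the other two models are a rearrangement and a Cauchy--Schwarz argument, respectively.
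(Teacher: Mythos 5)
Your proposal is correct and, for the effect-differences and outcome models, follows essentially the same route as the paper: the first is a direct rearrangement of the model inequality, and the second is the iterated-expectations decomposition $\psi_\ast - \psi = \sum_{a}(2a-1)\,\bbE\big[\{\bbE(Y^a \mid A=1-a, X)-\mu_a(X)\}\pi_{1-a}(X)\big]$ followed by H\"{o}lder/Cauchy--Schwarz with $p=r=2$ and imposition of the model, which is exactly the paper's argument. The genuine difference is in the odds ratio model: the paper simply invokes Lemma 2.1 of Yadlowsky et al.\ (2022), whereas you re-derive it --- passing from the latent odds-ratio bound to its marginal form via the mixture/monotonicity argument, applying Bayes' rule twice to write the unidentified conditional law as a likelihood-ratio tilt $\rho(x,\cdot)\in\mathbb{L}$ of the observed one, and optimizing over $\mathbb{L}$ to land on $\theta_a^{\pm}$. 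That derivation is the standard proof of the cited lemma and is sound; it buys self-containedness at the cost of length, but it is the same mathematics the citation encapsulates. Two minor remarks. First, your sharpness claims for all three models go beyond what the proposition asserts or the paper proves; the odds-ratio sharpness in particular (joint attainment of the extremal tilts by a law compatible with some latent $W$) is a nontrivial statement you only gesture at --- harmless here, since only validity of the bounds is needed, but it should not be presented as established. Second, both you and the paper elide a factor in the outcome model: the model bounds the squared $L_2$ norm by $\Gamma$ times an average of squared norms, so taking square roots yields $\sqrt{\Gamma}$, not $\Gamma$, in front of the product in \eqref{eq:out_id}; this amounts to a reparameterization of the sensitivity parameter rather than a substantive error, but it is worth flagging when you write ``substituting the model's bound \ldots yields \eqref{eq:out_id}.''
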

Proposition~\ref{prop:partial_id} shows that all three calibrated sensitivity models induce bounds on the ATE which follow naturally from the bounds induced by the relevant sensitivity model, but including measured confounding in the bounds. Expression \eqref{eq:fx_id} follows directly from the definition of the model; \eqref{eq:odds_id_upper} follows from the definition of the model and Lemma 2.1 in \citet{yadlowsky2022bounds}; and \eqref{eq:out_id} follows by H\"{o}lder's inequality.  A formal proof, and all subsequent proofs, are delayed to the supplementary materials. For partial identification in the odds ratio model, only $\theta_1^-$ is defined, in \eqref{eq:theta}. Taking a supremum instead of an infimum gives $\theta_1^+$, and swapping the conditioning to $A = 0$ from $A=1$ gives $\theta_0^-$ and $\theta_0^+$.

\subsection{Differentiable bounds}

When estimating the bounds, we will rely on them being differentiable with respect to measured confounding to allow use of Taylor's theorem and the delta method when providing convergence guarantees. Under Assumption~\ref{asmp:bounded}, this is trivially true for the effect differences and outcome regression models because measured confounding appears linearly in the bounds, but proving this is more involved for the odds ratio model.  The derivative for the upper bound is established in the next result, while the derivative for the lower bound follows by a similar analysis.  In addition to Assumption~\ref{asmp:bounded}, the result relies on the positivity assumption (Assumption~\ref{asmp:positivity}) and a regularity assumption (Assumption~\ref{asmp:continuity}), detailed in the supplementary materials, which asserts that the covariate density, propensity score, and $\theta_a^\pm$ from \eqref{eq:theta} are continuous in $x$, while the outcome $Y$ is bounded and has continuous and upper bounded conditional density.
\begin{restatable}{lemma}{lemshowdiffmonotone} \label{lem:show_diff_monotone} 
	\emph{\textbf{(Differentiable and monotone bounds in $M$ for the odds ratio model)}} \\
	Let the upper bound $\mathcal{U}(\Gamma)$ be defined as in \eqref{eq:odds_id_upper} and $M$ denote measured confounding as in \eqref{eq:M_odds}. Suppose Assumptions~\ref{asmp:positivity}, \ref{asmp:bounded}, and \ref{asmp:continuity} hold. Then,
	\begin{equation} \label{eq:odds_deriv}
		\frac{\partial}{\partial M} \mathcal{U}(\Gamma) =  \Gamma \bbE \left\{ \pi_0(X) \frac{\widetilde{f}_1 ( X; \theta_1^+ )}{\nu_1^+ (X)} \right\} + \Gamma \exp(\Gamma M) \bbE \left\{ \pi_1(X) \frac{f_0 ( X; \theta_0^- )}{\nu_0^-(X)} \right\},
	\end{equation}
	where $f_a, \widetilde f_a$, and $\nu_a^\pm$ are defined in Lemma~\ref{lem:diff} in the supplementary materials. Moreover,
	\begin{equation*}
		\exists\ C > 0 \text{ such that } -C < \frac{\partial}{\partial M} \mathcal{L}(\Gamma) < 0 < \frac{\partial}{\partial M} \mathcal{U}(\Gamma) < C.
	\end{equation*} 
\end{restatable}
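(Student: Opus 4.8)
The plan is to reduce the statement to a pointwise analysis of the maps $t\mapsto\theta_a^\pm(x;t)$ and then integrate over $x$. The integrand defining $\mathcal{U}(\Gamma)$ in \eqref{eq:odds_id_upper} depends on $M$ only through $\theta_1^+\{X;\exp(\Gamma M)\}$ and $\theta_0^-\{X;\exp(\Gamma M)\}$, so the first task is to show that, for fixed $x$, the maps $t\mapsto\theta_1^+(x;t)$ and $t\mapsto\theta_0^-(x;t)$ are differentiable on $(1,\infty)$ and to compute their $t$-derivatives in closed form. For this I would invoke Lemma~2.1 of \citet{yadlowsky2022bounds}, which shows the optimizer $L$ in \eqref{eq:theta} is a two-level (thresholded) likelihood ratio --- equal to its maximal value above a threshold and its minimal value below it --- with the threshold pinned down by the normalization $\bbE\{L(Y)\mid A=a,X\}=1$ together with a first-order condition. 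Under Assumption~\ref{asmp:continuity} (in particular, $Y$ bounded with a continuous, upper-bounded conditional density given $(A,X)$, so the threshold is unique and the density is positive there), the implicit function theorem makes the threshold a $C^1$ function of $t$, and, since the threshold is chosen optimally, an envelope argument reduces $\partial_t\theta_a^\pm$ to the explicit $t$-derivative of the objective with the threshold held fixed. Carrying this out produces the closed forms recorded in Lemma~\ref{lem:diff} in terms of $f_a$, $\widetilde f_a$, and $\nu_a^\pm$; the one point to flag is that the supremum problem ($\theta_1^+$) carries a $1/t$ factor that the infimum problem ($\theta_0^-$) does not, which is exactly what produces the asymmetric appearance of $\exp(\Gamma M)$ in \eqref{eq:odds_deriv}.

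With $t=\exp(\Gamma M)$, hence $\mathrm{d}t/\mathrm{d}M=\Gamma\exp(\Gamma M)$, the chain rule gives $\partial_M\theta_a^\pm\{X;\exp(\Gamma M)\}$ pointwise. To differentiate $\mathcal{U}(\Gamma)=\bbE[\cdot]$ under the expectation I would dominate the difference quotients on a neighborhood of $M$: boundedness of $Y$ (hence of the thresholds and of $\theta_a^\pm$), positivity, and the density bound in Assumption~\ref{asmp:continuity} supply an integrable envelope, so dominated convergence licenses interchanging $\partial/\partial M$ with $\bbE$. Then $\bbE\{(1-A)g(X)\}=\bbE\{\pi_0(X)g(X)\}$ and $\bbE\{A\,g(X)\}=\bbE\{\pi_1(X)g(X)\}$ (valid since $\theta_a^\pm(X;t)$ and its $t$-derivative depend on $X$ alone), together with the cancellation of $\exp(\Gamma M)$ against the $1/t$ in $\partial_t\theta_1^+$, yield \eqref{eq:odds_deriv}. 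The expression for $\partial\mathcal{L}(\Gamma)/\partial M$ follows from the identical argument applied to the lower bound in \eqref{eq:odds_id_upper}, with $\theta_1^-$ and $\theta_0^+$ replacing $\theta_1^+$ and $\theta_0^-$.

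For the sign, each summand in \eqref{eq:odds_deriv} factors as (i) $\Gamma$ or $\Gamma\exp(\Gamma M)$, strictly positive; (ii) $\pi_0(X)$ or $\pi_1(X)$, at least $\varepsilon>0$ almost surely by Assumption~\ref{asmp:positivity}; and (iii) the density-ratio terms $\widetilde f_1(X;\theta_1^+)/\nu_1^+(X)$ and $f_0(X;\theta_0^-)/\nu_0^-(X)$, which are nonnegative and, because Assumption~\ref{asmp:bounded} forces $M>0$ so that $t=\exp(\Gamma M)>1$ and the optimal $L$ is non-degenerate, are strictly positive on a set of positive probability. Hence $\partial\mathcal{U}(\Gamma)/\partial M>0$, and the mirror-image argument gives $\partial\mathcal{L}(\Gamma)/\partial M<0$. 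For the finite constant $C$: boundedness of $Y$ confines $\theta_a^\pm$ and the thresholds to a compact set; continuity and the upper bound on the conditional density of $Y$ bound $f_a$, $\widetilde f_a$ above and keep $\nu_a^\pm$ bounded away from $0$; $\pi_a(X)\le1$; and $\exp(\Gamma M)<\infty$ by Assumption~\ref{asmp:bounded}. Combining gives a finite $C$ --- depending only on $\Gamma$, $\varepsilon$, the bound on $Y$, the density bound, and $M$ --- with both $|\partial\mathcal{U}(\Gamma)/\partial M|$ and $|\partial\mathcal{L}(\Gamma)/\partial M|$ strictly below $C$, which is the displayed inequality; strict monotonicity of $M\mapsto\mathcal{U}(\Gamma)$ (increasing) and $M\mapsto\mathcal{L}(\Gamma)$ (decreasing) is then immediate from the constant sign of the derivative.

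The main obstacle is the first step: differentiating $\theta_a^\pm(x;t)$ in $t$ while controlling the implicitly-defined threshold. This is where essentially all of Assumption~\ref{asmp:continuity} is used --- without a unique threshold at which the conditional density of $Y$ is positive, the implicit-function/envelope argument breaks down, and without an upper bound on that density one cannot control $C$. The remaining steps --- the chain rule, the dominated-convergence interchange, and reading off signs and magnitudes --- are routine once the pointwise derivative is in hand.
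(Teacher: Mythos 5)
Your proposal is correct, and it reaches the formula \eqref{eq:odds_deriv} by a genuinely different route for the one step that carries all the technical weight: the pointwise derivative $\partial_t\theta_a^{\pm}(x;t)$. The paper works from the moment condition of \citet{yadlowsky2022bounds} Lemma~2.3 --- it solves $\bbE\{\omega_\theta(Y;t)\mid A=1,X\}=0$ explicitly for $t$ as a function of $\theta$, differentiates that expression via Leibniz's rule (Lemma~\ref{lem:deriv_technical}), and applies the inverse function theorem (Lemma~\ref{lem:odds_deriv}); this produces the derivative in the form $-f_1^2/g_1$, and a separate simplification step (Lemma~\ref{lem:odds_simplify}, the identity $\nu_1^- f_1 = g_1$) is then needed to reduce it to the $-f_1/\nu_1^-$ form appearing in \eqref{eq:odds_deriv}. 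You instead parameterize the two-level optimal likelihood ratio by its threshold, observe that the value of the $L$-optimization is the optimum over thresholds (with the fixed point $\theta'=h(\theta')$ as the optimizer), and apply an envelope argument so that only the explicit $t$-derivative of the objective at the fixed threshold survives. This lands directly on the $f/\nu$ form, bypassing the $g$-identity entirely, and it makes the cancellation of $\exp(\Gamma M)$ against the $1/t$ in $\partial_t\theta_1^+$ --- the source of the asymmetry between the two summands of \eqref{eq:odds_deriv} --- transparent rather than emerging from algebra. The cost is that the envelope step needs its own justification (differentiability of the value function when the constraint set depends on $t$, uniqueness of the optimizing threshold), which is exactly where Assumption~\ref{asmp:continuity} enters in your version just as it does in the paper's inverse-function-theorem version. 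The remaining steps --- chain rule through $t=\exp(\Gamma M)$, dominated convergence to pass $\partial/\partial M$ inside the expectation, iterated expectations to convert indicators of $A$ into propensity scores, and the sign and boundedness arguments from positivity of $f_a,\widetilde f_a,\nu_a^{\pm}$ and boundedness of $Y$ and $\exp(\Gamma M)$ --- coincide with the paper's. One small point worth tightening: strict positivity of $f_a$ and $\widetilde f_a$ is cleanest via the moment condition itself ($\widetilde f_a = t f_a$, so neither can vanish unless both do, which the non-degeneracy of the conditional density rules out), rather than an appeal to non-degeneracy of the optimal $L$.
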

Lemma~\ref{lem:show_diff_monotone} shows that the bounds implied by the odds ratio model widen as measured confounding increases. Unlike the effect differences and outcome models, the bounds in the odds ratio model widen in a way that's non-linear with $M$.

\section{Estimation and inference} \label{sec:est}

In this section, we establish convergence guarantees for estimators of measured confounding and bounds on the ATE with two of the three example models.  In the supplementary materials, we establish estimation with the outcome model and demonstrate when sensitivity analyses with post hoc calibration might over- or under-estimate robustness to unmeasured confounding by not accounting for uncertainty in estimating measured confounding.   

\bigskip

Before stating the results, we make several general observations. First, many of the convergence guarantees in this paper are doubly robust, showing $\sqrt{n}$-estimation can be possible even when nuisance functions are estimated at slower rates. This occurs because the estimators are based on the efficient influence function (EIF), which allows their bias to be a second-order product of errors, so they can achieve $\sqrt{n}$-efficiency even when the nuisance functions are estimated at slower-than-$\sqrt{n}$ rates \citep{chernozhukov2018double, kennedy2022semiparametric}.  These slower rates could be achieved under nonparametric structural assumptions on the nuisance functions, such as smoothness, sparsity, or bounded variation \citep{gyorfi2002distribution}.

\bigskip

Second, the results require stronger convergence guarantees on the nuisance function estimators than what is required for $\sqrt{n}$-consistent estimation of the bounds on the ATE in a standard sensitivity analysis. This occurs because it is necessary to construct $\sqrt{n}$-consistent estimators for measured confounding, which require accurate nuisance function estimators with multiple sets of covariates. By contrast, a standard sensitivity analysis only requires accurate nuisance function estimators with all covariates. However, it is worth noting that the requirements here are no stronger than what would be required to construct $\sqrt{n}$-consistent estimators of measured confounding in a typical post hoc calibration analysis.

\bigskip

Third, throughout, we assume the nuisance functions estimators are constructed on a sample of $n$ observations which is separate and independent from the sample used to estimate measured confounding and the bounds on the ATE.  Sample splitting and cross-fitting allows us to avoid imposing Donsker or stability conditions on the nuisance function estimators (e.g., \citet{van1996weak, chen2022debiased}). Although this seemingly cuts the sample size in half, one could retain full sample efficiency by swapping folds, repeating the estimator, and averaging the two estimates. For better stability, one could use more than two splits; five and ten are common. To simplify notation, we focus on the single split estimator. 

%\bigskip

%Finally, the causal consistency assumption (Assumption~\ref{asmp:consistency}) is not required for these results until Section~\ref{sec:inference}, when we construct confidence intervals for the ATE. This is because Theorems~\ref{thm:fx_conv} \& \ref{thm:odds_conv} focus on estimating the upper bound $\mathcal{U}(\Gamma)$, which is a non-causal functional of the observed data.  By contrast, the positivity assumption (Assumption~\ref{asmp:positivity}) is required as a regularity assumption so that $\mu_a(X)$ is well-defined for $a \in \{0,1\}$.  

\subsection{Effect differences}

First, we consider estimating the bound on the ATE in the effect differences model (calibrated sensitivity model~\ref{csm:max_loo_fx}). 
\begin{definition} \label{def:effect}
	Let $\mathcal{U}(\Gamma)$ be as in \eqref{eq:fx_id}, and construct an estimator for the upper bound as
	\begin{equation} \label{eq:fx_estimator}
		\widehat{\mathcal{U}}(\Gamma) := \bbP_n \{ \widehat \phi(Z) \} + \Gamma \max_{j \in [d]} \left| \bbP_n \{ \widehat \phi(Z) - \widehat \phi(Z_{-j})\} \right|,
	\end{equation}
	where $\phi(Z_{-j})$ is the EIF of the adjusted mean difference $\psi_{-j}$, i.e., 
	\begin{equation} \label{eq:mde_if}
		\phi(Z_{-j}) = \mu_1(X_{-j}) - \mu_0(X_{-j}) + \left\{ \frac{A}{\pi_1(X_{-j})} - \frac{1-A}{\pi_0(X_{-j})} \right\} \{ Y - \mu_A(X_{-j}) \},
	\end{equation}
	and the estimated nuisance functions constituting $\widehat \phi$ are constructed on a separate sample ($\{ \widehat \pi_a, \widehat \mu_a \}$ for $a \in \{0,1\}$ and $X_{-j}$ for $j \in [d] \cup \emptyset$).
\end{definition}

The upper bound is identified as $\mathcal{U}(\Gamma) = \psi + \Gamma \max_{j \in [d]} |\psi - \psi_{-j}|$. Therefore, the estimator $\widehat{\mathcal{U}}(\Gamma)$ plugs in estimators for each adjusted mean difference and takes the maximum difference $\left|\widehat \psi - \widehat \psi_{-j} \right|$ across $j \in [d]$. The estimator uses the EIF of the adjusted mean difference, which is well-studied in the doubly robust estimation literature \citep{robins1994estimation}. As a result, the convergence guarantees will be doubly robust, as communicated in Theorem~\ref{thm:fx_conv} below. To facilitate straightforward $\sqrt{n}$-inference when measured confounding is a maximum, we invoke a separation condition.
\begin{assumption} \label{asmp:separation_max_fx}
	\emph{Separation of maximum in the effect differences model:} There exists $j^\prime \in [d]$ such that $\left| \psi - \psi_{-j^\prime}\right| > \left| \psi - \psi_{-j}\right| \ \forall \ j \in [d] \setminus j^\prime$. 
\end{assumption}
While this assumption is relatively mild, it could be relaxed by using a smooth approximation of the maximum, such as the LogSumExp function. Future work may also develop statistical methods that accommodate multiple maxima, drawing on theoretical results from, e.g., \citet{luedtke2016statistical}. With Assumption~\ref{asmp:separation_max_fx} ensuring separation, the next result establishes convergence guarantees for estimating the upper bound on the ATE within the effect differences model.

\begin{restatable}{theorem}{thmfxconv} \label{thm:fx_conv} \textbf{\emph{(Maximum LOO effect differences model)}} 
	Let $\widehat{\mathcal{U}}(\Gamma)$ and $\mathcal{U}(\Gamma)$ be as in Definition~\ref{def:effect}, and let $j^\prime = \argmax_{j \in [d]} \left|  \psi - \psi_{-j} \right|.$ Suppose Assumptions~\ref{asmp:positivity}-\ref{asmp:separation_max_fx} hold and 
	\begin{enumerate}
		\item for $j \in [d] \cup \emptyset$, $\widehat \pi_1(X_{-j})$ is bounded away from zero and one  \label{cond:bounded}
		\item for $j \in [d] \cup \emptyset$, $\widehat \phi$ is consistent in the sense that $\lVert \widehat \phi(Z_{-j}) - \phi(Z_{-j}) \rVert_2 \inprob 0$, and \label{cond:empirical_process}
		\item for $j \in \{ j^\prime, \emptyset\}$, $\lVert \widehat \pi_1(X_{-j}) - \pi_1(X_{-j}) \rVert_2 \left( \sum_{a \in \{0,1\}} \lVert \widehat \mu_a(X_{-j}) - \mu_a(X_{-j}) \rVert_2 \right) = o_{\bbP}(n^{-1/2})$. \label{cond:bias}
	\end{enumerate}
	Then,
	\begin{equation} \label{eq:fx_upper_lim}
		\widehat{\mathcal{U}}(\Gamma) - \mathcal{U}(\Gamma) = (\bbP_n - \bbE) \Big[ \phi(Z) +  \text{sign} (\psi- \psi_{-j^\prime}) \Gamma \big\{ \phi(Z) - \phi(Z_{-j^\prime}) \big\} \Big] + o_\bbP(n^{-1/2}).
	\end{equation} 
\end{restatable}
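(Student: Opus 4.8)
The plan is to split $\widehat{\mathcal{U}}(\Gamma) - \mathcal{U}(\Gamma)$ into a ``mean part'' $\widehat\psi - \psi$, where $\widehat\psi := \bbP_n\{\widehat\phi(Z)\}$, and a ``penalty part'' $\Gamma\big(\max_{j\in[d]}|\widehat\psi - \widehat\psi_{-j}| - \max_{j\in[d]}|\psi-\psi_{-j}|\big)$, where $\widehat\psi_{-j} := \bbP_n\{\widehat\phi(Z_{-j})\}$, using the identified form $\mathcal{U}(\Gamma)=\psi+\Gamma\max_{j}|\psi-\psi_{-j}|$ from \eqref{eq:fx_id}. For each $j \in [d]\cup\emptyset$, the sample-splitting von Mises expansion for the adjusted mean difference gives
\begin{equation*}
	\widehat\psi_{-j} - \psi_{-j} = (\bbP_n - \bbE)\{\phi(Z_{-j})\} + (\bbP_n - \bbE)\{\widehat\phi(Z_{-j}) - \phi(Z_{-j})\} + \big(\bbE[\widehat\phi(Z_{-j})] - \psi_{-j}\big),
\end{equation*}
where in the last two terms the expectation is over the estimation sample with the nuisance estimates held fixed. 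Condition~\ref{cond:empirical_process} together with sample splitting makes the second term $o_\bbP(n^{-1/2})$ (its conditional variance is $n^{-1}\lVert\widehat\phi(Z_{-j}) - \phi(Z_{-j})\rVert_2^2$), and the familiar doubly robust calculation bounds the third term by $\lesssim \lVert\widehat\pi_1(X_{-j}) - \pi_1(X_{-j})\rVert_2\sum_{a}\lVert\widehat\mu_a(X_{-j}) - \mu_a(X_{-j})\rVert_2$, using the boundedness in Condition~\ref{cond:bounded}; by Condition~\ref{cond:bias} this is $o_\bbP(n^{-1/2})$ for $j \in \{j^\prime, \emptyset\}$, and it is $o_\bbP(1)$ for every $j$ since $L_2$-consistency of $\widehat\phi$ forces $L_2$-consistency of its constituent nuisance estimators (cf.\ \citet{kennedy2022semiparametric}). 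Hence $\widehat\psi_{-j} \inprob \psi_{-j}$ for all $j$, and more precisely $\widehat\psi - \psi = (\bbP_n - \bbE)\{\phi(Z)\} + o_\bbP(n^{-1/2})$ and $(\widehat\psi - \widehat\psi_{-j^\prime}) - (\psi - \psi_{-j^\prime}) = (\bbP_n-\bbE)\{\phi(Z) - \phi(Z_{-j^\prime})\} + o_\bbP(n^{-1/2})$.

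Next I would reduce the penalty part to the single index $j^\prime$. By Assumption~\ref{asmp:separation_max_fx} the map $j\mapsto|\psi-\psi_{-j}|$ is uniquely maximized at $j^\prime$; combined with $\widehat\psi - \widehat\psi_{-j}\inprob \psi - \psi_{-j}$ for every $j$, a standard argmax argument gives $\bbP\big(\argmax_{j}|\widehat\psi - \widehat\psi_{-j}| = j^\prime\big) \to 1$, so that $\max_j|\widehat\psi - \widehat\psi_{-j}| = |\widehat\psi - \widehat\psi_{-j^\prime}|$ on an event of probability tending to one. Moreover Assumption~\ref{asmp:bounded} and the maximality of $j^\prime$ give $|\psi - \psi_{-j^\prime}| > 0$, i.e.\ $\psi - \psi_{-j^\prime} \neq 0$, so by consistency $\text{sign}(\widehat\psi - \widehat\psi_{-j^\prime}) = \text{sign}(\psi - \psi_{-j^\prime})$ with probability tending to one; on the intersection of these events,
\begin{equation*}
	|\widehat\psi - \widehat\psi_{-j^\prime}| - |\psi - \psi_{-j^\prime}| = \text{sign}(\psi - \psi_{-j^\prime})\big\{(\widehat\psi - \widehat\psi_{-j^\prime}) - (\psi - \psi_{-j^\prime})\big\}.
\end{equation*}

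Finally I would assemble the pieces: on the event above,
\begin{equation*}
	\widehat{\mathcal{U}}(\Gamma) - \mathcal{U}(\Gamma) = (\widehat\psi - \psi) + \Gamma\,\text{sign}(\psi - \psi_{-j^\prime})\big\{(\widehat\psi - \widehat\psi_{-j^\prime}) - (\psi - \psi_{-j^\prime})\big\},
\end{equation*}
and substituting the two expansions from the first paragraph and collecting the empirical-process terms yields the right-hand side of \eqref{eq:fx_upper_lim} up to $o_\bbP(n^{-1/2})$; since the identity holds on an event of probability tending to one, it holds with the stated error unconditionally.

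The main obstacle is the nonsmoothness of the map $x\mapsto\max_j|x_j|$ at the point being estimated, which makes the plug-in expansion delicate in general. The separation Assumption~\ref{asmp:separation_max_fx}, together with the non-degeneracy in Assumption~\ref{asmp:bounded}, is exactly what resolves this: it localizes the estimate (with probability $\to 1$) to a neighbourhood on which $\max_j|\cdot|$ coincides with the affine map $x\mapsto\text{sign}(\psi-\psi_{-j^\prime})\,x_{j^\prime}$, where a delta-method-type argument is routine. A secondary point requiring care is the bookkeeping of which indices need the $\sqrt{n}$-rate second-order control of Condition~\ref{cond:bias} --- only $j^\prime$ and the full covariate set, which enter the leading term --- versus mere consistency, which suffices for all remaining $j$ since they appear only through the argmax step.
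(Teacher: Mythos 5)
Your proposal is correct and follows essentially the same route as the paper's proof: the same decomposition of each $\widehat\psi_{-j}-\psi_{-j}$ into an empirical-process term (controlled by sample splitting and Condition~\ref{cond:empirical_process}), a conditional bias term (doubly robust product bound, needed at the $\sqrt{n}$ rate only for $j\in\{j^\prime,\emptyset\}$), followed by localization of the maximum to $j^\prime$ via Assumption~\ref{asmp:separation_max_fx} and a delta-method step for $|\cdot|$ using $\psi-\psi_{-j^\prime}\neq 0$. The only cosmetic differences are that the paper packages the argmax-consistency step as a standalone technical lemma (Theorem~\ref{thm:technical_max}) rather than your on-an-event argument, and it gets the $o_\bbP(1)$ bias for non-maximal $j$ directly from Jensen's inequality applied to $\lVert\widehat\phi-\phi\rVert_2$ rather than via consistency of the constituent nuisance estimators.
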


Theorem~\ref{thm:fx_conv} establishes that the error of the estimator for the upper bound behaves like a centered sample average plus asymptotically negligible error under doubly robust conditions on the nuisance function estimators.  Indeed, because the estimator uses the EIF of the adjusted mean difference, its error follows the form of the doubly robust estimator for the adjusted mean difference, requiring that the propensity score and outcome regression are estimated at a $\sqrt{n}$-rate in product.

\bigskip

Here, we give some intuition for the conditions of the result.  Condition~\ref{cond:bounded} is necessary so that the bias of each adjusted mean difference estimator is bounded, while condition~\ref{cond:empirical_process} is a weak consistency condition necessary for controlling the empirical process terms in the error of each adjusted mean difference estimator.  Crucially, condition~\ref{cond:bias} is necessary to control the bias of $\widehat{\mathcal{U}}(\Gamma)$. Interestingly, condition \ref{cond:bias} only requires that the bias of the estimator for maximum measured confounding ($|\psi - \psi_{-j^\prime}|$) converges at a $\sqrt{n}$-rate, while the non-maximums can be estimated consistently at any rate (which is guaranteed by condition~\ref{cond:empirical_process}). Intuitively, this occurs because the estimators for the non-maximums are only used to find the index of the true maximum, which is much easier statistically than estimating the value of the maximum. Indeed, access to merely consistent estimators of $| \psi - \psi_{-j}|$ for all $j \in [d]$, including the maximum, is enough to guarantee that the estimator for the maximum index, $\widehat j$, converges arbitrarily quickly to the true maximum index, $j^\prime$. A technical lemma establishing this is provided in Appendix~\ref{app:technical}.  

\bigskip

One can gain intuition for how the confidence interval for the ATE can be wider or narrower with a calibrated sensitivity model compared to a standard sensitivity model by examining~\eqref{eq:fx_upper_lim}. The limiting variance will determine the size of an asymptotically valid confidence interval. Without accounting for uncertainty in estimating measured confounding, as in a standard sensitivity analysis, the limiting variance is $\bbV \{ \phi(Z) \}$. By contrast, \eqref{eq:fx_upper_lim} yields the limiting variance $\bbV \left[ \phi(Z) +  \text{sign} (\psi- \psi_{-j^\prime}) \Gamma \big\{ \phi(Z) - \phi(Z_{-j^\prime}) \big\} \right]$.  This could be larger or smaller than $\bbV \{ \phi(Z) \}$, depending on the covariance between $\phi(Z)$ and $\text{sign} (\psi- \psi_{-j^\prime}) \Gamma \big\{ \phi(Z) - \phi(Z_{-j^\prime}) \big\}$.   We provide further analysis in the supplementary materials. Similar principles apply with other models, but with more complicated analysis. 

\subsection{Odds ratio}

Next, we consider estimating the upper bound on the ATE in the odds ratio model (calibrated sensitivity model~\ref{csm:max_loo_odds}). There are two extra nuances to this estimator. First, because the quantification of measured confounding is a supremum, it is not possible to estimate it at a $\sqrt{n}$-rate under nonparametric assumptions \citep{lepski1999estimation}.  Instead, we target the best projection of the propensity score onto a finite-dimensional model, and use that to estimate measured confounding. This approach has a long history in statistics (e.g., \citet{huber1967behavior}) and has been studied in many contexts in causal inference (e.g., \citet{semenova2021debiased, kennedy2023density}). Second, the nuisance functions depend on measured confounding (see, Proposition~\ref{prop:partial_id}).  Therefore, the estimator for the upper bound estimates measured confounding within the training sample, and uses it as an input to estimate the nuisance functions for the upper bound.  Finally, it estimates the bound in a separate estimation sample, as with the effect differences model.

\bigskip

For ease of exposition, we impose a further mild assumption on the covariates.
\begin{assumption} \label{asmp:covariates}
	\emph{Bounded covariates:} The support of the covariates is the d-dimensional unit cube.
\end{assumption}
This allows for construction of an estimator for measured confounding by finding the supremum over the unit cube.  This assumption could be relaxed to any known and bounded support.   Future work could relax this assumption entirely and incorporate an estimator of the support of the covariates. 

\bigskip

Definitions~\ref{def:odds} and \ref{def:odds_bound} below define the estimator for measured confounding and the estimator for the upper bound, respectively. We split the estimator into two parts to clearly illustrate how the estimator for the best projection of measured confounding is constructed.  Estimating the best projection of the propensity score corresponds to maximum likelihood estimation with binary regression (see, e.g., \citet{van2000asymptotic}, Example 5.40). Definition~\ref{def:odds} provides a specific example of binary regression --- logistic regression with no interactions --- but the ideas generalize to other link functions and to transformations of the covariates.
\begin{definition} \label{def:odds}
	Suppose Assumption~\ref{asmp:covariates} holds. Let $\Psi(x) = \frac{1}{1 + \exp(-x)}$ denote the logistic function. Define $\pi_1^\perp (X) = \Psi ( \beta^T X )$, where $\beta = \argmax_{b \in B} \bbE \left[\Psi ( b^T X )^A  \left\{ 1 - \Psi (b^T X) \right\}^{1-A} \right] $, where $B \subset \bbR^d$ is the parameter space. Construct an estimator for the propensity score as $\widehat \pi_1^\perp = \Psi( \widehat \beta^T X )$ 	where $\widehat \beta = \argmax_{b \in B} \bbP_n \left[ \Psi ( b^T X )^A  \left\{ 1- \Psi ( b^T X ) \right\}^{1-A} \right]$.  Because the model uses the logistic link, contains no interactions, and the support of the covariates is the unit cube, measured confounding $M$ (defined in \eqref{eq:M_odds}) corresponds to the largest absolute coefficient of $\beta$, i.e., $M = \max_{j \in [d]} | \beta_j|$, where $\beta_j$ is the $j^{th}$ coefficient of $\beta$. Therefore, let $\widehat M = \max_{j \in [d]} | \widehat \beta_j |.$
\end{definition}

As with the effect differences model, to facilitate $\sqrt{n}$-inference when measured confounding is a maximum, we invoke a separation condition.
\begin{assumption} \label{asmp:separation_max_odds}
	\emph{Separation of maximum in the odds ratio model:} There exists $j^\prime \in [d]$ such that $\left| \beta_{j^\prime} \right| > \left| \beta_{j} \right| \ \forall \ j \in [d] \setminus j^\prime$. 
\end{assumption}

\begin{definition} \label{def:odds_bound}
	Let $M$ and $\widehat M$ be as in Definition~\ref{def:odds} and $\mathcal{U}(\Gamma)$ as in \eqref{eq:odds_id_upper}. Construct an estimator for the upper bound as $\widehat{\mathcal{U}}(\Gamma) := \bbP_n \big[ \varphi_{U} \{ Z; \widehat \eta(\Gamma \widehat M) \} \big]$, where 
	\begin{itemize}
		\item $\varphi_{U}$ is the EIF for the upper bound on the ATE given in \citet{yadlowsky2022bounds}, also defined in \eqref{eq:yad_eif_upper} in the supplementary materials,
		\item the estimated nuisance functions $\widehat \eta = \{ \widehat \theta_a^\pm, \widehat \nu_a^\pm, \widehat \pi_a\}$ for $a \in \{0,1\}$ are constructed on a separate sample, where $\theta_a^\pm$ are defined in Proposition~\ref{prop:partial_id} and $\nu_a^\pm$ are defined in Lemma~\ref{lem:diff} in the supplementary materials, and 
		\item $\widehat M$ is constructed on a separate sample, according to Definition~\ref{def:odds}.
	\end{itemize}
\end{definition}

\noindent The next result provides a convergence guarantee for the estimator in Definition~\ref{def:odds_bound}.

\begin{restatable}{theorem}{thmoddsconv} \label{thm:odds_conv}
	\noindent \textbf{\emph{(Maximum LOO odds ratio model)}} Let $\mathcal{U}(\Gamma)$, $\widehat{\mathcal{U}}(\Gamma)$, $\varphi_{\mathcal{U}}$, and $\widehat \eta$ be as in Definition~\ref{def:odds_bound}, and let $j^\prime = \argmax_{j \in [d]} |\beta_j|$.  Suppose Assumptions~\ref{asmp:positivity}-\ref{asmp:bounded} and \ref{asmp:covariates}-\ref{asmp:continuity} hold, and
	\begin{enumerate}
		\item the distribution of $X$ is not concentrated on a $(d-1)$-dimensional affine subspace of its support, \label{cond:one_M}
		\item $\beta$ is at an inner point of $B$, the set of possible parameter values, \label{cond:two_M}
		\item $\widehat \eta(\Gamma \widehat M)$ is consistent for $\eta(\Gamma M)$ in the sense that $\left\lVert \varphi_U \{ Z; \widehat \eta(\Gamma \widehat M) \} - \varphi_U \{ Z; \eta(\Gamma M) \} \right\rVert_2 \inprob 0$, and \label{cond:one}
		\item the nuisance function estimators satisfy $\bbP \Big[ \varphi_U \{ Z; \widehat \eta(\Gamma \widehat M) \} - \varphi_U \{ Z; \eta(\Gamma \widehat M)\} \Big] = o_\bbP(n^{-1/2})$, where $\bbP$ denotes expectation conditional on the training data. \label{cond:two}
	\end{enumerate}
	Then,	
	\begin{equation} \label{eq:odds_conv}
		\widehat{\mathcal{U}}(\Gamma) - \mathcal{U}(\Gamma) = (\bbP_n - \bbE) \left[\varphi_{U} \{ Z; \eta(\Gamma M) \} +  \left\{ \frac{\partial}{\partial M} \mathcal{U}(\Gamma) \right\} \phi_M(Z) \right]  + o_{\bbP}(n^{-1/2}),
	\end{equation}
	where the derivative $\frac{\partial}{\partial M} \mathcal{U}(\Gamma)$ is defined in \eqref{eq:odds_deriv}, and $\phi_M(Z) =  e_{j^\prime}^T I_{\beta}^{-1}\ \big\{ \text{sign}(\beta_{j^\prime}) s(Z; \beta) \big\}$, where $e_j$ is the $j^{th}$ unit vector, $I$ is the Fisher information, and $s(\cdot)$ is the score function. With the model $\Psi ( \beta^T X )$ considered here, $I_{\beta} = \bbE \left[ \frac{\Psi^\prime (\beta^T X)^2}{\Psi(\beta^T X) \{ \beta^T X - \Psi(\beta^T X)\}} X X^T \right]$ and $s(Z; \beta) = \frac{A - \Psi(\beta^T X)}{\Psi(\beta^T X) \{ \beta^T X - \Psi(\beta^T X)\}} \Psi(\beta^T X) X.$
\end{restatable}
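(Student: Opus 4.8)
The plan is to combine the standard sample-splitting expansion for estimators built from an (uncentered) efficient influence function with a Taylor expansion of the population bound in the confounding level and an asymptotic-linearity expansion of $\widehat M$. Write $\bar{\mathcal{U}}(m) := \bbE[\varphi_U\{Z;\eta(\Gamma m)\}]$ for the mean of the uncentered \citet{yadlowsky2022bounds} EIF when the true nuisances are evaluated at confounding level $\Gamma m$; since that EIF is valid at every confounding level, $\bar{\mathcal{U}}(m)$ is the population upper bound there, and in particular $\mathcal{U}(\Gamma)=\bar{\mathcal{U}}(M)$. Let $\bbP$ denote expectation conditional on the training fold on which $\widehat\eta$ and $\widehat M$ are constructed (as in condition~\ref{cond:two}). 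Adding and subtracting $\varphi_U\{Z;\eta(\Gamma M)\}$ and $\varphi_U\{Z;\eta(\Gamma\widehat M)\}$, I would decompose
\begin{align*}
\widehat{\mathcal{U}}(\Gamma) - \mathcal{U}(\Gamma)
&= (\bbP_n - \bbE)\big[\varphi_U\{Z;\eta(\Gamma M)\}\big]
+ (\bbP_n - \bbP)\big[\varphi_U\{Z;\widehat\eta(\Gamma\widehat M)\} - \varphi_U\{Z;\eta(\Gamma M)\}\big] \\
&\quad + \bbP\big[\varphi_U\{Z;\widehat\eta(\Gamma\widehat M)\} - \varphi_U\{Z;\eta(\Gamma\widehat M)\}\big]
+ \big(\bar{\mathcal{U}}(\widehat M) - \bar{\mathcal{U}}(M)\big),
\end{align*}
using $\bbP[\varphi_U\{Z;\eta(\Gamma\widehat M)\}] = \bar{\mathcal{U}}(\widehat M)$ since, given the training fold, the true nuisances are evaluated at the fixed level $\Gamma\widehat M$. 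The first term is the oracle term, contributing $\varphi_U\{Z;\eta(\Gamma M)\}$ to the limiting influence function. The second term is an empirical-process remainder that is $o_\bbP(n^{-1/2})$: conditional on the training fold it is a centered sample average with variance at most $n^{-1}\lVert\varphi_U\{Z;\widehat\eta(\Gamma\widehat M)\} - \varphi_U\{Z;\eta(\Gamma M)\}\rVert_2^2 = o_\bbP(n^{-1})$ under condition~\ref{cond:one}; this is precisely why sample splitting lets us avoid Donsker/stability conditions. The third term is the doubly-robust second-order remainder, assumed $o_\bbP(n^{-1/2})$ in condition~\ref{cond:two}. So the remaining work is the last term.

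For $\bar{\mathcal{U}}(\widehat M) - \bar{\mathcal{U}}(M)$ I would Taylor-expand $\bar{\mathcal{U}}$ around $M$. Lemma~\ref{lem:show_diff_monotone} gives that $\bar{\mathcal{U}}$ is differentiable at $M$ with derivative $\frac{\partial}{\partial M}\mathcal{U}(\Gamma)$ as in \eqref{eq:odds_deriv}, and Assumption~\ref{asmp:continuity} (with positivity) yields continuity of this derivative in a neighbourhood of $M$; a mean-value expansion then gives $\bar{\mathcal{U}}(\widehat M)-\bar{\mathcal{U}}(M) = \{\frac{\partial}{\partial M}\mathcal{U}(\Gamma)\}(\widehat M - M) + o_\bbP(1)(\widehat M - M)$ as soon as $\widehat M \inprob M$. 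Consistency and the linear expansion of $\widehat M = \max_{j}|\widehat\beta_j|$ come from standard M-estimation theory for the binary-regression projection $\beta$: condition~\ref{cond:one_M} makes the Fisher information $I_\beta$ nonsingular and the population log-likelihood strictly concave (so $\beta$ is the unique maximizer and $\widehat\beta \inprob \beta$), and with condition~\ref{cond:two_M} ($\beta$ an inner point, so $\bbE\{s(Z;\beta)\}=0$) one obtains $\widehat\beta - \beta = I_\beta^{-1}\bbP_n\{s(Z;\beta)\} + o_\bbP(n^{-1/2})$. By the separation Assumption~\ref{asmp:separation_max_odds}, with probability tending to one $\argmax_j|\widehat\beta_j| = j'$, so $\widehat M - M = |\widehat\beta_{j'}| - |\beta_{j'}| = \text{sign}(\beta_{j'})(\widehat\beta_{j'} - \beta_{j'})$; substituting the expansion of $\widehat\beta$ gives $\widehat M - M = \bbP_n\{\phi_M(Z)\} + o_\bbP(n^{-1/2})$ with $\phi_M$ as in the statement, and since $\bbE\{s(Z;\beta)\}=0$ this equals $(\bbP_n - \bbE)\{\phi_M(Z)\} + o_\bbP(n^{-1/2})$. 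In particular $\widehat M - M = O_\bbP(n^{-1/2})$, so the $o_\bbP(1)(\widehat M - M)$ remainder is $o_\bbP(n^{-1/2})$.

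Collecting the four pieces yields $\widehat{\mathcal{U}}(\Gamma) - \mathcal{U}(\Gamma) = (\bbP_n - \bbE)[\varphi_U\{Z;\eta(\Gamma M)\}] + \{\frac{\partial}{\partial M}\mathcal{U}(\Gamma)\}(\bbP_n - \bbE)[\phi_M(Z)] + o_\bbP(n^{-1/2})$, which is \eqref{eq:odds_conv}; the empirical averages over the training and estimation folds combine into a single $\bbP_n$ over the full sample in the usual cross-fitting manner (for the single-split estimator they sit over the two halves, which does not change the asymptotics). The main obstacle I expect is bookkeeping the propagation of the error in $\widehat M$, which enters $\widehat{\mathcal{U}}(\Gamma)$ both through the confounding level plugged into the EIF formula and through the nuisance estimates $\widehat\eta(\Gamma\widehat M)$: one must route the ``estimated nuisances at level $\widehat M$'' part into condition~\ref{cond:two} and the ``true nuisances at level $\widehat M$'' part into the $\bar{\mathcal{U}}$ Taylor term, and then control the Taylor remainder at the $o_\bbP(n^{-1/2})$ level, which needs both the $\sqrt n$-rate for $\widehat M$ and the smoothness of $\bar{\mathcal{U}}$ supplied by Lemma~\ref{lem:show_diff_monotone} and Assumption~\ref{asmp:continuity}. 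A secondary subtlety is establishing $\argmax_j|\widehat\beta_j| = j'$ with probability tending to one, which requires the separation assumption together with $\sqrt n$-consistency of $\widehat\beta$; this is also what localizes $\max_j|\cdot|$ to the linear map $\beta\mapsto\text{sign}(\beta_{j'})\beta_{j'}$ and hence produces the simple form of $\phi_M$.
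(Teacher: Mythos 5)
Your proposal is correct and follows essentially the same route as the paper's proof: the identical four-term decomposition (oracle term, empirical-process term controlled by sample splitting and condition~\ref{cond:one}, conditional-bias term controlled by condition~\ref{cond:two}, and a Taylor expansion of the population bound in $M$ via Lemma~\ref{lem:show_diff_monotone}), combined with the same M-estimation expansion of $\widehat\beta$ and localization of the maximum to $j^\prime$ under the separation assumption (the paper packages this last step as Theorem~\ref{thm:technical_max}). The only cosmetic differences are that you spell out the conditional-variance argument that the paper delegates to a cited lemma and phrase the last bias term as $\bar{\mathcal{U}}(\widehat M)-\bar{\mathcal{U}}(M)$ rather than $\bbP[\varphi_U\{Z;\eta(\Gamma\widehat M)\}-\varphi_U\{Z;\eta(\Gamma M)\}]$, which are the same quantity.
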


Theorem~\ref{thm:odds_conv} shows that the estimator for the upper bound in the odds ratio model behaves like a centered sample average plus asymptotically negligible error. Here, we give some intuition for the conditions of the result. Conditions \ref{cond:one_M} and \ref{cond:two_M} are necessary so that the estimator $\widehat M$ successfully targets $M$, the best projection of measured confounding onto the logistic model with no interactions. Condition~\ref{cond:one_M} is necessary so that the parameter $\beta$ is identifiable, in the statistical non-causal sense; see, e.g., pg. 62 in \citet{van2000asymptotic}.  Indeed, if this condition does not hold, then infinite parameter estimates can exist (see, e.g., Section 5.4.2., \citet{agresti2015foundations}).  Condition~\ref{cond:two_M} is a standard condition necessary for establishing the convergence guarantees of M-estimators (see \citet{van2000asymptotic}, Section 5), of which maximum likelihood estimation, used to construct $\widehat \beta$, is an example. Conditions~\ref{cond:one} and \ref{cond:two} control the error of the estimator of the upper bound. Condition~\ref{cond:one} is a mild consistency condition on the nuisance function estimators, which allows us to control the empirical process term. Meanwhile, condition~\ref{cond:two} asserts that the conditional bias due to estimating the nuisance functions --- with estimated measured confounding $\widehat M$ as an input --- is asymptotically negligible. This allows us to establish that the bias of the estimator $\widehat{\mathcal{U}}(\Gamma)$ is asymptotically negligible.  This is the same doubly robust-style condition from \citet{yadlowsky2022bounds} Assumption A.2, but with estimated measured confounding $\widehat M$ multiplied by the calibrated sensitivity parameter $\Gamma$ rather than a fixed sensitivity parameter $\gamma$. We anticipate this holds with many nonparametric estimators under assumptions like smoothness or sparsity.

\subsection{Inference} \label{sec:inference}

The results in the previous sections establish that estimators for the bounds on the ATE which incorporate an estimator for measured confounding can be $\sqrt{n}$-efficient and asymptotically linear under nonparametric doubly-robust style conditions. By the central limit theorem, Wald-type confidence intervals can be constructed using the normal distribution and the influence functions in Theorems~\ref{thm:fx_conv} \& \ref{thm:odds_conv} (here, we mean influence functions in the sense of asymptotically linear estimators, i.e., the terms $f(Z)$ in $(\bbP_n - \bbE) \{ f(Z) \}$). By combining one-sided confidence intervals for the upper and lower bounds on the ATE, one can conduct inference for the ATE itself. The following result gives the upper one-sided confidence interval for the ATE using the estimator for the upper bound $\widehat{\mathcal{U}}(\Gamma)$, while a lower one-sided interval can be constructed analogously using $\widehat{\mathcal{L}}(\Gamma)$. 

\begin{corollary} \label{cor:ci}
	Let $UB(\alpha) = \widehat{\mathcal{U}}(\Gamma) + \Phi^{-1}(1-\alpha) \sqrt{ \frac{\bbV\{ \phi_{\mathcal{U}}(Z) \} }{n}}$, where $\Phi(\cdot)$ is the cumulative distribution function of the standard normal.  Suppose Assumption~\ref{asmp:consistency} holds, and either the setup of Theorem~\ref{thm:fx_conv} holds and $\phi_{\mathcal{U}}$ is the influence function in \eqref{eq:fx_upper_lim}, or the setup of Theorem~\ref{thm:odds_conv} holds and $\phi_{\mathcal{U}}$ is the influence function in \eqref{eq:odds_conv}. Then, $\bbP \left\{ \psi_\ast \leq UB(\alpha) \right\} \geq 1-\alpha + o(1).$
\end{corollary}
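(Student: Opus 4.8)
The plan is to assemble three ingredients: the asymptotic linearity of $\widehat{\mathcal{U}}(\Gamma)$ established in Theorems~\ref{thm:fx_conv} and~\ref{thm:odds_conv}, the partial identification inclusion $\psi_\ast \le \mathcal{U}(\Gamma)$ from Proposition~\ref{prop:partial_id}, and a central limit theorem argument. First I would invoke whichever of Theorems~\ref{thm:fx_conv},~\ref{thm:odds_conv} applies under the stated setup, obtaining
\begin{equation*}
	\widehat{\mathcal{U}}(\Gamma) - \mathcal{U}(\Gamma) = (\bbP_n - \bbE)\{\phi_{\mathcal{U}}(Z)\} + o_\bbP(n^{-1/2}),
\end{equation*}
with $\phi_{\mathcal{U}}$ the influence function displayed in~\eqref{eq:fx_upper_lim} or~\eqref{eq:odds_conv}. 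Under the maintained assumptions one has $0 < \bbV\{\phi_{\mathcal{U}}(Z)\} < \infty$ --- finiteness follows from positivity (which bounds the inverse-propensity weights), boundedness of $Y$ from Assumption~\ref{asmp:continuity}, and, in the odds ratio case, finiteness of $\partial \mathcal{U}(\Gamma)/\partial M$ from Lemma~\ref{lem:show_diff_monotone}; non-degeneracy is a standard regularity requirement. The central limit theorem, together with Slutsky's theorem to absorb the $o_\bbP(n^{-1/2})$ remainder after multiplying through by $\sqrt{n}$, then yields
\begin{equation*}
	\sqrt{n}\,\frac{\mathcal{U}(\Gamma) - \widehat{\mathcal{U}}(\Gamma)}{\sqrt{\bbV\{\phi_{\mathcal{U}}(Z)\}}} \indist N(0,1),
\end{equation*}
where I have used symmetry of the standard normal to flip the sign.

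Next I would invoke the partial identification result. Because Assumption~\ref{asmp:consistency} holds, the setup of the invoked theorem includes positivity (Assumption~\ref{asmp:positivity}), and the relevant calibrated sensitivity model is in force, Proposition~\ref{prop:partial_id} gives $\mathcal{L}(\Gamma) \le \psi_\ast \le \mathcal{U}(\Gamma)$, in particular $\psi_\ast \le \mathcal{U}(\Gamma)$. Hence
\begin{equation*}
	\bbP\{\psi_\ast \le UB(\alpha)\} \ge \bbP\{\mathcal{U}(\Gamma) \le UB(\alpha)\} = \bbP\left\{\sqrt{n}\,\frac{\mathcal{U}(\Gamma) - \widehat{\mathcal{U}}(\Gamma)}{\sqrt{\bbV\{\phi_{\mathcal{U}}(Z)\}}} \le \Phi^{-1}(1-\alpha)\right\},
\end{equation*}
and by the weak convergence just established the last probability tends to $\Phi(\Phi^{-1}(1-\alpha)) = 1-\alpha$, so $\bbP\{\psi_\ast \le UB(\alpha)\} \ge 1-\alpha + o(1)$. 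A symmetric argument using $\widehat{\mathcal{L}}(\Gamma)$ and its influence function gives the matching lower one-sided interval, and intersecting the two yields a conservative two-sided interval for $\psi_\ast$.

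This is a corollary in the literal sense, so I do not expect a deep obstacle; the points that need care are bookkeeping ones. One must keep straight that the asymptotic linearity half of the argument relies only on the estimation conditions of Theorems~\ref{thm:fx_conv}/\ref{thm:odds_conv}, while the inclusion $\psi_\ast \le \mathcal{U}(\Gamma)$ relies on the calibrated sensitivity model holding together with consistency and positivity; and the conclusion is an inequality rather than an equality precisely because $\psi_\ast$ can lie strictly inside $[\mathcal{L}(\Gamma), \mathcal{U}(\Gamma)]$, so the interval is exact only in the worst case. One further minor point: the corollary is stated with the exact variance $\bbV\{\phi_{\mathcal{U}}(Z)\}$, but in practice one plugs in a consistent variance estimator, and a final application of Slutsky's theorem shows the asymptotic $1-\alpha$ coverage is unaffected; I would note this without belaboring it.
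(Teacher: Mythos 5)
Your proof is correct and is exactly the standard argument the corollary is meant to follow from: asymptotic linearity from Theorem~\ref{thm:fx_conv} or~\ref{thm:odds_conv} plus the CLT and Slutsky give the limiting normal distribution for $\widehat{\mathcal{U}}(\Gamma)$, and the deterministic inclusion $\psi_\ast \leq \mathcal{U}(\Gamma)$ under the calibrated sensitivity model converts exact asymptotic coverage of $\mathcal{U}(\Gamma)$ into conservative coverage of $\psi_\ast$. The paper does not spell out a separate proof of this corollary in the supplement, and your bookkeeping remarks (which assumptions feed the estimation half versus the identification half, and why the result is an inequality) match the intended reading.
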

Corollary~\ref{cor:ci} demonstrates how to construct an asymptotically valid one-sided confidence interval for the ATE: $(-\infty, UB(\alpha)]$. Analogously, one can construct $LB(\alpha)$ such that $\bbP \left\{ \psi_\ast \geq LB(\alpha) \right\} \geq 1-\alpha + o(1).$ Hence, one can construct an asymptotically valid two-sided $1-\alpha$ confidence interval for the ATE as $\left[ LB(\alpha / 2), UB(\alpha / 2) \right]$. 

\bigskip

Corollary~\ref{cor:ci} leverages the influence functions derived in Theorems~\ref{thm:fx_conv} \& \ref{thm:odds_conv} which depend on measured confounding and account for uncertainty in estimating $M$. In many cases, the limiting variance in the confidence intervals could be estimated with the sample variance of the relevant influence function $\phi_{\mathcal{U}}(Z)$. However, because the influence function could be quite complicated --- particularly in Theorem~\ref{thm:odds_conv}, depending on the choice of finite-dimensional model and link function defining $\pi_1^\perp$ --- one might prefer a resampling method such as the nonparametric bootstrap \citep{efron1992bootstrap}.  

\section{Illustrative Data Analysis} \label{sec:illustrations}

In this section, we illustrate our methods with a data analysis. We examine the effect of mothers' smoking on infant birth weight. We use a calibrated sensitivity analysis to assess the robustness of causal estimates to unmeasured confounding, and compare our results to what one would conclude from a standard sensitivity analysis with post hoc calibration.  In the supplementary materials, we also examine the effect of exposure to violence on attitudes towards peace in Darfur, and assess sensitivity to unmeasured confounding with both the effect differences model and the odds ratio model. Our code is available at \url{https://github.com/alecmcclean/Calibrated-sensitivity-models}.

\subsection{Data}

The data contains information about infant birth weights and mothers' health in Pennsylvania between 1989 and 1991 \citep{almond2005costs, cattaneo2010efficient}.  It has been used to estimate the causal effect of mothers' smoking on infant birth weight. We used a publicly available dataset consisting of 5,000 randomly subsampled observations from the original dataset (available at \url{https://github.com/mdcattaneo/replication-C_2010_JOE}). The treatment variables takes six values corresponding to ranges of cigarettes smoked daily ($\{0, 1-5, 6-10, 11-15, 16-20, >20\}$), which we dichotomize into an indicator for smoking.  The outcome variable is the birth weight of the infant in grams. Covariates include the race and age of the mother and father, the education level each attained, the mother’s marital status and foreign born status, the county of birth, and indicators for trimester of first prenatal care visit and mother’s alcohol use.

\subsection{Methods} \label{sec:illustrations_methods}

We analyzed the data assuming the maximum leave-one-out (LOO) effect differences model (calibrated sensitivity model~\ref{csm:max_loo_fx}).  We constructed estimators for the bounds on the ATE according to Definition~\ref{def:effect}, with $\Gamma \in \{0.5, 1, \dots, 5\}$.  We constructed estimators for the adjusted mean difference with different covariate subsets using the \texttt{npcausal} package in \texttt{R} \citep{kennedy2023npcausal, rcore2021language}.  We used 5 splits and estimated the propensity score and outcome regression functions with the \texttt{SuperLearner}, stacking the sample average and a random forest from the \texttt{ranger} package with default tuning parameters \citep{van2007superlearner, wright2017ranger}. We constructed a pointwise confidence band for the upper bound on the ATE according to \eqref{eq:fx_upper_lim} and a pointwise confidence band for the lower bound analogously.  As described in Section~\ref{sec:inference}, we constructed one-sided 97.5\% confidence bands for the upper and lower bounds, and took their intersection to construct a 95\% confidence band for the ATE. We also conducted a standard sensitivity analysis and post hoc calibration step, where we standardized the sensitivity parameter by estimated measured confounding.  The sensitivity analysis imposed the bound $\left| \psi_\ast - \psi \right| \leq \gamma$, while measured confounding was $M = \max_{j \in [d]} \left|\psi - \psi_{-j} \right|$. Finally, we also calculated a one-number summary of the robustness of the study, which captures how large the sensitivity parameter $\Gamma$ would need to be for the bounds on the ATE to include zero (i.e., to be uninformative as to the sign of the effect). Further formal details on the robustness value are in Appendix~\ref{app:robustness}.

\subsection{Results}

We first report the key estimates. The ATE estimate with all covariates includes was negative and significant: \textbf{--264g} (95\%CI: [--330g, --100g]), which, interpreted causally, says that a mother's smoking will lower their child's birth weight by 264g, on average.

\bigskip

Table~\ref{tab:measured} reports the five most important observed covariates, in terms of absolute change in adjusted mean difference. The most impactful confounder was \emph{the number of months since the last live birth for the mother}, which shifted the adjusted mean difference by $36.5$g when added as a covariate to the other observed covariates. There were several other important confounders, including the parents' race and age. There were large effects for geographic location and parents' education, but 95\% confidence intervals for these effects included zero.
\begin{table}[ht]
	\centering
	\begin{tabular}{L{0.3\textwidth} R{0.25\textwidth} R{0.35\textwidth}}
		\toprule
		\textbf{Variable} & \textbf{Confounding estimate} & \textbf{95\% confidence interval} \\
		\midrule
		Months since last live birth & 36.5 & [12.4,\,60.6] \\
		Parents' race & 31.1 & [8.1,\,54.1] \\
		Geographic location  & 28.5 & [0.0,\,59.3] \\
		Parents' age & 27.3 & [2.8,\,51.7] \\
		Parents' education & 25.7 & [0.0,\,62.9] \\
		\bottomrule
		\end{tabular}
		\caption{Top five most important confounders, in terms of estimated absolute change in the adjusted mean difference $\left| \widehat \psi - \widehat \psi_j \right|$, with 95\% confidence interval}
	\label{tab:measured}
\end{table}

\bigskip

Figure~\ref{fig:smoking} presents the ATE estimates, bounds on the ATE, and 95\% confidence bands for both sensitivity analysis approaches. The x-axis represents the level of the sensitivity parameter, $\Gamma$, in the calibrated sensitivity model, and the sensitivity parameter standardized by estimated measured confounding, $\gamma / \widehat M$, in the standard sensitivity model. The y-axis corresponds to the causal effect magnitude. The horizontal dashed line indicates the ATE estimate, while the horizontal dotted line marks zero. The dot-dash lines represent the estimated upper and lower bounds on the ATE. Point estimates for these bounds are identical across both sensitivity models, but their confidence intervals differ---solid lines for the calibrated sensitivity model and long dashes for the standard model.

\begin{remark}
The estimated bounds are identical in both analyses after standardizing the x-axis. This occurs because for the sensitivity model $\left| \psi_\ast - \psi \right| \leq \gamma$, standardizing the sensitivity parameter by estimated measured confounding (i.e., using $\gamma / \widehat M$ as the x-axis) yields the same point estimates for the bounds as the calibrated sensitivity analysis with the effect differences model. We provide further details in Appendix~\ref{app:invariance_intuition}. The key difference between approaches lies in their confidence intervals, which properly account for estimation uncertainty in the calibrated model.
\end{remark}

\bigskip

The calibrated sensitivity results in Figure~\ref{fig:smoking} demonstrate that the estimated bounds on the ATE remain significantly different from zero when unmeasured confounding is less than seven times the magnitude of measured confounding. The confidence interval for the ATE intersects zero when $\Gamma \approx 4$, indicating that the statistical significance of the estimated effect would only be nullified if unmeasured confounding were at least four times as large as measured confounding. Specifically, an unmeasured confounder, when
added to all observed covariates, would need to shift the adjusted mean difference by at least four
times as much as \emph{the number of months since last live birth} changed the adjusted mean difference when it  was added to the other observed covariates. Given that months since last live birth changed the adjusted mean difference by $36.5$g, an unmeasured confounder with a similar relationship to all observed covariates as the relationship between months since last live birth and all other covariates would need to induce a change of approximately $150$g to nullify our results.

\bigskip

The magnitude of effects from other measured confounders provides additional context for assessing robustness. Table~\ref{tab:measured} demonstrates that at least three of the five most important measured confounders had statistically significant effects, indicating that several observed covariates account for confounding.  With this rich covariate set that includes established predictors of birth weight such as birth spacing, parental demographics, and socioeconomic factors, one could argue that the causal effect is robust to unmeasured confounding. While subject-matter expertise is crucial to assess the plausibility of such a change in birth weight due to an unmeasured confounder, the implausibility that an unmeasured confounder would have four times the impact of these well-established confounders could strengthen confidence in our findings. 

\bigskip

While both the traditional and calibrated sensitivity analyses suggest that each effect is somewhat robust to confounding, they suggest it to differing degrees.  Indeed, these results illustrate how not accounting for uncertainty in estimating measured confounding under-estimating robustness to unmeasured confounding.  The calibrated sensitivity model confidence intervals always contain the standard sensitivity model confidence intervals.  Therefore, the standard analysis over-estimates robustness to unmeasured confounding. It is also possible for standard approaches to underestimate robustness to unmeasured confounding. In Appendix~\ref{app:invariance_intuition}, we explore this mathematically. We show that if the estimator for measured confounding and the estimator for the un-calibrated upper bound $u(\gamma)$ are negatively correlated, and the estimator for measured confounding is sufficiently poor relative to the estimator for the upper bound, then standard methods will underestimate robustness. This creates a scenario where the standard approach provides overly pessimistic conclusions about the study's robustness to unmeasured confounding.

\bigskip

Finally, the robustness value estimate was $7.24$ (95\% confidence interval: $[2.38, 12.1]$). In other words, we estimated that unmeasured confounding would need to be roughly seven times as strong as measured confounding for the bounds on the ATE to include zero, indicating the study's robustness threshold. This estimate had a wide confidence interval, however, ranging from two to twelve. This robustness threshold of $7.24$ can be visualized in Figure~\ref{fig:smoking} as the point where the purple dot-dash line (upper bound) would cross the x-axis.

\begin{figure}[htbp]
\centering
\includegraphics[height = 4in]{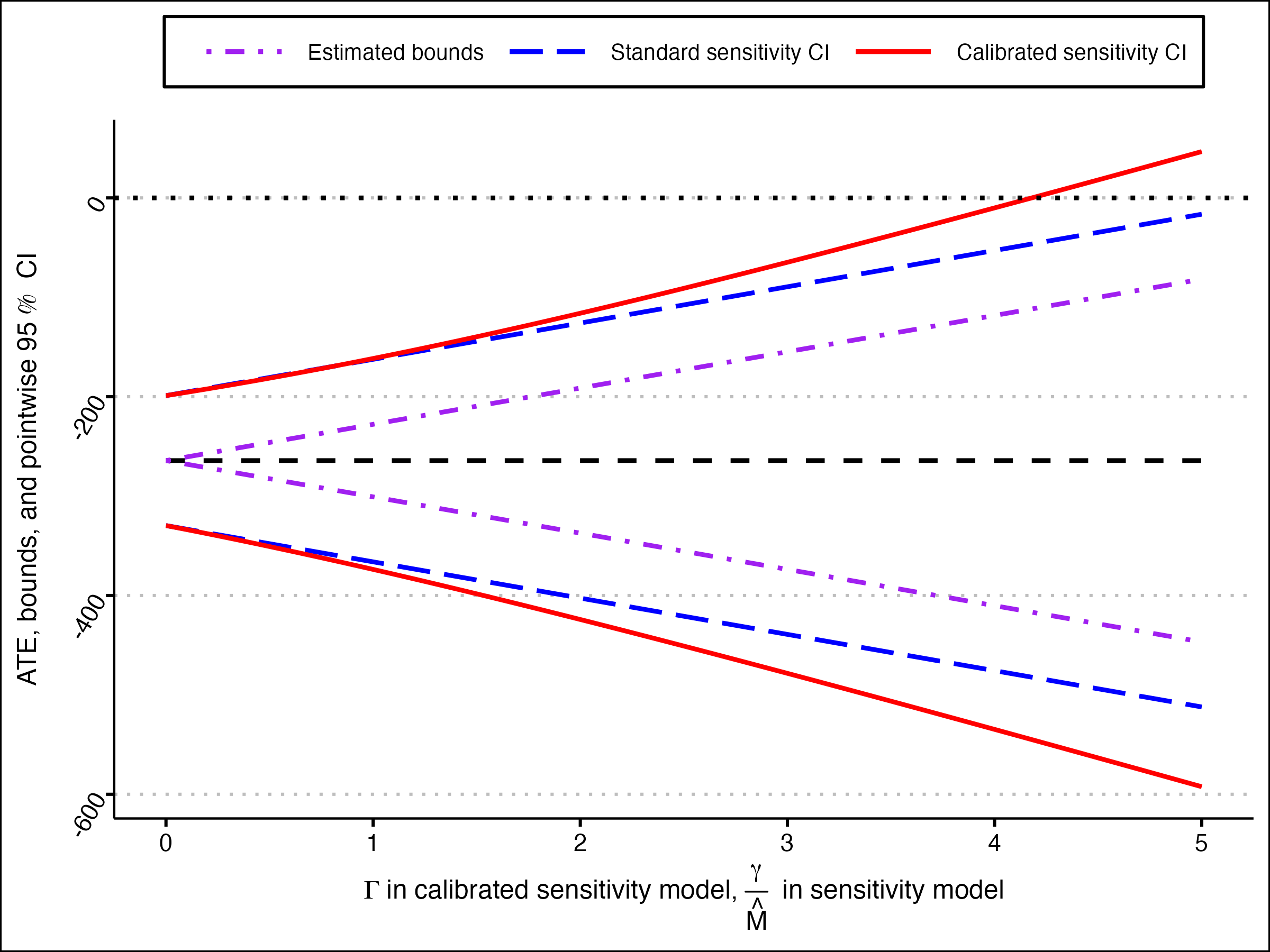}
\caption{\textbf{Calibrated sensitivity analysis for the effect of maternal smoking on infant birth weight.} The figure displays the estimated average treatment effect (ATE) and its sensitivity to unmeasured confounding. The x-axis represents the strength of unmeasured confounding: $\Gamma$ for the calibrated sensitivity model and $\gamma / \widehat M$ for the standard sensitivity model, where larger values indicate stronger unmeasured confounding relative to measured confounding. The y-axis shows the causal effect on birth weight in grams. The black horizontal dashed line shows the point estimate of the ATE (-264g), while the black horizontal dotted line marks zero effect. The purple dot-dash lines represent the upper and lower bounds on the ATE under different levels of unmeasured confounding. Confidence intervals are shown as solid red lines (calibrated model) and long blue dashes (standard model). The calibrated model accounts for uncertainty in estimating measured confounding, producing wider confidence intervals. Statistical significance is lost when $\Gamma \approx 4$, meaning an unmeasured confounder would need to be four times stronger than the strongest measured confounder to nullify the smoking effect.}
\label{fig:smoking}
\end{figure}

\section{Discussion: formal benchmarking} \label{sec:discussion}

In Section~\ref{sec:interpretation}, we emphasized that interpreting our calibrated sensitivity models depends on the observed covariates. Here, we revisit this point through the lens of "informal" versus "formal" benchmarking—a popular distinction in the sensitivity analysis literature. While our models appear to fail the specific formal benchmarking criterion proposed in the literature, we argue this criterion is not appropriate for these models. We illustrate this through two examples.

\bigskip

The core idea behind formal benchmarking is intuitive: if an observed covariate is a perfect proxy for an unmeasured confounder, then calibrated sensitivity bounds using that covariate should include the true causal effect. This follows because observing the unmeasured confounder would identify the causal effect. In other words, if we knew $M = U_\ast$ (where $U_\ast$ represents the true effect of unmeasured confounders), then imposing $U \leq \Gamma M = \Gamma U_*$ with $\Gamma=1$ will yield bounds containing the causal effect.

\bigskip

However, applying this criterion to our models raises a fundamental question: how should we formalize a ``perfect proxy"? \citet[Sections 4.4 and 6.2]{cinelli2020making} propose using marginal exchangeability:

\begin{property} \label{property:formal}
	Let $M(\cdot): \mathcal{X} \times \mathcal{P} \to \bbR$ be a notion of measured confounding that can be expressed as a direct function of a covariate or set of covariates. If there exists some covariate $X$ \textbf{that is exchangeable} with the unmeasured confounder, then a calibrated sensitivity model $U \leq \Gamma M(X)$ is \emph{\textbf{formal}} if it satisfies $\psi_\ast \in [\mathcal{L}(1), \mathcal{U}(1)]$.
\end{property}

Property~\ref{property:formal} formalizes a perfect proxy as a covariate that is maringally exchangeable with the unmeasured confounder (we give an example below). This criterion suits explained variance models that parameterize bias through the unmeasured confounder's direct explanatory power \citep{huang2025variance,chernozhukov2022long}. However, Property~\ref{property:formal} is inappropriate for our models because they capture the effect of an unmeasured confounder \emph{when added} to observed covariates. Thus, a perfect proxy is not simply an exchangeable covariate, but one whose \emph{additional} impact---when added to the other observed covariates---matches the additional impact of the unmeasured confounder \emph{when added to all observed covariates.}

\bigskip

This distinction is crucial. To clarify, we present two illustrative examples. Both involve two covariates: $X$ is observed and $W$ is unobserved, with $X$ serving as a proxy for $W$. Throughout, we use the simple leave-one-out calibrated sensitivity model which imposes the bound $| \psi_\ast - \psi_X  | \leq \Gamma \left| \psi_X - \psi_{\emptyset} \right|$, although similar principles apply to the other examples in this paper.

\bigskip

\noindent \textbf{Exchangeable covariates may not be perfect proxies} \\
The following example is adapted to binary treatment from \citet[Section 6.2]{cinelli2020making}):
\[
X \sim \text{Unif}(-1,1),\quad W \sim \text{Unif}(-1,1),\quad A \sim \text{Bernoulli} \left(\tfrac{X + W + 2}{4} \right),\quad Y = X + W + \varepsilon,
\]
where $\varepsilon \sim N(0,1)$ and all random variables are independent. There is no treatment effect, so $\psi_\ast = 0$. However, there is unmeasured confounding: $\psi_X = \bbE \{ \mu_1(X) - \mu_0(X) \} = \tfrac{\log 3}{3} \neq 0$.

\bigskip

In this example, the formal benchmarking criterion and would assert that $X$ is a perfect proxy for $W$ because they are exchangeable marginally. Therefore, Property~\ref{property:formal} should hold and bounds with $\Gamma = 1$ should include the true causal effect. However, setting $\Gamma = 1$ does not yield bounds that include the true causal effect: \( \psi_{\emptyset} \equiv \bbE(Y \mid A=1) - \bbE(Y \mid A=0) = \tfrac{2}{3} \) and therefore \( 0 \notin \psi_X \pm \left| \psi_X - \psi_\emptyset \right| \approx [0.066,0.667].\) 

\bigskip

That the bounds with $\Gamma =1$ fail to contain the true causal effect is not an issue with the calibrated sensitivity model. Rather, this example reveals the limitation of formal benchmarking as it is formalized in Property~\ref{property:formal}.  Specifically, adding $W$ to the set ${X}$ has a different impact than adding $X$ to the empty set, even though $X$ and $W$ are exchangeable marginally. This illustrates why Property~\ref{property:formal} is inappropriate for our approach: marginal exchangeability does not guarantee that a covariate provides perfect information about the effect of an unmeasured confounder if it were added to all the covariates already observed.

\bigskip

\noindent \textbf{Perfect proxies need not be exchangeable} 

We modify the setup to illustrate a case where $X$ is a perfect proxy for $W$ even though they are not exchangeable. The modified data generating process is 
\[
\bbP(A=1 \mid X, W) = \left(\tfrac{X + \theta W + 2}{4} \right) \text{ and } \bbE(Y \mid A, X, W) = X + \theta W,
\]
where $\theta = \sqrt{1 / (2 \log 3 - 1)} \approx 0.835$. Here, the bias induced by omitting $W$ from the set $\{X, W\}$ is equal to the bias induced by omitting $X$ from the set $\{X\}$. That is: \( \left| \psi_\ast - \psi_X \right| = \left| \psi_X - \psi_\emptyset \right| = \frac{\log 3}{6 \log 3 - 3} \approx 0.306. \) Therefore, \( 0 \in \psi_X \pm \left| \psi_X - \psi_\emptyset \right| = \left[ 0, \frac{2}{6 \log 3 - 3} \right]. \)

\bigskip

In this scenario, the key idea underpinning formal benchmarking is preserved---not due to marginal exchangeability---but because the effect of $W$ when added to $X$ is the same as the effect of $X$ when added to the empty set.  

\bigskip

These examples highlight that the appropriate notion of a perfect proxy for unmeasured confounding depends critically on the specific calibrated sensitivity model employed, and that Property~\ref{property:formal} is inappropriate for the models in this paper. For our calibrated models, which focus on the additional effect of an unmeasured confounder, perfect proxies are characterized by matching additional impacts rather than marginal exchangeability.

\section*{Acknowledgments}

The authors thank Melody Huang, Carlos Cinelli, several anonymous reviewers of a prior version of the paper, and participants in the Carnegie Mellon University causal inference reading group, CMStatistics 2023, and the 2024 American Causal Inference Conference for helpful comments and feedback.

\section*{References}

\vspace{-0.3in}
\bibliographystyle{plainnat}
\bibliography{references}

\newpage
\appendix
{\huge \noindent \textbf{Supplementary Materials}}

\bigskip

\noindent These supplementary materials are split into nine sections:
\begin{enumerate}
\item[Appendix \ref{app:invariance_intuition}] discusses an invariance property of the bounds and provides an illustrative analysis demonstrating when standard sensitivity analyses with post hoc calibration over- or under-estimate robustness to unmeasured confounding.
\item[Appendix~\ref{app:robustness}] describes the robustness value, a one-number summary of the robustness of a study to unmeasured confounding, and how to estimate it with calibrated sensitivity models.
\item[Appendix \ref{app:outcome_est}] provides estimation and inference results with the outcome model (calibrated sensitivity model~\ref{csm:avg_lso_out}), which complement those for the effect differences and odds ratio models in Section~\ref{sec:illustrations}.
\item[Appendix \ref{app:illustrations}] describes two data analyses we conducted with the effect differences and odds ratio models, assessing the effect of exposure to violence on attitudes towards peace in Darfur, and the sensitivity of that causal effect to unmeasured confounding.
\item[Appendix \ref{app:id_results}] provides results for Section~\ref{sec:partial_id}, Proposition~\ref{prop:partial_id} and Lemma~\ref{lem:show_diff_monotone}. Lemma~\ref{lem:show_diff_monotone} is proven by stating and proving a more detailed result, Lemma~\ref{lem:diff}.
\item[Appendix \ref{app:eifs}] provides the relevant efficient influence functions for estimating measured confounding and the bounds, deriving new efficient influence functions for the outcome model.
\item[Appendix \ref{app:est}] provides results for Section~\ref{sec:est} and Appendix~\ref{app:outcome_est}; in particular, for the convergence results in Theorems~\ref{thm:fx_conv}-\ref{thm:out_conv}.
\item[Appendix \ref{app:technical}] provides a technical result for estimating the maximum of a finite number of functionals, which is used in the proofs in Appendix~\ref{app:est}.
\item[Appendix \ref{app:proofs-robustness}] provides proofs for the results in Appendix~\ref{app:robustness}.
\end{enumerate}
\color{black}

\section{Invariance of the bounds and intuition for when standard approaches under- or over-estimate robustness to unmeasured confounding} \label{app:invariance_intuition}

This section builds further intuition for understanding the behavior of bounds and estimators for them with calibrated sensitivity models.

\subsection{Properties of the bounds} \label{sec:bounds_properties}

We begin with a simple observation, that the bounds in a sensitivity model and calibrated sensitivity model will be the same when $\gamma = \Gamma M$. 
\begin{proposition} \label{prop:invariance}
\textbf{(Invariance)} Suppose there is a sensitivity model that imposes a one-dimensional bound of the form $U \leq \gamma \in (0, \infty)$, where $U$ is some quantification of unmeasured confounding, which implies a deterministic upper bound on the causal effect of interest, $u(\gamma): (0, \infty) \to \bbR$.  Moreover, suppose there is a related calibrated sensitivity model of the form $U \leq \Gamma M$, where $U$ is the same as before, $\Gamma \in (0, \infty)$, and $0 < M < \infty$ (as in Assumption~\ref{asmp:bounded}) is some quantification of measured confounding. The calibrated sensitivity model then implies a deterministic bound $\mathcal{U}(\Gamma) = u(\Gamma M)$ and, therefore,
\begin{equation} \label{eq:invariance}
	\Gamma M = \gamma \implies \mathcal{U}(\Gamma) = u(\gamma).
\end{equation}
\end{proposition}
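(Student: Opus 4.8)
The plan is to observe that the calibrated sensitivity constraint $U \le \Gamma M$ is, for a fixed observed-data distribution, literally an instance of the standard sensitivity constraint $U \le \gamma$ with the particular value $\gamma = \Gamma M$. The key structural fact I would use is that $M$ --- whichever of the three notions of measured confounding is in play --- is a functional of the observed-data distribution $\mathcal{P}$ (equivalently of the $\mathcal{P}_{-S}$), and hence is a fixed, finite positive constant once $\mathcal{P}$ is fixed; this is exactly the content of Assumption~\ref{asmp:bounded}, which guarantees $0 < M < \infty$. Crucially, nothing about $M$ depends on the unidentified (counterfactual) part of the data-generating process, so $M$ may legitimately occupy the ``sensitivity parameter slot'' of the standard model.

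First I would recall how $u(\gamma)$ is constructed: the standard sensitivity model's upper bound is obtained as the supremum of the target estimand (the ATE) over the set $\mathcal{F}(\gamma)$ of all full-data laws that are compatible with the observed-data distribution and that satisfy $U \le \gamma$; by hypothesis this supremum is a well-defined deterministic function $u : (0,\infty) \to \bbR$. Next I would note that the calibrated model's upper bound $\mathcal{U}(\Gamma)$ is obtained by exactly the same recipe, but over the feasible set of full-data laws compatible with $\mathcal{P}$ satisfying $U \le \Gamma M$. Since $\Gamma M$ is a fixed positive real number, this feasible set is precisely $\mathcal{F}(\Gamma M)$, so the two optimization problems coincide and $\mathcal{U}(\Gamma) = u(\Gamma M)$. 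The displayed implication $\Gamma M = \gamma \Rightarrow \mathcal{U}(\Gamma) = u(\gamma)$ then follows by substituting $\gamma$ for $\Gamma M$.

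The only point requiring care --- rather than a genuine obstacle --- is making explicit that $M$ is a fixed functional of the observed data, so that replacing $\gamma$ by $\Gamma M$ is a valid substitution into the bound-deriving map; once that is stated, the rest is essentially definitional. I would also remark that monotonicity or continuity of $\gamma \mapsto u(\gamma)$, and any regularity of the bound map, are not needed here. For the lower bound, the identical argument applies with $\ell$ and $\mathcal{L}$ in place of $u$ and $\mathcal{U}$. As a sanity check one can also verify the claim directly against the closed forms in Proposition~\ref{prop:partial_id}: in the effect differences and outcome models $M$ enters the bounds in exactly the place a fixed $\gamma$ would, and in the odds ratio model $M$ enters only through $\theta_a^{\pm}\{\cdot; \exp(\Gamma M)\}$, which is $\theta_a^{\pm}\{\cdot; \exp(\gamma)\}$ when $\Gamma M = \gamma$.
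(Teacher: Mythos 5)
Your proposal is correct and takes essentially the same route as the paper: both arguments observe that $0 < \Gamma M < \infty$ is a legitimate value of the standard sensitivity parameter, substitute $\gamma = \Gamma M$ into the bound-deriving map, and invoke the fact that $u(\cdot)$ is deterministic. Your extra remarks on the feasible-set interpretation and the closed forms in Proposition~\ref{prop:partial_id} are harmless elaborations of the same idea.
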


The relationship in \eqref{eq:invariance} is true under the weak condition that the bound $u(\gamma)$ is a deterministic function, so it maps the same input to the same output. It also requires that measured confounding is non-zero and bounded, so that the calibrated sensitivity bound is non-zero and non-infinite. This result holds for the effect differences and odds ratio models. For the outcome model, which implicitly imposes two bounds --- for $a \in \{0,1\}$ --- the result in \eqref{eq:invariance} does not hold, but a similar two-dimensional property applies.  

\smallskip

The observation in Proposition~\ref{prop:invariance} is important, despite its simplicity, because it illustrates how calibrated sensitivity analyses can be compared with standard methods. Standard approaches compare $\gamma$ values to estimates $\widehat M$ in terms of their relative magnitude. Thus, standard approaches implicitly consider the ratio $\gamma/M$, and \eqref{eq:invariance} confirms that if $\frac{\gamma}{M} = \Gamma$ then $\left\{ \frac{\gamma}{M}, u(\gamma) \right\} = \{ \Gamma, \mathcal{U}(\Gamma) \}$. In other words, standard sensitivity models with post hoc calibration identify the same tuple as calibrated sensitivity models.  Therefore, we can compare uncertainty quantification in each approach by dividing $\gamma$ by $\widehat M$, and comparing the confidence intervals for the second elements in $\left\{ \gamma / \widehat M, \widehat{u}(\gamma) \right\}$ and $\{ \Gamma, \widehat{\mathcal{U}}(\Gamma) \}$. This comparison enables understanding when standard methods over- or under-estimate robustness to unmeasured confounding. 

\subsection{Understanding the change in robustness to unmeasured confounding} \label{sec:intuition}

For simplicity, we focus on the effect differences model, although the conclusions generalize immediately to other models with bounds that are linear in the sensitivity parameter, such as the outcome model, and may generalize to other models.

\smallskip

With the effect differences sensitivity model, the upper bound takes the form $u(\gamma) = \psi + \gamma$, where $\psi$ is some functional (in this paper, the adjusted mean difference) that depends on the data.  The calibrated upper bound is $\mathcal{U}(\Gamma) = \psi + \Gamma M.$ Inspired by the invariance property in Proposition~\ref{prop:invariance}, we compare confidence intervals of stylized estimators that point estimate the same tuple.  Suppose $\gamma = \Gamma M$. We compare confidence intervals for the second element in the following tuples:
\begin{enumerate}
\item Sensitivity model: $\left\{ \gamma / \widehat M, \widehat{u}(\gamma) \right\}$ 
\item Calibrated sensitivity model: $\left\{ \Gamma, \widehat{\mathcal{U}}(\Gamma) \right\}$
\end{enumerate}
Assuming appropriately accurate estimators, both approaches estimate the same tuple because $\gamma = \Gamma M$.  The difference arises in how confidence intervals are constructed for the second element --- the calibrated sensitivity approach accounts for uncertainty in estimating $\widehat M$, while the standard post hoc calibration approach does not.

\smallskip

Suppose access to estimators such that Wald-type confidence intervals can be constructed using the variance of each estimator:
\begin{align}
\bbV \left\{ \widehat{u}(\gamma) \right\} &= \bbV(\widehat \psi) \text{ and } \label{eq:standard} \\
\bbV\left\{ \widehat{\mathcal{U}}(\Gamma) \right\} &= \bbV(\widehat \psi + \Gamma \widehat M) = \bbV \left(\widehat \psi + \gamma \frac{\widehat M}{M} \right) \equiv   \bbV \left\{  \widehat{u}(\gamma) + \gamma \frac{\widehat M}{M} \right\}  \label{eq:new}
\end{align}
We can compare confidence intervals by comparing the sizes of the variances in \eqref{eq:standard} and \eqref{eq:new}. If the variance of $\widehat u(\gamma)$ is larger than the variance of $\widehat{\mathcal{U}}(\Gamma)$, it implies that standard methods are under-estimating robustness to unmeasured confounding, in the sense that a seemingly smaller amount of unmeasured confounding would be needed to nullify a significant causal effect compared to what one would conclude from a calibrated sensitivity analysis that incorporates uncertainty in $\widehat{M}$. Similarly, if the variance of $\widehat{u}(\gamma)$ is smaller, it implies that standard methods are over-estimating robustness to unmeasured confounding. Notice that
\begin{equation} 
\frac{\bbV \{ \widehat{\mathcal{U}}(\Gamma) \}}{\bbV \{ \widehat{u}(\gamma) \}} = 1 + \frac{\bbV(\widehat M) / M^2}{\bbV \{ \widehat{u}(\gamma) \} / \gamma^2} + 2\rho \left( \widehat{u}(\gamma), \widehat M \right) \sqrt{\frac{\bbV(\widehat M) / M^2}{\bbV \{ \widehat{u}(\gamma) \} / \gamma^2 }}, \label{eq:identity}
\end{equation}
where $\rho(X, Y) = \frac{\cov(X, Y)}{\sqrt{\bbV(X)\bbV(Y)}}$ is the correlation between $X$ and $Y$. The other key quantity on the right-hand side is $\frac{\sqrt{\bbV(\widehat M)} / M}{\sqrt{\bbV\{ \widehat u(\gamma) \}} / \gamma }$. The numerator is the relative standard error of $\widehat M$, while the denominator is a similar relative error quantity, but standardizes by $\gamma$ instead of $u(\gamma)$. We will refer to this as the ratio of relative standard errors (RRSE) of the estimators for measured confounding and the upper bound, $\widehat M$ and $\widehat u(\gamma)$.  Rearranging \eqref{eq:identity}, we can characterize when standard methods over- or under-estimate robustness to unmeasured confounding by the following inequalities:
\begin{align} 
\frac{1}{2} \frac{\sqrt{\bbV(\widehat M)} / M}{\sqrt{\bbV \{ \widehat{u}(\gamma) \}} / \gamma } < -\rho \{ \widehat{u}(\gamma), \widehat M \} \implies \bbV \{ \widehat{\mathcal{U}}(\Gamma) \} < \bbV \{ \widehat{u}(\gamma) \} \nonumber \\
\frac{1}{2} \frac{\sqrt{\bbV(\widehat M)} / M}{\sqrt{\bbV \{ \widehat{u}(\gamma) \}} / \gamma }  > -\rho \{ \widehat{u}(\gamma), \widehat M \} \implies \bbV \{ \widehat{\mathcal{U}}(\Gamma) \} > \bbV \{ \widehat{u}(\gamma) \} \label{eq:relationship}	
\end{align}

\begin{figure}[h]
\centering
\includegraphics[width = 0.9\textwidth]{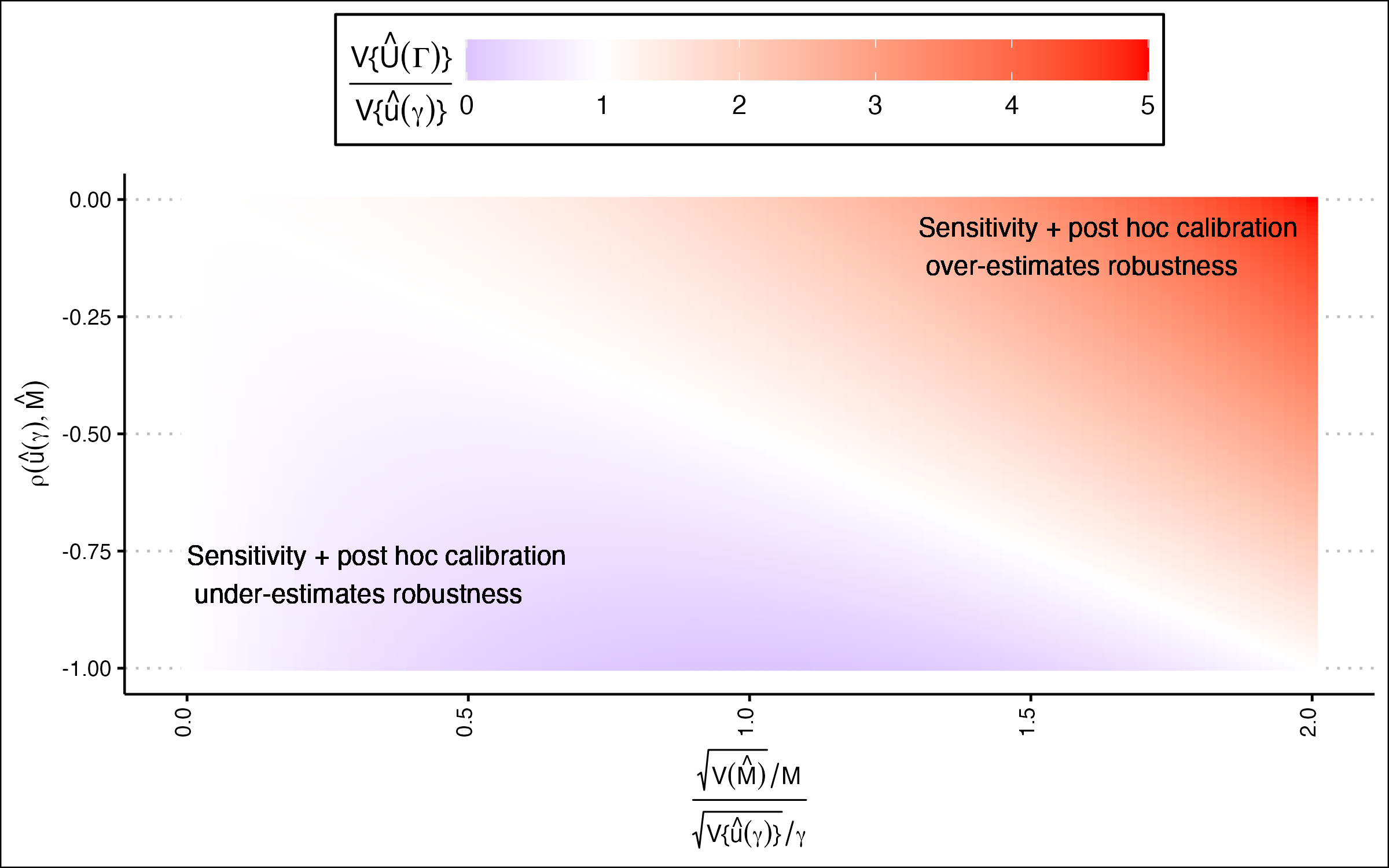}
\caption{Visualization of \eqref{eq:relationship}, showing when sensitivity analyses with post hoc calibration over- or under-estimate robustness to unmeasured confounding, compared to calibrated sensitivity models.}
\label{fig:intuition}
\end{figure}

The relationships in \eqref{eq:relationship} can also be visualized, as in Figure~\ref{fig:intuition}. In the red area in the top-right of Figure~\ref{fig:intuition}, standard approaches over-estimate robustness to unmeasured confounding.  In the blue area in the bottom-left, standard approaches under-estimate robustness to unmeasured confounding. Along the white diagonal the approaches yield the same conclusions.  This plot does not characterize the whole space because it is red for $> 2$ on the x-axis and $> 0$ on the y-axis (i.e., for most of the space, standard approaches over-estimate robustness to unmeasured confounding). 

\smallskip

\noindent Figure~\ref{fig:intuition} and \eqref{eq:relationship} can be summarized as follows:
\begin{itemize}
	\item \textbf{Standard sensitivity analysis with post hoc calibration overestimates robustness to confounding} when the estimators for the upper bound in the sensitivity model and for measured confounding are positively correlated. Alternatively, if the estimates are negatively correlated, this occurs when the RRSE is more than twice the absolute correlation---when the estimator for measured confounding is relatively accurate compared to the estimator for the upper bound.
	\item \textbf{Standard methods underestimate robustness to confounding} only when the estimators for measured confounding and the upper bound are negatively correlated and the RRSE is less than twice the absolute correlation---when the estimator for measured confounding is relatively inaccurate compared to the estimator for the upper bound.
\end{itemize}
This analysis provides intuition for when standard methods might over- or underestimate robustness to unmeasured confounding by failing to appropriately account for uncertainty in estimates of measured confounding.

\section{Estimating a robustness value with a calibrated sensitivity model} \label{app:robustness}

As alluded to in the main paper, researchers might be interested in estimating a one-number summary of the robustness of the calibrated sensitivity analysis. In this section, we focus on the level of $\Gamma$ at which bounds for the causal effect include zero. That is:
\begin{equation}
	\Gamma_0 := \inf_{\Gamma \in [0, \infty)} \left\{ \Gamma: \text{sign} \{ \mathcal{L}(\Gamma) \} \neq \text{sign} \{ \mathcal{U}(\Gamma)\} \right\}.
\end{equation}
This communicates the minimum level of $\Gamma$ at which the upper or lower bound on the causal effect crosses zero. In the next two sections we establish general conditions under which $\Gamma_0$ exists and when a Z-estimator satisfies a Gaussian limiting distribution. In the final section, we revisit the maximum leave-one-out effect differences model from the main paper and show how these results apply.

\subsection{Existence of $\Gamma_0$}

\noindent The existence of $\Gamma_0$ is guaranteed under mild assumptions. 
\begin{assumption} \label{asmp:existence}	
	$\mathcal{U}(0) = \mathcal{L}(0) = \psi_\ast$.  Moreover, there exists $\mathcal{G} = [0, G)$ with $G < \infty$, on which $\mathcal{U}(\cdot)$ and $\mathcal{L}(\cdot)$ are strictly monotone (increasing and decreasing, respectively), continuous, and bounded. Additionally, there exists $\Gamma \in \mathcal{G}$ such that $\mathcal{U}(\Gamma) = 0$ when $\psi_\ast < 0$, or $\mathcal{L}(\Gamma) = 0$ when $\psi_\ast > 0$.	 
\end{assumption}
Assumption~\ref{asmp:existence} asserts that the bounds equal the true causal effect when the sensitivity model assumes no unmeasured confounding. It also ensures that the bounds are monotone, continuous, and bounded for $\Gamma \in \mathcal{G} \subset [0, \infty)$, and that one of the bounds intersects zero for some $\Gamma \in \mathcal{G}$. When combined with Assumption~\ref{asmp:bounded}, it is sufficient to define $\Gamma_0$ as the solution to a population moment condition.

\begin{proposition} \label{prop:gamma_zero_def}
    Under Assumptions~\ref{asmp:bounded} and~\ref{asmp:existence}, there exists a unique $\Gamma_0 \in \mathcal{G}$ such that $\Gamma_0$ is the solution to $\mathcal{L}(\Gamma_0) \mathcal{U}(\Gamma_0) = 0$.
\end{proposition}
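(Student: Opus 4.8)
The plan is to reduce everything to elementary facts about strictly monotone continuous functions, after a case split on the sign of $\psi_\ast$. From Assumption~\ref{asmp:existence} I would extract three ingredients: that $\mathcal{U}(0) = \mathcal{L}(0) = \psi_\ast$; that on $\mathcal{G} = [0,G)$ the map $\mathcal{U}$ is continuous and strictly increasing while $\mathcal{L}$ is continuous and strictly decreasing; and that whichever bound starts on the far side of zero --- $\mathcal{U}$ when $\psi_\ast < 0$, $\mathcal{L}$ when $\psi_\ast > 0$ --- is guaranteed to attain the value $0$ somewhere in $\mathcal{G}$. Assumption~\ref{asmp:bounded} would be invoked only to ensure the bounds are non-degenerate (for the example models, that measured confounding is strictly positive and finite), so that the strict monotonicity in the second ingredient genuinely holds.

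In the case $\psi_\ast < 0$, I would first note that the guaranteed zero of $\mathcal{U}$ in $\mathcal{G}$ is unique, because $\mathcal{U}$ is strictly increasing; call it $\Gamma_1$, with $\mathcal{U} < 0$ to its left and $\mathcal{U} > 0$ to its right. Next I would observe that $\mathcal{L}$ never vanishes on $\mathcal{G}$: since $\mathcal{L}(0) = \psi_\ast < 0$ and $\mathcal{L}$ is strictly decreasing, $\mathcal{L}(\Gamma) \le \psi_\ast < 0$ throughout. Combining, $\mathcal{L}(\Gamma)\mathcal{U}(\Gamma)$ is strictly positive on $[0,\Gamma_1)$, zero at $\Gamma_1$, and strictly negative on $(\Gamma_1, G)$, so $\Gamma_0 = \Gamma_1$ is the unique solution of $\mathcal{L}(\Gamma)\mathcal{U}(\Gamma) = 0$ in $\mathcal{G}$. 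The case $\psi_\ast > 0$ is the mirror image, with the roles of $\mathcal{L}$ and $\mathcal{U}$ swapped. The degenerate case $\psi_\ast = 0$ I would handle directly: $\mathcal{L}(0)\mathcal{U}(0) = \psi_\ast^2 = 0$ while for $\Gamma > 0$ strict monotonicity gives $\mathcal{U}(\Gamma) > 0 > \mathcal{L}(\Gamma)$ and the product is strictly negative, so $\Gamma_0 = 0$ uniquely.

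I do not expect a real obstacle here; existence of $\Gamma_0$ is essentially handed over by Assumption~\ref{asmp:existence} (the intermediate value theorem would suffice even if only a sign change were assumed), and uniqueness is just strict monotonicity. The two points I would be careful about are: (a) treating $\psi_\ast = 0$ separately, since the last clause of Assumption~\ref{asmp:existence} is then vacuous; and (b) checking that the bound which does not cross zero stays strictly away from zero on all of $\mathcal{G}$, which follows because it equals $\psi_\ast \ne 0$ at $\Gamma = 0$ and then moves monotonically further from zero. Finally, I would remark that this $\Gamma_0$ agrees with the $\inf$-based definition $\Gamma_0 = \inf\{\Gamma : \text{sign}\,\mathcal{L}(\Gamma) \ne \text{sign}\,\mathcal{U}(\Gamma)\}$ stated earlier, since on $[0,\Gamma_0)$ both bounds carry the sign of $\psi_\ast$ and at $\Gamma_0$ one of them is zero.
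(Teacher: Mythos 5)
Your argument is correct and follows essentially the same route as the paper's proof: a case split on the sign of $\psi_\ast$ (with the degenerate case $\psi_\ast = 0$ handled separately), strict monotonicity to show the non-crossing bound stays bounded away from zero and to get uniqueness, and the assumed zero of the crossing bound together with the intermediate value theorem for existence. Your write-up is slightly more explicit about why the product $\mathcal{L}\mathcal{U}$ changes sign exactly once, but there is no substantive difference from the paper.
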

The proofs are delayed to the Appendix~\ref{app:proofs-robustness} for brevity.

\subsection{Estimation}

In this subsection, we construct an estimator for $\Gamma_0$, based on estimators for the bounds in a calibrated sensitivity model.  We use the following notation: Let $\Psi(\cdot) := \mathcal{L}(\cdot) \mathcal{U}(\cdot)$, and assuming they exist, denote its derivative as $\Psi^\prime(\cdot)$ and the inverse of its derivative as $\Psi^\prime(\cdot)^{-1}$. Meanwhile, let $\widehat \Psi(\Gamma) = \widehat{\mathcal{L}}(\Gamma) \widehat{\mathcal{U}}(\Gamma)$ be an estimator for $\Psi(\Gamma)$, where $\{ \widehat{\mathcal{L}}, \widehat{\mathcal{U}} \}$ are estimators for $\{\mathcal{L}, \mathcal{U}\}$. For a function $f(\Gamma)$, define the norm $\left\| f \right\|_{\mathcal{G}} = \sup_{\Gamma \in \mathcal{G}} |f(\Gamma)|.$

\smallskip

Proposition~\ref{prop:gamma_zero_def} ensures that $\Psi(\Gamma_0) = 0$. This naturally suggests a Z-estimator $\widehat \Gamma_0$, which solves the moment condition $\widehat \Psi(\widehat \Gamma_0) = o_{\bbP}(n^{-1/2}).$ To establish the limiting distribution of $\widehat \Gamma_0$ converging to $\Gamma_0$, we require the following assumptions. 

\begin{assumption} \label{asmp:est-unif-conv}
    We have access to estimators $\widehat{\mathcal{L}}$ and $\widehat{\mathcal{U}}$ for the lower and upper bounds that satisfy uniform convergence guarantees across $\mathcal{G}$:
    $$
    \left\| \sqrt{n} \left( \widehat{\mathcal{L}} - \mathcal{L} \right) - \sqrt{n} (\bbP_n - \bbE) \varphi_l \right\|_{\mathcal{G}} = o_{\bbP}(1) \quad \text{and} \quad \left\| \sqrt{n} \left( \widehat{\mathcal{U}} - \mathcal{U} \right) - \sqrt{n} (\bbP_n - \bbE) \varphi_u \right\|_{\mathcal{G}} = o_{\bbP}(1)
    $$
    for $\varphi_l(Z; \Gamma), \varphi_u(Z; \Gamma)$ with bounded mean and non-zero and bounded variance. Additionally, $\left\| \widehat{\mathcal{L}} \right\|_{\mathcal{G}} = O_{\bbP}(1)$, $\left\| \widehat{\mathcal{U}} \right\|_{\mathcal{G}} = O_{\bbP}(1)$.
\end{assumption}

\begin{assumption} \label{asmp:g0-diff} 
    $\Psi(\cdot)$ has a continuously invertible non-zero derivative at $\Gamma_0$.
\end{assumption}

\begin{assumption} \label{asmp:donsker}
    The function classes $\{ \varphi_u(\cdot; \Gamma): \Gamma \in \mathcal{G} \}$ and $\{ \varphi_l(\cdot; \Gamma) : \Gamma \in \mathcal{G} \}$ are Donsker.
\end{assumption} 

\begin{assumption} \label{asmp:smooth-ifs}
    The functions $\varphi_u$ and $\varphi_l$ are smooth at $\Gamma_0$ in the sense that $\bbE \{ \varphi_u(Z; \Gamma) - \varphi_u(Z; \Gamma_0) \}^2 \to 0$ and $\bbE \{ \varphi_l(Z; \Gamma) - \varphi_l(Z; \Gamma_0) \}^2 \to 0$ as $\Gamma \to \Gamma_0$.
\end{assumption}

While Assumption~\ref{asmp:est-unif-conv} provides strong convergence guarantees for $\widehat \Psi$, it is not sufficient by itself to establish a limiting distribution for $\widehat \Gamma_0$. Assumptions~\ref{asmp:g0-diff}-\ref{asmp:smooth-ifs} impose smoothness conditions to establish the normal limiting distribution of $\widehat \Gamma_0$. Assumption~\ref{asmp:g0-diff} requires that the product of the upper and lower bounds has a continuously invertible derivative at $\Gamma_0$. Assumptions~\ref{asmp:donsker} and \ref{asmp:smooth-ifs} ensure that the influence functions of $\widehat{\mathcal{L}}$ and $\widehat{\mathcal{U}}$ are Donsker and appropriately smooth in the calibrated sensitivity parameter. Donsker classes are a broad category of functions often encountered in semiparametric efficiency theory. For further details, see \citet{van1996weak}. Assumption~\ref{asmp:smooth-ifs} ensures that $\varphi_u(Z; \Gamma)$ and $\varphi_l(Z; \Gamma)$ converge to their respective values at $\Gamma_0$ in squared mean as $\Gamma$ approaches $\Gamma_0$. 

\smallskip

Together with the conditions of Proposition~\ref{prop:gamma_zero_def}, Assumptions~\ref{asmp:est-unif-conv}-\ref{asmp:smooth-ifs} are sufficient to derive the limiting distribution for $\widehat \Gamma_0$.
\begin{theorem} \label{thm:gamma_zero}
    Let $\varphi_\Psi(Z; \Gamma) = \varphi_u(Z; \Gamma) \mathcal{L}(\Gamma) + \varphi_l(Z; \Gamma) \mathcal{U}(\Gamma)$. Suppose the conditions of Proposition~\ref{prop:gamma_zero_def} and Assumptions~\ref{asmp:est-unif-conv}--\ref{asmp:smooth-ifs} hold and $\widehat \Gamma_0$ solves $\widehat \Psi(\widehat \Gamma_0) = o_{\bbP}(n^{-1/2})$. Then,
    $$
    \sqrt{n}( \widehat \Gamma_0 - \Gamma_0 ) \indist  N \left( 0, \bbV\left\{ \Psi^\prime(\Gamma_0)^{-1} \varphi_\Psi(Z; \Gamma_0 ) \right\} \right).
    $$
\end{theorem}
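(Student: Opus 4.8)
The plan is to treat $\widehat\Gamma_0$ as an approximate Z-estimator solving the empirical moment equation $\widehat\Psi(\widehat\Gamma_0) = o_\bbP(n^{-1/2})$, whose population counterpart $\Psi(\Gamma_0)=0$ is guaranteed by Proposition~\ref{prop:gamma_zero_def}, and to invoke the standard master theorem for Z-estimators (e.g.\ \citet{van2000asymptotic}, Theorem~5.21). Concretely, I would verify four ingredients: (i) consistency, $\widehat\Gamma_0 \inprob \Gamma_0$; (ii) differentiability of $\Psi$ at $\Gamma_0$ with a nonzero, continuously invertible derivative; (iii) an asymptotically linear expansion $\widehat\Psi(\Gamma) - \Psi(\Gamma) = (\bbP_n - \bbE)\varphi_\Psi(\cdot;\Gamma) + o_\bbP(n^{-1/2})$, uniformly over $\mathcal{G}$; and (iv) a Donsker property together with $L_2$-continuity at $\Gamma_0$ for the class $\{\varphi_\Psi(\cdot;\Gamma):\Gamma\in\mathcal{G}\}$, which is what lets us replace $(\bbP_n-\bbE)\varphi_\Psi(\cdot;\widehat\Gamma_0)$ by $(\bbP_n-\bbE)\varphi_\Psi(\cdot;\Gamma_0)$.

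Ingredient (ii) is Assumption~\ref{asmp:g0-diff} directly. For (i), I would first obtain $\lVert \widehat\Psi - \Psi \rVert_{\mathcal{G}} \inprob 0$ by expanding $\widehat{\mathcal{L}}\widehat{\mathcal{U}} - \mathcal{L}\mathcal{U} = (\widehat{\mathcal{L}}-\mathcal{L})\mathcal{U} + \mathcal{L}(\widehat{\mathcal{U}}-\mathcal{U}) + (\widehat{\mathcal{L}}-\mathcal{L})(\widehat{\mathcal{U}}-\mathcal{U})$ and applying the $o_\bbP(1)$ uniform rates and the $O_\bbP(1)$ uniform boundedness of $\widehat{\mathcal{L}},\widehat{\mathcal{U}}$ in Assumption~\ref{asmp:est-unif-conv}, together with boundedness of $\mathcal{L},\mathcal{U}$ on $\mathcal{G}$ from Assumption~\ref{asmp:existence}. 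Combined with the fact that $\Gamma_0$ is a \emph{well-separated} zero of $\Psi$, this gives consistency. The well-separation follows from Assumption~\ref{asmp:existence}: taking, say, $\psi_\ast < 0$, strict monotonicity forces $\mathcal{L}(\Gamma) < \psi_\ast < 0$ for all $\Gamma\in\mathcal{G}$ while $\mathcal{U}$ increases strictly through zero at $\Gamma_0$, so $\Psi = \mathcal{L}\mathcal{U}$ is strictly positive on $[0,\Gamma_0)$ and strictly negative just above $\Gamma_0$ in $\mathcal{G}$, making $\Gamma_0$ the unique sign change and $\inf_{|\Gamma-\Gamma_0|\ge\varepsilon}|\Psi(\Gamma)| > 0$.

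For (iii), the same three-term expansion, now pointwise in $\Gamma$ with Assumption~\ref{asmp:est-unif-conv}, turns the first two terms into $(\bbP_n - \bbE)\{\varphi_l(\cdot;\Gamma)\mathcal{U}(\Gamma) + \varphi_u(\cdot;\Gamma)\mathcal{L}(\Gamma)\} = (\bbP_n-\bbE)\varphi_\Psi(\cdot;\Gamma)$ plus a remainder that is $o_\bbP(n^{-1/2})$ uniformly over $\mathcal{G}$, while the cross term is a product of two uniformly-$O_\bbP(n^{-1/2})$ quantities, hence $o_\bbP(n^{-1/2})$ uniformly. For (iv), $\{\varphi_\Psi(\cdot;\Gamma)\}$ is Donsker as a sum of products of the Donsker classes $\{\varphi_l\},\{\varphi_u\}$ (Assumption~\ref{asmp:donsker}) with the bounded deterministic multipliers $\mathcal{U}(\Gamma),\mathcal{L}(\Gamma)$, and $\bbE\{\varphi_\Psi(Z;\Gamma)-\varphi_\Psi(Z;\Gamma_0)\}^2 \to 0$ as $\Gamma\to\Gamma_0$ by combining Assumption~\ref{asmp:smooth-ifs} with continuity and boundedness of the bounds. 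Asymptotic equicontinuity of the empirical process associated with a Donsker class, together with $\widehat\Gamma_0 \inprob \Gamma_0$, then yields $\sqrt{n}(\bbP_n-\bbE)\varphi_\Psi(\cdot;\widehat\Gamma_0) = \sqrt{n}(\bbP_n-\bbE)\varphi_\Psi(\cdot;\Gamma_0) + o_\bbP(1)$. Substituting the expansions into $o_\bbP(n^{-1/2}) = \widehat\Psi(\widehat\Gamma_0) - \Psi(\Gamma_0) = \{\widehat\Psi(\widehat\Gamma_0)-\Psi(\widehat\Gamma_0)\} + \{\Psi(\widehat\Gamma_0)-\Psi(\Gamma_0)\}$, using $\Psi(\widehat\Gamma_0)-\Psi(\Gamma_0) = \Psi'(\Gamma_0)(\widehat\Gamma_0-\Gamma_0) + o_\bbP(|\widehat\Gamma_0-\Gamma_0|)$, and solving gives $\sqrt{n}(\widehat\Gamma_0 - \Gamma_0) = -\Psi'(\Gamma_0)^{-1}\sqrt{n}(\bbP_n-\bbE)\varphi_\Psi(\cdot;\Gamma_0) + o_\bbP(1)$. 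The CLT applies since $\varphi_\Psi(\cdot;\Gamma_0)$ has bounded mean and bounded nonzero variance, and Slutsky's theorem delivers the stated $N\bigl(0, \bbV\{\Psi'(\Gamma_0)^{-1}\varphi_\Psi(Z;\Gamma_0)\}\bigr)$ limit.

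The main obstacle I anticipate is ingredient (iv): rigorously passing from a product of two asymptotically linear estimators to a single empirical process indexed by $\Gamma$, uniformly over $\mathcal{G}$, and checking that the product influence function $\varphi_\Psi$ inherits the Donsker and $L_2$-continuity properties from $\varphi_l,\varphi_u$ — in particular handling the interplay between the uniform remainder control in Assumption~\ref{asmp:est-unif-conv} and the equicontinuity machinery. Once that is settled, the consistency step and the final linearization are essentially bookkeeping.
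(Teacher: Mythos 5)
Your proposal is correct and follows essentially the same route as the paper: the same Z-estimator master theorem (van der Vaart's Theorem 5.21 is the same result as van der Vaart--Wellner Theorem 3.3.1, which the paper cites), the same three-term product decomposition yielding the uniform asymptotic linearity of $\widehat\Psi$ with influence function $\varphi_\Psi$, the same Donsker-plus-$L_2$-continuity argument for replacing $(\bbP_n-\bbE)\varphi_\Psi(\cdot;\widehat\Gamma_0)$ by $(\bbP_n-\bbE)\varphi_\Psi(\cdot;\Gamma_0)$, and a consistency step via uniform convergence and a well-separated zero (the paper phrases this through Kosorok's identifiability condition, which is the same idea). No gaps worth flagging.
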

Theorem~\ref{thm:gamma_zero} shows that $\widehat \Gamma_0$ behaves like a sample average centered at $\Gamma_0$, with asymptotically negligible error. Thus, it demonstrates that inference is feasible using a Wald-type interval
$$
\widehat \Gamma_0 \pm z_{\alpha/2} \sqrt{ \bbV\left\{ \Psi^\prime(\Gamma_0)^{-1} \varphi_\Psi(Z; \Gamma_0 ) \right\} / n }.
$$
The result follows from standard Z-estimator analysis, particularly \citet[Theorem 3.3.1]{van1996weak}. \citet{bonvini2022sensitivitya} proved a specific version of this result for their sensitivity model. 

\smallskip

One can also gain further intuition for the limiting distribution expression in Theorem~\ref{thm:gamma_zero}, in the following result.
\begin{corollary} \label{cor:gamma_zero}
	Let $\mathcal{L}^\prime(\cdot)$ and $\mathcal{U}^\prime(\cdot)$ denote the derivatives of $\mathcal{L}(\cdot)$ and $\mathcal{U}(\cdot)$, respectively, and suppose the conditions of Theorem~\ref{thm:gamma_zero} hold.  If $\mathcal{U}(\Gamma_0) = 0$, then
	$$
	\sqrt{n} ( \widehat \Gamma_0 - \Gamma_0 ) \indist N \left( 0, \mathcal{U}^\prime(\Gamma_0)^{-2} \bbV\left\{ \varphi_u(Z; \Gamma_0) \right\} \right).
	$$
	If $\mathcal{L}(\Gamma_0) = 0$, then
	$$
	\sqrt{n} ( \widehat \Gamma_0 - \Gamma_0 ) \indist N \left( 0, \mathcal{L}^\prime(\Gamma_0)^{-2} \bbV\left\{ \varphi_l(Z; \Gamma_0) \right\} \right).
	$$
\end{corollary}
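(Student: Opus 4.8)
The plan is to specialize the limiting variance in Theorem~\ref{thm:gamma_zero} by substituting the zero condition on the relevant bound and simplifying; no new machinery is needed. First I would apply the product rule to $\Psi(\cdot) = \mathcal{L}(\cdot)\mathcal{U}(\cdot)$ to obtain $\Psi^\prime(\Gamma) = \mathcal{L}^\prime(\Gamma)\mathcal{U}(\Gamma) + \mathcal{L}(\Gamma)\mathcal{U}^\prime(\Gamma)$, valid at $\Gamma_0$ since the corollary presumes $\mathcal{L}^\prime(\Gamma_0)$ and $\mathcal{U}^\prime(\Gamma_0)$ exist. Then I would split into the two cases in the statement.

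In the case $\mathcal{U}(\Gamma_0) = 0$, the definition $\varphi_\Psi(Z;\Gamma) = \varphi_u(Z;\Gamma)\mathcal{L}(\Gamma) + \varphi_l(Z;\Gamma)\mathcal{U}(\Gamma)$ collapses to $\varphi_\Psi(Z;\Gamma_0) = \mathcal{L}(\Gamma_0)\,\varphi_u(Z;\Gamma_0)$, and the product-rule formula gives $\Psi^\prime(\Gamma_0) = \mathcal{L}(\Gamma_0)\,\mathcal{U}^\prime(\Gamma_0)$. Before cancelling $\mathcal{L}(\Gamma_0)$ I would check it is nonzero: under Assumption~\ref{asmp:existence}, $\mathcal{U}(\Gamma_0) = 0$ can occur only when $\psi_\ast < 0$, and since $\mathcal{L}$ is strictly decreasing with $\mathcal{L}(0) = \psi_\ast$, we have $\mathcal{L}(\Gamma_0) \le \psi_\ast < 0$. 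I would also note $\mathcal{U}^\prime(\Gamma_0) \neq 0$, since otherwise $\Psi^\prime(\Gamma_0) = 0$, contradicting Assumption~\ref{asmp:g0-diff}. Hence $\Psi^\prime(\Gamma_0)^{-1}\varphi_\Psi(Z;\Gamma_0) = \varphi_u(Z;\Gamma_0)/\mathcal{U}^\prime(\Gamma_0)$, and pulling the scalar out of the variance yields the stated $\mathcal{U}^\prime(\Gamma_0)^{-2}\,\bbV\{\varphi_u(Z;\Gamma_0)\}$, which is positive and finite by the non-zero, bounded-variance condition on $\varphi_u$ in Assumption~\ref{asmp:est-unif-conv}. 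The case $\mathcal{L}(\Gamma_0) = 0$ is entirely symmetric: $\varphi_\Psi(Z;\Gamma_0) = \mathcal{U}(\Gamma_0)\,\varphi_l(Z;\Gamma_0)$, $\Psi^\prime(\Gamma_0) = \mathcal{L}^\prime(\Gamma_0)\,\mathcal{U}(\Gamma_0)$, with $\mathcal{U}(\Gamma_0) \neq 0$ because this case forces $\psi_\ast > 0$ and $\mathcal{U}$ is strictly increasing from $\psi_\ast$, and $\mathcal{L}^\prime(\Gamma_0) \neq 0$ again by Assumption~\ref{asmp:g0-diff}; cancelling gives $\mathcal{L}^\prime(\Gamma_0)^{-1}\varphi_l(Z;\Gamma_0)$ and the claimed variance.

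I do not anticipate a genuine obstacle: the argument is a two-line substitution once the product rule is in hand. The only point requiring care is verifying the non-vanishing of the ``surviving'' bound and of its derivative at $\Gamma_0$, so that both the cancellation and the inverse $\Psi^\prime(\Gamma_0)^{-1}$ are well defined; as indicated, this follows from the sign structure in Assumption~\ref{asmp:existence} together with the non-degenerate-derivative condition in Assumption~\ref{asmp:g0-diff}.
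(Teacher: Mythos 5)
Your proposal is correct and follows essentially the same route as the paper's proof: substitute the vanishing bound into the product-rule expression for $\Psi^\prime(\Gamma_0)$ and into $\varphi_\Psi(Z;\Gamma_0)$, cancel the common factor, and pull the resulting scalar out of the variance. Your explicit verification that the surviving bound is nonzero at $\Gamma_0$ (via the sign structure in Assumption~\ref{asmp:existence}) is a detail the paper leaves implicit, and it is a worthwhile addition since the cancellation genuinely requires it.
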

In other words, the limiting distribution of $\widehat \Gamma_0$ depends only on the limiting distribution of the estimator for the bound that crosses zero, scaled by the derivative of that bound at $\Gamma_0$.

\subsection{Application to the effect differences model} \label{app:effect}

The results from the previous sections can be applied to the maximum leave-one-out effect differences model from the main paper. In that case, the bounds are 
\[
\left\{ \mathcal{L}(\Gamma), \mathcal{U}(\Gamma) \right\} = \left\{ \psi - \Gamma \max_{j=1}^{d} \left| \psi - \psi_{-j} \right|, \psi + \Gamma \max_{j=1}^{d} \left| \psi - \psi_{-j} \right|\right\}
\]
and their derivatives with respect to $\Gamma$ are 
\[
\left\{ \mathcal{L}^\prime(\Gamma), \mathcal{U}^\prime(\Gamma) \right\} = \left\{ - \max_{j=1}^{d} \left| \psi - \psi_{-j} \right|,  \max_{j=1}^{d} \left| \psi - \psi_{-j} \right|\right\}.
\]
Under Assumption~\ref{asmp:bounded}, Assumption~\ref{asmp:existence} then immediately follows and $\Gamma_0$ is the unique solution to the moment condition
\[
\Psi(\Gamma_0) = 0 \text{ where } \Psi(\Gamma) = \left( \psi - \Gamma \max_{j=1}^{d} \left| \psi - \psi_{-j} \right| \right) \left( \psi + \Gamma \max_{j=1}^{d} \left| \psi - \psi_{-j} \right| \right).
\]
The bounds are linear in $\Gamma$ and therefore are smooth and satisfy Assumptions~\ref{asmp:g0-diff}-\ref{asmp:smooth-ifs}. It only remains to establish that the estimators $\widehat{\mathcal{L}}$ and $\widehat{\mathcal{U}}$ satisfy Assumption~\ref{asmp:est-unif-conv}. These estimators can be constructed using the estimators from Theorem~\ref{thm:fx_conv}. Under the conditions of Theorem~\ref{thm:fx_conv}, the estimators $\widehat{\mathcal{L}}$ and $\widehat{\mathcal{U}}$ satisfy uniform convergence guarantees across $\mathcal{G}$, as in Assumption~\ref{asmp:est-unif-conv}.  Therefore, by Theorem~\ref{thm:gamma_zero}, we can construct an estimator $\widehat \Gamma_0$ for the minimum level of $\Gamma$ at which one of the bounds crosses zero, and it satisfies a Gaussian limiting distribution. As in Corollary~\ref{cor:gamma_zero}, we'll consider two cases, depending on whether $\mathcal{U}(\Gamma_0) = 0$ or $\mathcal{L}(\Gamma_0) = 0$.  When $\mathcal{U}(\Gamma_0) = 0$, the limiting distribution simplifies to
\begin{align*}
	\sqrt{n} ( \widehat \Gamma_0 - \Gamma_0 ) &\indist N \left(0, \left( \max_{j=1}^{d} \left| \psi - \psi_{-j} \right| \right)^{-2} \bbV\left\{ \varphi_u(Z; \Gamma_0) \right\} \right) \text{where} \\
	\varphi_u(Z; \Gamma) &= \phi(Z) +  \text{sign} (\psi- \psi_{-j^\prime}) \Gamma \big\{ \phi(Z) - \phi(Z_{-j^\prime}) \big\}.
\end{align*}
Meanwhile, when $\mathcal{L}(\Gamma_0) = 0$, the limiting distribution simplifies to
\begin{align*}
	\sqrt{n} ( \widehat \Gamma_0 - \Gamma_0 ) &\indist N \left(0, \left( \max_{j=1}^{d} \left| \psi - \psi_{-j} \right| \right)^{-2} \bbV\left\{ \varphi_l(Z; \Gamma_0) \right\} \right) \text{ where } \\
	\varphi_l(Z; \Gamma) &= \phi(Z) -  \text{sign} (\psi- \psi_{-j^\prime}) \Gamma \big\{ \phi(Z) - \phi(Z_{-j^\prime}) \big\}.
\end{align*}
Because of the simple construction of the bounds, the limiting distribution of the estimator for $\Gamma_0$ is scaled by the estimate of measured confounding directly.

\section{Estimation and inference with the outcome model} \label{app:outcome_est}

Here, we consider estimating the upper bound on the ATE in the outcome model (calibrated sensitivity model~\ref{csm:avg_lso_out}).  These are complementary results to those for the effect differences and odds ratio models in Section~\ref{sec:est}.
\begin{definition} \label{def:outcome} 
Let $\mathcal{U}(\Gamma)$ be as in \eqref{eq:out_id}. Construct an estimator for the upper bound as
\begin{equation} \label{eq:bound_estimator}
	\widehat{\mathcal{U}}(\Gamma) := \bbP_n \{ \widehat \phi(Z) \} + \Gamma \sum_{a \in \{0,1\}} \sqrt{\bbP_n \{ \widehat \xi_{1-a}(X) \}} \sqrt{\frac{1}{|\mathcal{S}|} \sum_{S\in \mathcal{S}} \bbP_n \{ \widehat \lambda_a(Z; S) \}},
\end{equation}
where 
\begin{itemize}
	\item 	where $\phi$ is the EIF of the adjusted mean difference, defined in \eqref{eq:mde_if},
	\item $\xi_a$ is the EIF of $\lVert \pi_{a}(X) \rVert_2^2$, defined in \eqref{eq:xi} in the supplementary materials, 
	\item $\lambda_a(Z; S)$ is the EIF of $\lVert \mu_a(X_{-S}) - \bbE \big\{ \mu_a(X) \mid A = 1-a, X_{-S} \} \rVert_2^2$, defined in \eqref{eq:lambda} in the supplementary materials, and 
	\item the estimated nuisance functions constituting $\widehat \phi, \widehat \xi$, and $\widehat \lambda$ are constructed on a separate sample.
\end{itemize}
\end{definition}
The estimator in \eqref{eq:bound_estimator} has the same structure as the estimator in \eqref{eq:fx_estimator} for the effect differences model --- the first term comes from estimating the adjusted mean difference while the other terms come from estimating the bound on $\left|\psi_\ast - \psi \right|$ implied by the model, i.e., the second summand in the partial identification result in \eqref{eq:out_id} in Proposition~\ref{prop:partial_id}. The form of the other terms follows from estimating the two functionals in the bound, $\lVert \pi_{1-a}(X) \rVert_2$ and $ \sqrt{\frac{1}{|\mathcal{S}|} \sum_{S \in \mathcal{S}} \left\lVert \mu_a(X_{-S}) - \bbE \big\{ \mu_a(X) \mid A = 1-a, X_{-S} \} \right\rVert_2^2}$.  

\bigskip

A crucial step in constructing the estimator for measured confounding is to construct an estimator for $\bbE \{ \mu_a(X) \mid A=1-a, X_{-S} \}$, which appears in the definition of the bound in \eqref{eq:out_id}. To construct an estimator for $\bbE \{ \mu_a(X) \mid A = 1-a, X_{-S}\}$, one could use a two-stage estimator where the EIF of $\bbE\{ \mu_a(X)\}$ is estimated and regressed against $A = 1-a, X_{-S}$ as a pseudo-outcome in a second-stage regression \citep{kennedy2023towards, mcclean2024nonparametric, fisher2023three, semenova2021debiased, diaz2018targeted}. Therefore, in the next result, we let $\widehat{\bbE}_n \{ \widehat \mu_a(X) \mid A=1-a, X_{-S} \}$ denote a generic regression estimator that regresses estimated EIF values of $\bbE\{ \mu_a(X) \}$ against $1-a$ and $X_{-S}$. Linear smoothers as second-stage regressions have been studied in this context, and could achieve $n^{-1/4}$ rates when $\bbE \{ \mu_a(X) \mid A = 1-a, X_{-S} \}$ is appropriately smooth as a function of $X_{-S}$ \citep{kennedy2023towards}.  Some analyses of two-stage estimators require further sample splitting, so that the nuisance functions constituting the EIF of $\bbE \{ \mu_a(X)\}$ are estimated on a separate sample from the second-stage regression. We will ignore that complication here, for ease of exposition.

\begin{restatable}{theorem}{thmoutconv} \label{thm:out_conv} \textbf{\emph{(Average LSO outcome model)}} 
Let $\mathcal{U}(\Gamma)$ and $\widehat{\mathcal{U}}(\Gamma)$ be as in Definition~\ref{def:outcome}.  Suppose Assumptions~\ref{asmp:positivity}-\ref{asmp:bounded} hold, and
\begin{enumerate}
	\item for $S \in \mathcal{S} \cup \emptyset$, $\widehat \pi_1(X_{-S})$ is bounded away from zero and one, \label{cond:out_one}
	\item $\widehat \phi$ is consistent in the sense that $\lVert \widehat \phi(Z) - \phi(Z) \rVert_2 \inprob 0$, \label{cond:out_two}
	\item for $a \in \{0,1\}$, $\widehat \xi_a$ is consistent in the sense that $\lVert \widehat \xi_a(X) - \xi_a(X) \rVert_2 \inprob 0$, \label{cond:out_three}
	\item for $a \in \{0,1\}$ and $S \in \mathcal{S}$, $\widehat \lambda_a(Z; S)$ is consistent in the sense that $\lVert \widehat \lambda_a(Z; S) - \lambda_a(Z; S) \rVert_2 \inprob 0$, \label{cond:out_four}
	\item for $a \in \{0,1\}$, $\lVert \widehat \pi_1(X) - \pi_1(X) \rVert_2 \lVert \mu_a(X) - \mu_a(X) \rVert_2 = o_\bbP(n^{-1/2})$, \label{cond:out_five}
	\item $\lVert \widehat \pi_1(X) - \pi_1(X) \rVert_2 = o_{\bbP}(n^{-1/4})$, and \label{cond:out_six}
	\item for $a \in \{0,1\}$ and $S \in \mathcal{S}$, \label{cond:out_seven}
	\begin{align}
		\hspace{-1in}o_\bbP(n^{-1/2}) &= \left( \left\lVert \widehat \mu_a(X_{-S}) - \mu_a(X_{-S}) \right\rVert_2 + \left\lVert \widehat \bbE_n \{ \widehat \mu_a(X) \mid A =1-a, X_{-S} \} - \bbE\{ \mu_a(X) \mid A =1-a, X_{-S} \} \right\rVert_2 \right)^2 \nonumber \\
		&+ \lVert \widehat \pi_{1-a}(X_{-S}) - \pi_{1-a}(X_{-S}) \rVert_2 \lVert \widehat \mu_a(X_{-S}) - \mu_a(X_{-S}) \rVert_{2} \nonumber \\
		&+\lVert \widehat \pi_{1-a}(X_{-S}) - \pi_{1-a}(X_{-S}) \rVert_2 \left\lVert \bbE \{ \widehat \mu_a(X) \mid A = 1-a, X_{-S} \} - \widehat \bbE_n  \{ \widehat \mu_a(X) \mid A = 1-a, X_{-S} \} \right\rVert_2 \nonumber  \\
		&+ \lVert \widehat \mu_a(X) - \mu_a(X) \rVert_2  \left( \left\lVert \widehat \pi_{1-a}(X) - \pi_{1-a}(X) \right\rVert_2 + \left\lVert \widehat \pi_{1-a}(X_{-S}) - \pi_{1-a}(X_{-S}) \right\rVert_2 \right). \label{eq:dr_cond2}
	\end{align} 
\end{enumerate}
Then,
\begin{align}
	\hspace{-0.5in} \widehat{\mathcal{U}}(\Gamma) - \mathcal{U}(\Gamma) = (\bbP_n - \bbE) \Bigg( \phi(Z) + \Gamma \Bigg[ \sum_{a \in \{0,1\}} &\left\lVert \pi_{1-a}(X) \right\rVert_2 \left\{ \frac{1}{|\mathcal{S}|} \sum_{S\in \mathcal{S}} \frac{\lambda_a(Z; S)}{2M_a}\right\} \nonumber \\
	&+ M_a \left\{ \frac{\xi_{1-a}(X) }{2 \lVert \pi_{1-a}(X) \rVert_2} \right\} \Bigg]  \Bigg)  + o_\bbP(n^{-1/2}). \label{eq:outcome_ral}
\end{align}
where
$$
M_a = \sqrt{\frac{1}{|\mathcal{S}|} \sum_{S \in \mathcal{S}} \lVert \mu_a(X_{-S}) - \bbE \{ \mu_a(X) \mid A =1-a, X_{-S} \} \rVert_2^2}.
$$
\end{restatable}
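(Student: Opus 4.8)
The plan is to read $\widehat{\mathcal{U}}(\Gamma)$ as a smooth function of three asymptotically linear ``building-block'' estimators and then apply the delta method. Write $\widehat\psi = \bbP_n\{\widehat\phi(Z)\}$, $\widehat v_{b} = \bbP_n\{\widehat\xi_{b}(X)\}$ for $b \in \{0,1\}$, and $\widehat m_a = \frac{1}{|\mathcal{S}|}\sum_{S\in\mathcal{S}}\bbP_n\{\widehat\lambda_a(Z;S)\}$ for $a\in\{0,1\}$, so that $\widehat{\mathcal{U}}(\Gamma) = g\big(\widehat\psi, \widehat v_0, \widehat v_1, \widehat m_0, \widehat m_1\big)$ where $g(p,v_0,v_1,m_0,m_1) = p + \Gamma\sum_{a\in\{0,1\}}\sqrt{v_{1-a}}\,\sqrt{m_a}$, and likewise $\mathcal{U}(\Gamma) = g\big(\psi, \lVert\pi_0\rVert_2^2, \lVert\pi_1\rVert_2^2, M_0^2, M_1^2\big)$. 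Positivity (Assumption~\ref{asmp:positivity}) gives $\lVert\pi_{1-a}(X)\rVert_2^2 \ge \varepsilon^2 > 0$ and Assumption~\ref{asmp:bounded} gives $M_a^2 > 0$, so $g$ is twice continuously differentiable with bounded derivatives on a neighborhood of the true argument; its gradient there has entries $\partial_p g = 1$, $\partial_{m_a} g = \Gamma\lVert\pi_{1-a}\rVert_2/(2M_a)$, and $\partial_{v_{1-a}} g = \Gamma M_a/(2\lVert\pi_{1-a}\rVert_2)$. Substituting these into the chain rule, together with the influence functions of the building blocks found below, reproduces exactly the right-hand side of \eqref{eq:outcome_ral}, so the argument reduces to (i) asymptotic linearity of $\widehat\psi$, $\widehat v_b$, and $\widehat m_a$, and (ii) control of the delta-method remainder.

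For (i), since every nuisance function is fit on an independent split, each building block obeys the usual von Mises decomposition: a leading term $(\bbP_n - \bbE)$ of the relevant (uncentered) influence function, an ``empirical-process'' term of the form $(\bbP_n-\bbE)\{\widehat\lambda_a - \lambda_a\}$ etc.\ that is $o_\bbP(n^{-1/2})$ by the $L_2$-consistency hypotheses (conditions~\ref{cond:out_two}--\ref{cond:out_four}) and sample splitting (no Donsker condition needed, cf.\ \citet{chernozhukov2018double}), and a conditional second-order bias term. For $\widehat\psi$ this is the standard AIPW argument, with bias bounded by $\lVert\widehat\pi_1 - \pi_1\rVert_2\lVert\widehat\mu_a - \mu_a\rVert_2 = o_\bbP(n^{-1/2})$ using conditions~\ref{cond:out_one} and~\ref{cond:out_five}; the influence function is $\phi(Z)-\psi$. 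For $\widehat v_b$, estimating $\bbE\{\pi_b(X)^2\}$, the bias is quadratic in $\widehat\pi_b-\pi_b$, hence $o_\bbP(n^{-1/2})$ by condition~\ref{cond:out_six} (noting $\widehat\pi_0 = 1-\widehat\pi_1$); the influence function is $\xi_b(X)-\lVert\pi_b\rVert_2^2$. For $\widehat m_a$, estimating $\frac{1}{|\mathcal{S}|}\sum_S\lVert\mu_a(X_{-S}) - \bbE\{\mu_a(X)\mid A=1-a,X_{-S}\}\rVert_2^2$, the second-order bias is precisely the collection of product-of-errors terms in \eqref{eq:dr_cond2} (condition~\ref{cond:out_seven}); here one must additionally expand the estimation error of the nested conditional mean $\bbE\{\mu_a(X)\mid A=1-a,X_{-S}\}$ --- which is fit by a pseudo-outcome regression --- into its ``first-stage'' error (using $\widehat\mu_a$ in place of $\mu_a$ inside the conditional expectation) and its ``second-stage'' regression error, which is the origin of the terms involving $\widehat\bbE_n\{\widehat\mu_a(X)\mid A=1-a,X_{-S}\}$. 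Unlike the other two models, $M_a$ here is an average rather than a maximum, so no separation assumption or maximum-estimation lemma is needed.

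For (ii), a second-order Taylor expansion of $g$ about the true argument $\theta = (\psi,\lVert\pi_0\rVert_2^2,\lVert\pi_1\rVert_2^2,M_0^2,M_1^2)$ gives $\widehat{\mathcal{U}}(\Gamma) - \mathcal{U}(\Gamma) = \nabla g(\theta)^\top(\widehat\theta-\theta) + R_n$ with $|R_n| \lesssim \lVert\widehat\theta-\theta\rVert^2$, valid on the event --- of probability tending to one --- that $\widehat\theta$ lies in the neighborhood where $g$ has bounded Hessian (in particular $\widehat v_b$ and $\widehat m_a$ are positive there, so the square roots are well defined). By (i) each coordinate of $\widehat\theta-\theta$ equals a centered sample average plus $o_\bbP(n^{-1/2})$ and is therefore $O_\bbP(n^{-1/2})$ by the central limit theorem (the influence functions being square-integrable), so $R_n = O_\bbP(n^{-1}) = o_\bbP(n^{-1/2})$. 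Since all building blocks are sample averages over the \emph{same} estimation fold, their leading terms combine additively, and substituting the expansions from (i) into $\nabla g(\theta)^\top(\widehat\theta-\theta)$ yields \eqref{eq:outcome_ral}. I expect the main obstacle to be step (i) for $\widehat m_a$: verifying that the one-step estimator $\bbP_n\{\widehat\lambda_a(Z;S)\}$ has a second-order remainder dominated by the terms in \eqref{eq:dr_cond2}. This requires the explicit form of the efficient influence function $\lambda_a$ (derived in Appendix~\ref{app:eifs}) and careful bookkeeping of how the nested, pseudo-outcome-regression-estimated conditional mean propagates into the bias --- in particular isolating which error products are genuinely bilinear in two distinct nuisance errors (hence second-order under a product rate) and which are quadratic in a single nuisance error (hence require the $n^{-1/4}$-type rates in conditions~\ref{cond:out_six}--\ref{cond:out_seven}). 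The remaining pieces --- the AIPW expansion for $\widehat\psi$, the quadratic-functional expansion for $\widehat v_b$, and the deterministic delta-method bookkeeping --- are routine.
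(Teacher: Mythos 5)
Your proposal is correct and follows essentially the same route as the paper's proof: a von Mises decomposition of each building block ($\widehat\psi$, $\bbP_n\{\widehat\xi_a\}$, $\bbP_n\{\widehat\lambda_a\}$) into a centered sample average, an empirical-process term killed by sample splitting plus the $L_2$-consistency conditions, and a second-order conditional bias controlled by conditions~\ref{cond:out_five}--\ref{cond:out_seven} via the EIF remainders from Lemma~\ref{lem:out_eifs}, followed by the delta method. The only cosmetic difference is that you package the final step as a single multivariate delta method on the smooth map $g$, whereas the paper applies the delta method coordinate-by-coordinate (square roots first, then the product), which changes nothing substantive.
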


Theorem~\ref{thm:out_conv} shows that the error of the estimator for the upper bound in the outcome model behaves like a centered sample average plus asymptotically negligible error under doubly robust conditions. Here, we give some intuition for the conditions.  Condition~\ref{cond:out_one} ensures that the bias of $\widehat{\mathcal{U}}(\Gamma)$ is bounded. Conditions~\ref{cond:out_two}, \ref{cond:out_three}, and \ref{cond:out_four} are weak consistency conditions for controlling the empirical process terms that arise in estimating the three components of the bound: $\psi$, $\lVert \pi_{a}(X) \rVert_2$, and $\sqrt{\frac{1}{|\mathcal{S}|} \sum_{S \in \mathcal{S}} \lVert \mu_a(X_{-S}) - \bbE \{ \mu_a(X) \mid A =1-a, X_{-S} \} \rVert_2^2}$.  These conditions would hold under weak nonparametric assumptions.  Condition~\ref{cond:out_five} is the usual doubly robust condition for controlling the conditional bias for estimating the adjusted mean difference $\psi$.  Conditions \ref{cond:out_six} and \ref{cond:out_seven} are stronger conditions, for controlling the other terms in the bias of $\widehat{\mathcal{U}}(\Gamma)$. Condition~\ref{cond:out_six} controls the conditional bias of the estimator for $\lVert \pi_{a}(X) \rVert_2$, while condition \ref{cond:out_seven} controls the conditional bias for estimating $\sqrt{\frac{1}{|\mathcal{S}|} \sum_{S \in \mathcal{S}} \lVert \mu_a(X_{-S}) - \bbE \{ \mu_a(X) \mid A =1-a, X_{-S} \} \rVert_2^2}$. Both are doubly-robust style conditions --- they consist of products of errors, and thus a sufficient condition for each to hold is that each nuisance function is estimated at only an $n^{-1/4}$ rate.

\begin{remark}
When the measured confounding is an average, as in the average LSO outcome model, $\sqrt{n}$ convergence guarantees on the nuisance function estimators are required across all covariate sets under consideration (in condition~\ref{cond:out_seven}).  This is necessary so that none of the biases from estimating each sub-quantification of measured confounding dominate asymptotically.  This is different from the maximum leave-one-out confounding, where $\sqrt{n}$ convergence guarantees are only required for the sub-quantification of measured confounding corresponding to the maximum (e.g., condition~\ref{cond:bias} in Theorem~\ref{thm:fx_conv}). In this sense, it is easier to estimate maximum measured confounding than average measured confounding, conditional on there being a unique maximum. However, the estimation procedures are the same whether measured confounding is a maximum or an average, in the sense that one would still aim to estimate the nuisance functions as accurately as possible with each covariate subset even when measured confounding is a maximum. This is because, a priori, one cannot know which covariate subset corresponds to the true maximum, and therefore it is still necessary to estimate all nuisance functions well in both cases.
\end{remark}

\section{Additional data analyses} \label{app:illustrations}

Here, we provide two additional data analyses, assessing the effect of exposure to violence on attitudes towards peace in Darfur.

\subsection{Data: peace attitudes in Darfur}

The dataset contains information about the attitudes towards peace of Darfurian villagers exposed to violence. In 2003 and 2004, the Darfurian government led a series of violent attacks against civilians, killing an estimated two hundred thousand people. Due to the quasi-randomness of the attacks, with this dataset one can attempt to answer the question of whether being directly injured or maimed in such attacks made people more or less likely to accept peace \citep{hazlett2020angry}. We use an example version of this dataset, which is publicly available in the \texttt{sensemakr} package in \texttt{R} \citep{cinelli2020sensemakr}. 

\bigskip

The data contains information on 1,276 Darfurian villagers. The treatment is a binary variable indicating whether the individual was physically injured during an attack, and the outcome is an index (continuous from 0 to 1) measuring pro-peace attitudes (with higher being more pro-peace). The covariates include the original village of the respondent (we grouped every village with under 10 respondents into ``Other''), gender, age (in years), whether they were a herder or a farmer, whether they voted in an earlier election before the conflict, and the size of their household.

\subsection{Methods}

We estimated bounds on the ATE using the maximum leave-one-out effect differences model (calibrated sensitivity model~\ref{csm:max_loo_fx}) and the maximum leave-one-out odds ratio model (calibrated sensitivity model~\ref{csm:max_loo_odds}).  For the effect differences model, we used the same methods as in Section~\ref{sec:illustrations_methods} of the body of the paper. Specifically, we constructed estimators for the bounds on the ATE according to Definition~\ref{def:effect}, with $\Gamma \in \{0.5, 1, \dots, 5\}$.  We constructed estimators for the adjusted mean difference with different covariate subsets using the \texttt{npcausal} package in \texttt{R} \citep{kennedy2023npcausal, rcore2021language}.  We used 5 splits and estimated the propensity score and outcome regression functions with the \texttt{SuperLearner}, stacking the sample average and a random forest from the \texttt{ranger} package with default tuning parameters \citep{van2007superlearner, wright2017ranger}. We constructed a pointwise confidence band for the upper bound on the ATE according to \eqref{eq:fx_upper_lim} and a pointwise confidence band for the lower bound analogously.  As described in Section~\ref{sec:inference}, we constructed one-sided 97.5\% confidence bands for the upper and lower bounds, and took their intersection to construct a 95\% confidence interval for the ATE. We also conducted a standard sensitivity analysis and post hoc calibration step, where we standardized the sensitivity parameter by estimated measured confounding.  

\bigskip

In the odds ratio model, we constructed estimators for the bounds on the ATE according to Definitions~\ref{def:odds} and \ref{def:odds_bound}, with $\exp(\Gamma) \in \{ 1.25, 1.5, 1.75, 2 \}$.  We estimated the best projection of measured confounding using a logistic regression with no interactions (see Definition~\ref{def:odds}). We also assumed that the observed maxima and minima in the covariate data were the true bounds of the covariate support, which allowed us to extend Theorem~\ref{thm:odds_conv} to a known and bounded support --- in this case, we estimated measured confounding as the maximum absolute value across covariates of each covariate's estimated coefficient multiplied by its range.  We then estimated the upper bound on the ATE using the algorithm in \citet{yadlowsky2022bounds}, with sieve estimators for the nuisance functions.  We constructed one-sided 97.5\% confidence bands for the upper and lower bounds and a 95\% confidence band for the ATE using Wald-type confidence intervals. We estimated the limiting variance with the nonparametric bootstrap, with $100$ resampling iterations of size $1,000$.  

\medskip

In both analyses, we also conducted a standard sensitivity analysis and post hoc calibration step, where we standardized the sensitivity parameter by estimated measured confounding.  

\subsection{Results}

Here, we report the results from each analysis.  The point estimate and confidence interval for the ATE assuming no unmeasured confounding are the same. We estimated a positive and significant ATE of $0.06$ (95\%: CI $[0.02, 0.1]$) which, interpreted causally, says that exposure to violence increased people's preference of peace by 0.06 on average (0.06 is a unit-less value --- the outcome is an index continuous on $[0,1]$).

\bigskip

With the effect differences model, we estimated that the respondent's original village was the most impactful confounder, and that the absolute change in the adjusted mean difference without it included as a covariate was $0.01$ with 95\% CI $[0.002, 0.02]$.  Meanwhile, with the odds ratio model, we again estimated that the respondent's original village was the most impactful confounder.  The maximum estimated confounding is $3.34$ (95\% CI: $[0,20]$) (this is a Wald-type confidence interval truncated below at $0$ where the variance was calculated across $100$ bootstrap resamples).  Note also that the respondent's village was the most impactful measured confounder in 99 out of the 100 bootstrap resamples.

\bigskip

Figure~\ref{fig:darfur} shows the ATE estimates, estimates for bounds on the ATE, and 95\% confidence bands for the ATE --- Figure~\ref{fig:darfur_fx} is for the effect differences model while Figure~\ref{fig:darfur_odds} is for the odds ratio model.  The x-axis is the level of the sensitivity parameter, $\Gamma$, in the calibrated sensitivity model, and the sensitivity parameter standardized by estimated measured confounding, $\gamma / \widehat M$, in the standard sensitivity model. The y-axis is at the level of the causal effect.  The horizontal dashed lines are the ATE estimates, and the horizontal dotted lines are at zero.  The dot-dash lines are the estimates of the upper and lower bounds on the ATE.  By the invariance property discussed in Proposition~\ref{prop:invariance} in Section~\ref{sec:partial_id}, the point estimates for these bounds are the same in the calibrated sensitivity model and sensitivity model.  However, the confidence intervals for the bounds differ.  The confidence intervals for the calibrated sensitivity model are solid, while the confidence intervals for the sensitivity model are long dashes.   

\bigskip

The calibrated sensitivity results in Figure~\ref{fig:darfur_fx} can be interpreted in the following way. The estimated bounds on the ATE do not intersect zero for unmeasured confounding less than five times measured confounding, and the confidence interval for the ATE intersects zero when $\Gamma \approx 2$, so the statistical significance of the estimated effect would only be nullified for unmeasured confounding at least twice as big as measured confounding. In this case, the measured confounding is the maximum change in the adjusted mean difference from leaving out one covariate. The most impactful observed confounder was the respondent's home village, and it changed the adjusted mean difference by 0.01.  The results in Figure~\ref{fig:darfur_odds} can be interpreted in a similar way.

\bigskip

The relationship between the calibrated sensitivity results and standard sensitivity results are similar in each analysis, in the sense that incorporating uncertainty due to estimating measured confounding demonstrates that the causal effect estimate is less robust to unmeasured confounding (i.e., the solid confidence intervals encompass the dashed confidence intervals).  However, with both the standard and calibrated sensitivity models, the odds ratio model suggests the estimates are far less robust to unmeasured confounding than what is suggested by the effect differences model.

\begin{figure}[h]
\centering
\begin{subfigure}{.49\textwidth} 
	\centering
	\includegraphics[width=\linewidth]{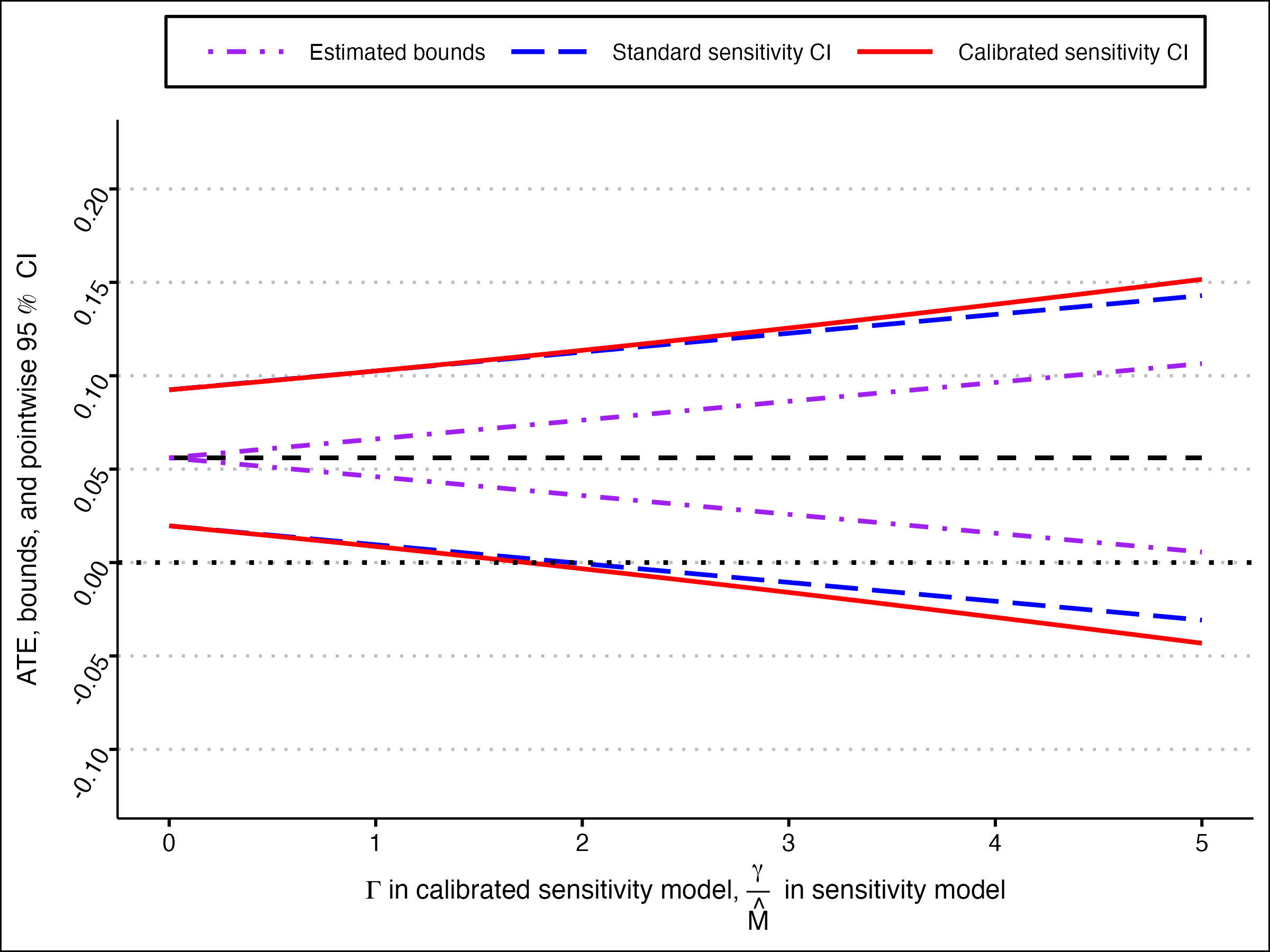} % first figure itself
	\caption{Effect differences model.}
	\label{fig:darfur_fx}
\end{subfigure} \hfill
\begin{subfigure}{.49\textwidth} 
	\centering
	\includegraphics[width=\linewidth]{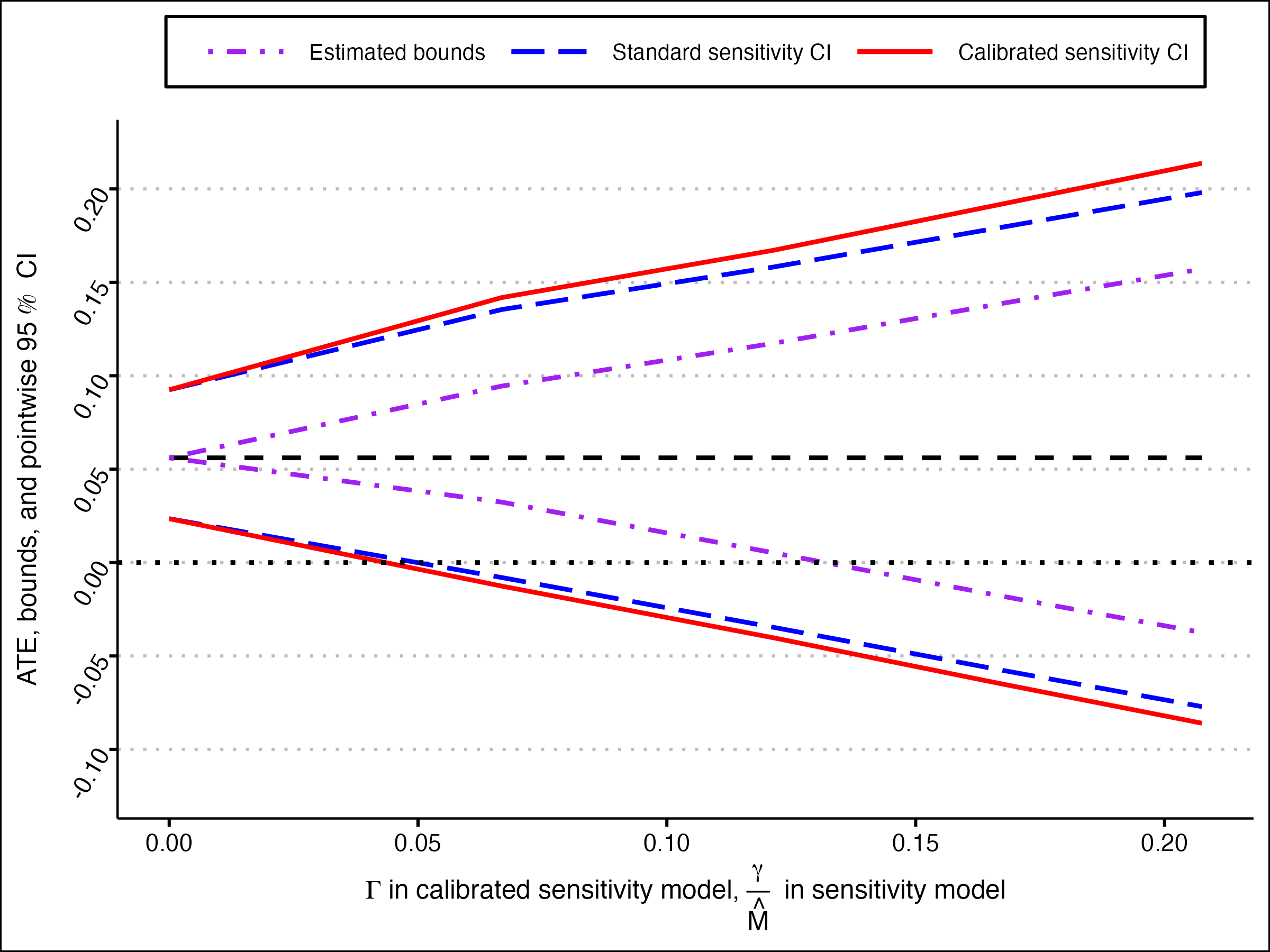} % second figure itself
	\caption{Odds ratio model.}
	\label{fig:darfur_odds}
\end{subfigure}
\caption{\textbf{The effect of exposure to violence on preference for peace in Darfur.} These figures show estimates for the ATE, estimates for bounds on the ATE, and 95\% confidence intervals for the ATE (y-axis) at different levels of unmeasured confounding (x-axis).} 
\label{fig:darfur}
\end{figure}

\subsection{Discussion: the odds ratio model}

Here, we further discuss the results from the odds ratio model. Without standardizing by measured confounding, the confidence interval in the standard sensitivity model intersects the x-axis for $\gamma \approx 0.11$ (multiplying the point where the long dashed lines in Figure~\ref{fig:darfur_odds} intersect the x-axis by the estimate for measured confounding $\widehat M$). On the exponential scale, which this model is usually expressed in, that corresponds to a sensitivity parameter of roughly $1.12$. Arguably, this indicates much less robustness to unmeasured confounding than is observed in the the effect differences analysis. The lack of robustness to unmeasured confounding may be an artifact of the estimation process --- for computational simplicity, we manually constructed sieve estimators for the nuisance functions $\nu_a^\pm$ and $\theta_a^\pm$ using a linear model with no interactions.  This may be a poor approximation of the true functions, and better estimators could lead to different results. 

\bigskip

Moreover, when measured confounding is incorporated in the calibrated sensitivity model, the results become even less robust to unmeasured confounding --- the calibrated confidence intervals intersect the x-axis for $\Gamma \approx 0.025$, suggesting unmeasured confounding which is only a small multiple of measured confounding would be enough to overturn the conclusions of the study.  Part of the reason this occurs is because the maximum LOO confounding is high and has high variance.  The maximum estimated confounding is $3.34$, with a huge confidence interval of $[0,20]$ (this is a Wald-type confidence interval truncated below at $0$ where the variance was calculated across $100$ bootstrap resamples).  The estimated measured confounding may be high because the best projection estimator is a poor approximation of the true propensity score.  The high variance across bootstrap resamples reinforces this and suggests this estimator is unstable. More generally, understanding why these results indicate such different conclusions from the effect differences analysis is a topic of future work.

\section{Results in Section~\ref{sec:partial_id}} \label{app:id_results}

\subsubsection*{Proposition~\ref{prop:partial_id}}

\begin{newproof}
The proof follows for the effect differences model directly from the definition of the model, while for the odds ratio model it follows by \citet{yadlowsky2022bounds} Lemma 2.1.

\bigskip

For the outcome model,
\begin{align*}
	\bbE(Y^1 - Y^0) &= \bbE \left\{ \bbE(Y^1 -Y^0 \mid X) \right\} \\
	&=  \bbE \left\{ \bbE(Y^1 -Y^0 \mid A = 1, X) \pi_1(X) +  \bbE(Y^1 -Y^0 \mid A = 0, X) \pi_0(X) \right\} \\
	&= \bbE \{ \bbE(Y^1 \mid A = 1, X) - \bbE(Y^0 \mid A = 0, X) \} \\
	&\hspace{0.2in}- \bbE \left\{ \bbE(Y^1 \mid A = 1, X) \pi_0(X) + \bbE(Y^0 \mid A = 1, X) \pi_1(X) \right\}\\
	&\hspace{0.2in}+ \bbE\left\{ \bbE(Y^1 \mid A = 0, X) \pi_0 (X) + \bbE(Y^0 \mid A=0, X) \pi_1 (X) \right\} \\
	&= \psi + \sum_{a \in \{ 0, 1\}} (2a-1) \bbE \left[ \left\{ \bbE(Y^a \mid A = 1-a, X ) - \mu_a(X) \right\} \pi_{1-a}(X) \right].
\end{align*}
Therefore, by H\"{o}lder's inequality,
$$
\left| \psi^\ast - \psi \right| \leq \sum_{a \in \{0,1\}} \Big\lVert \bbE(Y^a \mid A = 1-a, X ) - \mu_a(X)  \Big\rVert_p \Big\lVert \pi_{1-a}(X) \Big\rVert_{r}
$$
where $r = 1$ if $p = \infty$ and $r = \frac{p}{p-1}$ otherwise.  Under calibrated sensitivity model~\ref{csm:avg_lso_out}, the result follows by setting $p = r = 2$ and imposing the model. 
\end{newproof}

\subsubsection*{Proposition \ref{prop:invariance}}

\begin{newproof}
The calibrated sensitivity model imposes the bound $U \leq \Gamma M$. Because $\Gamma \in (0, \infty)$ and $0 < M < \infty$ by assumption, $\Gamma M \in (0, \infty)$. Therefore, this model implies an upper bound on the causal effect of interest: $u(\Gamma M)$ (substitute $\gamma^\prime = \Gamma M$ and use the assumption/construction of the sensitivity model). Because $u(\cdot)$ is a deterministic function, it maps the same inputs to the same outputs, and the result follows.
\end{newproof}

\subsection{Differentiability with the odds ratio model}

Here, we provide the details on the derivatives considered implicitly in Lemma~\ref{lem:show_diff_monotone}. For analyzing the odds ratio model, we require several mild continuity and boundedness assumptions. For this purpose, we define one more nuisance function and a loss function:
\begin{align}
\omega_\theta (Y; t) &:= \big( Y - \theta \big) \one \big( Y > \theta \big) + t\big( Y - \theta \big) \one \big( Y < \theta \big), \label{eq:omega} \\
\ell_t^- (\theta, y) &:= \frac{1}{2} \Big[ \big\{ (y - \theta) \one(y > \theta) \big\}^2 + t \big\{ (y - \theta) \one(y < \theta) \}^2 \Big]\text{, and} \\
\ell_t^+ (\theta, y) &= \ell_{1/t}^-(\theta, y).
\end{align}
Next, we give the continuity and boundedness conditions required (see Lemmas 2.2 and 2.3 and Assumption A.1 in \citet{yadlowsky2022bounds}).
\begin{assumption} \label{asmp:continuity}
\emph{Smooth and bounded data generating process: } 
\begin{enumerate}
	\item $f(x)$, the covariate density, $\pi_1(x)$, and $\theta_a^-(x; t)$ (defined in \eqref{eq:theta}) and $\theta_a^+(x; t)$ for $a \in \{0,1\}$ are continuous in $x$,
	\item for $a \in \{0,1\}$, $p(y \mid A = a, x)$, the conditional density of $Y$ given $A = a$ and $X = x$ is continuous in $y$ and $x$, upper bounded, and has bounded support,
	\item $Y$ is bounded, 
	\item for $a \in \{0,1\}$ and all $x \in \bbR^d$, $\theta_a^-( x; t )$ and $\theta_a^+(x; t)$ are bounded, 
	\item for $a \in \{0,1\}$ and bounded $t$,  $\theta_a^-( x; t )$ and $\theta_a^+(x; t)$ are absolutely integrable in $x$, and
	\item for $a \in \{0,1\}$, $(r, x) \mapsto \bbE \left\{ \ell_{\exp(\Gamma M)}^- \big( r, Y \big) \mid A = a, x \right\} $ and $(r, x) \mapsto \bbE \left\{ \ell_{\exp(\Gamma M)}^+ \big( r, Y \big) \mid A = a, x \right\}$ are continuous on $\bbR \times \bbR^d$, and $\bbE \left( \ell_{\exp(\Gamma M)}^- \Big[ \theta_a^- \{ X; \exp(\Gamma M) \}, Y \Big] \mid A = a \right) < \infty$ and \\ $\bbE \left( \ell_{\exp(\Gamma M)}^+ \Big[ \theta_a^+ \{ X; \exp(\Gamma M) \}, Y \Big] \mid A = a \right) < \infty$. 
\end{enumerate}
\end{assumption}
This assumption imposes mild continuity and boundedness conditions. Condition 1, the continuity of $p(y \mid A = a, x)$, the boundedness of $Y$, and the absolute integrability of $\theta_a^{\pm}$ are beyond what is assumed in \citet{yadlowsky2022bounds}. Condition 2 is contained in Assumption A.1 of that paper, while condition 4 is in Lemma 2.2 and condition 6 is in Lemma 2.3. With this assumption in hand, we can now state and prove the following result --- Lemma~\ref{lem:diff} --- which implies Lemma~\ref{lem:show_diff_monotone}. 
\begin{restatable}{lemma}{lemmadiff} \label{lem:diff}
\emph{\textbf{(Differentiable bounds in the odds ratio model)}} Under the setup of Lemma~\ref{lem:show_diff_monotone},
\begin{align} 
	\frac{\partial}{\partial M} \mathcal{L}(\Gamma) &= - \Gamma \exp (\Gamma M) \bbE \left\{  \frac{ \pi_0(X) f_1 ( X; \theta_1^{-} )}{\nu_1^- (X)} \right\} -  \Gamma \bbE \left\{ \pi_1(X) \frac{\widetilde{f}_0 (X; \theta_0^+ )}{\nu_0^+ (X)}  \right\} \text{ and }\label{eq:odds_diff_lower} \\
	\frac{\partial}{\partial M} \mathcal{U}(\Gamma) &= \Gamma \bbE \left\{ \pi_0(X) \frac{\widetilde{f}_1 ( X; \theta_1^+ )}{\nu_1^+ (X)} \right\} + \Gamma \exp(\Gamma M) \bbE \left\{ \pi_1(X) \frac{f_0 ( X; \theta_0^- )}{\nu_0^-(X)} \right\}, \label{eq:odds_diff_upper}
\end{align}
where $\theta_a^\pm \equiv \theta_a^\pm \{ X; \exp(\Gamma M ) \}$; $\theta_a^\pm$ and $M$ are defined in \eqref{eq:theta} and \eqref{eq:M_odds}, respectively; and 
\begin{align}
	f_a(x; \theta) &= \bbE \left\{ \left( \theta - Y \right) \one\left( \theta > Y \right) \mid A = a, X = x \right\}, \label{eq:f_func} \\ 
	\widetilde f_a(x; \theta) &= \bbE \big\{ ( Y - \theta ) \one(Y > \theta) \mid A = a, X = x\big\}, \label{eq:f_tilde} \\
	\nu_a^-(x) \equiv \nu_a^- \{ x; \theta,  t = \exp(\Gamma M) \} &= \bbP ( Y \geq \theta \mid A = a, X = x) + t \bbP ( Y < \theta \mid A = a, X = x )\text{, and} \label{eq:nu_minus} \\
	\nu_a^+(x) \equiv \nu_a^+ \{ x; \theta, t = \exp(\Gamma M) \} &= \bbP ( Y > \theta \mid A = a, X=x) + \frac{1}{t} \bbP(Y < \theta \mid A = a, X=x). \label{eq:nu_plus}
\end{align}
Moreover, the derivative of the upper bound is positive and bounded while the the derivative of the lower bounds is negative and bounded. 
\end{restatable}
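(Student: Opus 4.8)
The plan is to reduce the derivative of each bound to derivatives of the weighted-quantile nuisances $\theta_a^{\pm}(X;t)$ with respect to $t$, evaluated at $t=\exp(\Gamma M)$, and then chain through $t$ using $\tfrac{\partial}{\partial M}\exp(\Gamma M)=\Gamma\exp(\Gamma M)$. By iterated expectations, $\mathcal{U}(\Gamma)=\bbE\{AY-(1-A)Y\}+\bbE\{\pi_0(X)\theta_1^{+}(X;t)\}-\bbE\{\pi_1(X)\theta_0^{-}(X;t)\}$, and analogously $\mathcal{L}(\Gamma)=\bbE\{AY-(1-A)Y\}+\bbE\{\pi_0(X)\theta_1^{-}(X;t)\}-\bbE\{\pi_1(X)\theta_0^{+}(X;t)\}$. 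The first summand does not depend on $M$, so it suffices to differentiate $\bbE\{\pi_a(X)\theta_b^{\pm}(X;t)\}$ under the expectation over $X$ and then apply the chain rule in $t$. Differentiation under $\bbE_X$ is justified by dominated convergence once the pointwise derivative $\partial\theta_b^{\pm}/\partial t$ is in closed form and bounded uniformly for $t$ near $\exp(\Gamma M)$, which will follow from boundedness of $Y$ (Assumption~\ref{asmp:continuity}.3), $\pi_a\le 1$, and the lower bound on $\nu_a^{\pm}$ established below.

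The heart of the argument is pointwise differentiability of $\theta_a^{\pm}(x;t)$ in $t$. By Lemmas 2.2--2.3 of \citet{yadlowsky2022bounds} (whose hypotheses are guaranteed by Assumptions~\ref{asmp:positivity} and~\ref{asmp:continuity}), $\theta_1^{-}(x;t)$ is characterized as the solution of the first-order condition $\bbE\{\omega_\theta(Y;t)\mid A=1,X=x\}=0$, which with the notation in \eqref{eq:f_func}--\eqref{eq:f_tilde} reads $\widetilde f_1(x;\theta)=t\,f_1(x;\theta)$; analogous moment equations hold for $\theta_0^{-}$, and for $\theta_a^{+}$ with $t$ replaced by $1/t$ since $\ell_t^{+}=\ell_{1/t}^{-}$. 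I would then apply the implicit function theorem. Using continuity and boundedness of the conditional density $p(y\mid A=a,x)$ (Assumption~\ref{asmp:continuity}.2) to differentiate under the integral, $\tfrac{\partial}{\partial\theta}\widetilde f_a(x;\theta)=-\bbP(Y>\theta\mid A=a,X=x)$ and $\tfrac{\partial}{\partial\theta}f_a(x;\theta)=\bbP(Y<\theta\mid A=a,X=x)$ (the boundary terms vanish because the integrands vanish at $Y=\theta$). Differentiating $\widetilde f_1(x;\theta_1^{-})=t\,f_1(x;\theta_1^{-})$ in $t$ and solving gives $\tfrac{\partial}{\partial t}\theta_1^{-}(x;t)=-f_1(x;\theta_1^{-})/\nu_1^{-}(x)$, and similarly $\tfrac{\partial}{\partial t}\theta_0^{-}(x;t)=-f_0(x;\theta_0^{-})/\nu_0^{-}(x)$, $\tfrac{\partial}{\partial t}\theta_1^{+}(x;t)=\widetilde f_1(x;\theta_1^{+})/\{t\,\nu_1^{+}(x)\}$, and $\tfrac{\partial}{\partial t}\theta_0^{+}(x;t)=\widetilde f_0(x;\theta_0^{+})/\{t\,\nu_0^{+}(x)\}$, with $\nu_a^{\pm}$ exactly as in \eqref{eq:nu_minus}--\eqref{eq:nu_plus}. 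Since $t=\exp(\Gamma M)>1$ (as $\Gamma,M>0$ under Assumption~\ref{asmp:bounded}) we have $\nu_a^{-}\ge\min(1,t)=1$ and $\nu_a^{+}\ge\min(1,1/t)=1/t>0$, so the denominators never vanish and the implicit function theorem applies. Substituting into $\tfrac{\partial}{\partial M}\mathcal{U}(\Gamma)=\Gamma\exp(\Gamma M)\,\bigl[\bbE\{\pi_0(X)\,\tfrac{\partial}{\partial t}\theta_1^{+}\}-\bbE\{\pi_1(X)\,\tfrac{\partial}{\partial t}\theta_0^{-}\}\bigr]$ and using $t\cdot\tfrac{\partial}{\partial t}\theta_1^{+}=\widetilde f_1(X;\theta_1^{+})/\nu_1^{+}(X)$ yields \eqref{eq:odds_diff_upper}; the identical computation for $\mathcal{L}(\Gamma)$ yields \eqref{eq:odds_diff_lower}.

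For the sign and boundedness claims: each term of \eqref{eq:odds_diff_upper} is a product of $\Gamma>0$, $\pi_a(X)\in[\varepsilon,1-\varepsilon]$ (Assumption~\ref{asmp:positivity}), the nonnegative quantities $f_a,\widetilde f_a\ge0$, and $1/\nu_a^{\pm}>0$, so $\tfrac{\partial}{\partial M}\mathcal{U}(\Gamma)\ge0$; symmetrically $\tfrac{\partial}{\partial M}\mathcal{L}(\Gamma)\le0$. Boundedness above follows since $Y$ is bounded (so $f_a,\widetilde f_a$ are bounded), $\nu_a^{\pm}$ is bounded below, $\pi_a\le1$, and $\exp(\Gamma M)<\infty$. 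Strictness follows from the moment equations together with the assumed continuous Lebesgue conditional density: if, say, $\widetilde f_1(x;\theta_1^{+})=0$ then $\theta_1^{+}$ sits at or above the top of the support of $Y\mid A{=}1,X{=}x$, which by $t\widetilde f_1=f_1$ forces $f_1(x;\theta_1^{+})=0$ and hence $Y$ degenerate given $(A{=}1,X{=}x)$, contradicting the existence of a continuous density; thus $\widetilde f_1(X;\theta_1^{+})>0$ (and likewise $f_0(X;\theta_0^{-})>0$) almost surely, so with $\pi_a\ge\varepsilon$ the derivatives are bounded away from $0$, giving $-C<\tfrac{\partial}{\partial M}\mathcal{L}(\Gamma)<0<\tfrac{\partial}{\partial M}\mathcal{U}(\Gamma)<C$.

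The main obstacle is the pointwise differentiability step: establishing that $\theta_a^{\pm}(x;t)$ is $C^1$ in $t$ with the stated formula requires (i) differentiating $f_a,\widetilde f_a$ under the integral sign --- precisely why Assumption~\ref{asmp:continuity} strengthens the hypotheses of \citet{yadlowsky2022bounds} to include continuity of $p(y\mid A=a,x)$ and boundedness of $Y$ --- and (ii) verifying the implicit-function-theorem regularity condition, namely that $\nu_a^{\pm}$, the $\theta$-derivative of the moment function, is bounded away from zero uniformly in $x$. Both are handled using the continuity/boundedness conditions of Assumption~\ref{asmp:continuity} and positivity; the remaining steps (iterated expectations, chain rule in $t$, dominated convergence for differentiating under $\bbE_X$) are routine.
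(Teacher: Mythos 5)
Your proposal is correct and follows the same overall architecture as the paper: express each bound via iterated expectations as $\bbE\{AY-(1-A)Y\}$ plus terms of the form $\bbE\{\pi_a(X)\,\theta_b^{\pm}(X;t)\}$, reduce everything to the pointwise derivative $\partial\theta_b^{\pm}/\partial t$ at $t=\exp(\Gamma M)$, chain through $\partial t/\partial M=\Gamma\exp(\Gamma M)$, and interchange derivative and expectation by dominated convergence. The one genuine difference is in how the pointwise derivative is obtained. The paper (Lemmas~\ref{lem:odds_deriv}--\ref{lem:upper_diff}) solves the moment equation for $t$ as an explicit function of $\theta$, namely $t(\theta)=\widetilde f_1(x;\theta)/f_1(x;\theta)$, differentiates that quotient via Leibniz's rule, and applies the inverse function theorem; this produces $\partial\theta_1^-/\partial t=-f_1^2/g_1$ and then requires a separate algebraic identity (Lemma~\ref{lem:odds_simplify}, $g_1=\nu_1^- f_1$, which itself uses the moment equation) to reach the stated form $-f_1/\nu_1^-$. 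You instead apply the implicit function theorem directly to $\widetilde f_1(x;\theta)-t\,f_1(x;\theta)=0$, which lands on $-f_1/\nu_1^-$ in one step and makes the analogue of Lemma~\ref{lem:odds_simplify} unnecessary; your verification that the $\theta$-derivative of the moment function equals $-\nu_1^-(x)\le -\min(1,t)<0$ plays exactly the role of the paper's check that $\partial t/\partial\theta$ is negative and bounded. Both routes rest on the same ingredients from Assumption~\ref{asmp:continuity} (differentiating $f_a,\widetilde f_a$ under the integral, non-degeneracy of $Y\mid A{=}a,X{=}x$), and your strictness argument for $f_a,\widetilde f_a>0$ mirrors the paper's. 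Your version is marginally cleaner; the paper's buys an explicit formula for $t(\theta)$ and the intermediate quantity $g_a$, which it reuses elsewhere.
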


\noindent Before proving Lemma~\ref{lem:diff}, we state and prove several helper lemmas.

\begin{lemma} \label{lem:odds_deriv}
Under the setup of Lemma~\ref{lem:diff}, for $t^\prime > 0$,
\begin{equation}
	\frac{\partial}{\partial t} \theta_1^-(X; t) \Big|_{t = t^\prime} = -\frac{f_1(X; \theta_1^-)^2}{g_1 \big( X; \theta_1^- \big) },
\end{equation}
where $\theta_1^- := \theta_1^-(X; t^\prime)$ on the right-hand side, $f_a(x; \theta)$ is defined in \eqref{eq:f_func}, and 
\begin{equation}
	g_a (x; \theta) = f_a(x; \theta) \bbP \left( Y > \theta \mid A = a, X = x \right) + \widetilde f_a(x; \theta) \bbP\left( Y < \theta \mid A = a, X = x\right), \label{eq:g_func} 
\end{equation}
where $\widetilde f$ is defined in \eqref{eq:f_tilde}.
\end{lemma}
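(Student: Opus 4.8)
The plan is to differentiate in $t$ the implicit equation characterizing $\theta_1^-(X;t)$ as the worst-case conditional mean, and then simplify using that same characterization. The starting point is the optimality structure behind \eqref{eq:theta}. Because, under Assumption~\ref{asmp:continuity}, $Y$ has a continuous conditional density given $(A=1,X)$, the infimum in \eqref{eq:theta} is attained by a ``bang-bang'' likelihood ratio $L$ that takes its maximal value $tc$ on $\{Y<\theta\}$ and its minimal value $c$ on $\{Y\ge\theta\}$, where $c=1/\nu_1^-(X)$ is pinned down by $\bbE\{L(Y)\mid A=1,X\}=1$ and the threshold $\theta$ is chosen to minimize the resulting value --- this is the picture developed in \citet{yadlowsky2022bounds} (Lemmas~2.2--2.3). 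Writing $V(\theta)$ for that value as a function of the threshold and differentiating its numerator and denominator under the integral sign, the stationarity condition $V'(\theta)=0$ reduces to $\theta=V(\theta)$; hence the optimal threshold equals the optimal value $\theta_1^-(X;t)$, and equivalently
\begin{equation} \label{eq:loo-deriv-star}
	\widetilde f_1\big( X;\theta_1^-(X;t) \big) = t\, f_1\big( X;\theta_1^-(X;t) \big),
\end{equation}
with $f_1,\widetilde f_1$ as in \eqref{eq:f_func}--\eqref{eq:f_tilde}.

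Granting \eqref{eq:loo-deriv-star}, the remaining steps are routine. First, by the Leibniz rule (valid since the conditional density of $Y$ is continuous with bounded support), $\tfrac{\partial}{\partial\theta}\widetilde f_1(X;\theta)=-\bbP(Y>\theta\mid A=1,X)$ and $\tfrac{\partial}{\partial\theta}f_1(X;\theta)=\bbP(Y<\theta\mid A=1,X)$. Next, set $G(t,\theta):=\widetilde f_1(X;\theta)-t\,f_1(X;\theta)$, so that \eqref{eq:loo-deriv-star} reads $G\big(t,\theta_1^-(X;t)\big)=0$; then $\partial_t G(t,\theta)=-f_1(X;\theta)$ and $\partial_\theta G(t,\theta)=-\big[\bbP(Y>\theta\mid A=1,X)+t\,\bbP(Y<\theta\mid A=1,X)\big]=-\nu_1^-(X)<0$ by positivity of the density. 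The implicit function theorem therefore applies and yields differentiability of $t\mapsto\theta_1^-(X;t)$ with
\begin{equation*}
	\frac{\partial}{\partial t}\theta_1^-(X;t)\Big|_{t=t^\prime}=-\frac{\partial_t G}{\partial_\theta G}\bigg|_{\theta=\theta_1^-(X;t^\prime)}=-\frac{f_1\big(X;\theta_1^-(X;t^\prime)\big)}{\nu_1^-(X)}.
\end{equation*}
Finally, I would rewrite the denominator via \eqref{eq:loo-deriv-star}: since $Y$ is continuous, $\nu_1^-(X)=\bbP(Y>\theta_1^-\mid A=1,X)+t^\prime\bbP(Y<\theta_1^-\mid A=1,X)$, so multiplying by $f_1(X;\theta_1^-)$ and substituting $t^\prime f_1=\widetilde f_1$ gives $f_1(X;\theta_1^-)\,\nu_1^-(X)=g_1(X;\theta_1^-)$ with $g_1$ as in \eqref{eq:g_func}. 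Hence $-f_1/\nu_1^-=-f_1^2/g_1$, which is the asserted identity. Equivalently, one can bypass the threshold derivatives entirely and apply the envelope theorem to $\theta_1^-(X;t)=\min_\theta V(\theta,t)$, which directly gives $\partial_t\theta_1^-(X;t)=-f_1(X;\theta_1^-)/\nu_1^-(X)$.

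The main obstacle is the first paragraph: rigorously establishing \eqref{eq:loo-deriv-star} together with the differentiability of $t\mapsto\theta_1^-(X;t)$. This requires confirming that the bang-bang $L$ is genuinely optimal (a standard but not wholly trivial argument for Rosenbaum-type bounds, which I would import from \citet{yadlowsky2022bounds}), that $V$ has a unique stationary point, and that differentiation under the integral sign is licensed by the density regularity in Assumption~\ref{asmp:continuity}. A secondary subtlety is the degenerate case $f_1(X;\theta_1^-)=0$: then \eqref{eq:loo-deriv-star} forces $\widetilde f_1(X;\theta_1^-)=0$ and hence $g_1(X;\theta_1^-)=0$, so the ratio $f_1^2/g_1$ is to be read as $0$, consistent with $\partial_t\theta_1^-(X;t)=0$ there; positivity of the conditional density on the relevant region excludes this anyway. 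Everything else is a Leibniz-rule computation plus an application of the implicit function theorem.
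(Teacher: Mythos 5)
Your proposal is correct and follows essentially the same route as the paper: both start from the Yadlowsky et al.\ moment condition $\widetilde f_1(X;\theta_1^-) = t\,f_1(X;\theta_1^-)$, differentiate it via Leibniz's rule, and invert (you use the implicit function theorem on $G(t,\theta)=\widetilde f_1 - t f_1$, the paper writes $t(\theta)=\widetilde f_1/f_1$ explicitly and applies the quotient rule plus the inverse function theorem, which is the same computation). The only cosmetic difference is that you land on $-f_1/\nu_1^-$ and convert to $-f_1^2/g_1$ via the identity $f_1\nu_1^- = g_1$, which the paper proves separately as Lemma~\ref{lem:odds_simplify} and uses downstream anyway.
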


\begin{newproof}
We establish the result using the inverse function theorem, expressing $t$ as a function of $\theta$. Under Assumption~\ref{asmp:continuity}, Lemma 2.3 from \citet{yadlowsky2022bounds} applies, such that $\theta_1^- (X; t)$ is the unique minimizer (up to measure-zero sets) of $\bbE \{ 	\ell_t^- (\theta, Y) \mid A = 1, X \}$ and solves
\begin{equation} 
	\bbE \{ \omega_{\theta_1^- (X; t)} (Y; t) \mid A = 1 , X \} = 0.
\end{equation}
In what follows let $\theta \equiv \theta_1^-(X; t)$ to further simplify notation. Equivalently, the moment condition says that
\begin{equation} \label{eq:moment}
	\bbE \big\{ (Y - \theta) \one(Y > \theta) \mid A =1, X =x \big\} + t(\theta) \bbE \big\{ (Y - \theta) \one(Y < \theta) \mid A =1, X = x \big\} = 0.
\end{equation}
It is possible to subtract $t(\theta) \bbE \big\{ ( Y- \theta) \one(Y < \theta) \mid A = 1 , X = x \big\}$ and then divide by $\bbE \big\{ ( \theta - Y) \one(\theta > Y) \mid A = 1 , X = x \big\}$ when $t > 0$ because
\begin{enumerate}
	\item $\bbE \big\{ ( Y - \theta ) \one(Y > \theta) \mid A = 1, X = x\big\}$ and $\bbE \big\{ ( \theta - Y ) \one(\theta > Y) \mid A = 1 , X = x \big\}$ are both non-negative by definition, and 
	\item $\bbE \big\{ ( Y - \theta ) \one(Y > \theta) \mid A = 1, X = x\big\}$ and $\bbE \big\{ ( \theta - Y ) \one(\theta > Y) \mid A = 1 , X = x \big\}$ cannot each be zero simultaneously because $p(y \mid A = 1, x)$ is continuous and upper bounded,  which implies $Y \mid A = 1, X =x$ is not a point mass.
\end{enumerate}
From 1 and 2, we conclude both $\bbE \big\{ ( Y - \theta ) \one(Y > \theta) \mid A = 1, X = x\big\}$ and $\bbE \big\{ ( \theta - Y ) \one(\theta > Y) \mid A = 1 , X = x \big\}$ must be positive.  Therefore,
$$
t(\theta) = \frac{\bbE \big\{ (Y - \theta) \one(Y > \theta) \mid A = 1, X =x \big\}}{ \bbE \big\{ (\theta - Y) \one (\theta > Y) \mid A =1, X = x \big\}} 
$$
By the continuity assumption on $p(y \mid A = 1, x)$ and the quotient rule and Leibniz' integral rule (see Lemma~\ref{lem:deriv_technical} for specific steps), we can take the derivative of $t(\theta)$ wrt $\theta$, and find 
\begin{align}
	\frac{\partial}{\partial \theta} t(\theta) := - \Bigg( &\frac{ \bbE \big\{ ( \theta - Y ) \one(\theta > Y) \mid A = 1 , X = x \big\} \bbP (Y > \theta \mid A = 1, X =x)}{\Big[ \bbE \big\{ ( \theta - Y ) \one(\theta > Y) \mid A = 1 , X = x \big\} \Big]^2} \nonumber \\
	&+ \frac{\bbE \big\{ ( Y - \theta ) \one(Y > \theta) \mid A = 1, X = x\big\} \bbP( Y \leq \theta \mid A = 1, X = x)}{\Big[ \bbE \big\{ ( \theta - Y ) \one(\theta > Y) \mid A = 1 , X = x \big\} \Big]^2} \Bigg) \label{eq:inverse}
\end{align}
Notice that the denominator is positive by the prior argument and steps 1 and 2 above.  Meanwhile, the numerator is also positive by the same argument and steps 1 and 2 above, and because $\bbP (Y > \theta \mid A = 1, X =x) \vee \bbP( Y \leq \theta \mid A = 1, X = x) > 0$.  Moreover, both the numerator and denominator are bounded by Assumption~\ref{asmp:continuity}.

\bigskip

Hence, $\frac{\partial}{\partial \theta} t(\theta)$ exists and is always negative and bounded. Therefore, the result follows by the inverse function theorem for the reciprocal of \eqref{eq:inverse}.  Moreover, $\frac{\partial}{\partial t} \theta(t)$ is continuous because $p(y \mid A = 1, x)$ is continuous and $\frac{\partial}{\partial \theta} t(\theta)$ is a composition of integrals of $p(y \mid A = 1, x)$.
\end{newproof}

\begin{lemma} \label{lem:odds_simplify}
Under the setup of Lemma~\ref{lem:diff}, then 
\begin{equation}
	\nu_1^- \big(X; \theta_1^{-}, t \big) f_1(X; \theta_1^-) = g_1 \big( X; \theta_1^-, t \big) \label{eq:nufg}
\end{equation}
where $\theta_1^- := \theta_1^-(X; t)$, and $\nu_1^-(X, \theta, t)$, $f_a(x; \theta)$, and $g_a(X; \theta, t)$ are defined in \eqref{eq:nu_minus}, \eqref{eq:f_func}, and \eqref{eq:g_func}, respectively.
\end{lemma}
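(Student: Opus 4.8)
\textbf{Proof proposal for Lemma~\ref{lem:odds_simplify}.} The plan is to combine the first-order optimality condition characterizing $\theta_1^-$ with the continuity of the conditional density of $Y$; the identity is then immediate algebra. First I would recall, exactly as in the proof of Lemma~\ref{lem:odds_deriv} above, that under Assumption~\ref{asmp:continuity} Lemma 2.3 of \citet{yadlowsky2022bounds} applies, so that $\theta_1^- \equiv \theta_1^-(X; t)$ (with $t>0$, the relevant case being $t = \exp(\Gamma M)$) is the essentially unique minimizer of $\bbE\{\ell_t^-(\theta, Y)\mid A=1,X\}$ and solves the conditional moment equation $\bbE\{\omega_{\theta_1^-}(Y; t)\mid A=1, X\} = 0$. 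Using the definition of $\omega_\theta$ in \eqref{eq:omega} together with the pointwise identity $(Y-\theta)\one(Y<\theta) = -(\theta-Y)\one(\theta>Y)$, this moment equation rearranges to
$$
\widetilde f_1(X; \theta_1^-) = t\, f_1(X; \theta_1^-),
$$
with $f_1$ and $\widetilde f_1$ as in \eqref{eq:f_func}--\eqref{eq:f_tilde}.

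Next I would use condition 2 of Assumption~\ref{asmp:continuity} --- continuity of $p(y\mid A=1,x)$ in $y$ --- to conclude $\bbP(Y = \theta_1^- \mid A=1, X) = 0$, so that $\bbP(Y \geq \theta_1^-\mid A=1,X) = \bbP(Y > \theta_1^-\mid A=1,X)$. Writing $p_> = \bbP(Y > \theta_1^-\mid A=1,X)$ and $p_< = \bbP(Y < \theta_1^-\mid A=1,X)$, the definition of $\nu_1^-$ in \eqref{eq:nu_minus} therefore reduces to $\nu_1^-(X; \theta_1^-, t) = p_> + t\, p_<$ pointwise in $X$.

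Finally, I would multiply and substitute:
$$
\nu_1^-(X;\theta_1^-,t)\, f_1(X;\theta_1^-) = p_>\, f_1(X;\theta_1^-) + t\, p_<\, f_1(X;\theta_1^-) = p_>\, f_1(X;\theta_1^-) + p_<\, \widetilde f_1(X;\theta_1^-),
$$
where the last equality uses $t f_1(X;\theta_1^-) = \widetilde f_1(X;\theta_1^-)$ from the moment condition, and the right-hand side is precisely $g_1(X;\theta_1^-,t)$ by \eqref{eq:g_func}. I do not anticipate a genuine obstacle here; the only points requiring care are that the moment equation is the \emph{conditional} one (so the argument is valid pointwise in $X$) and that the density-continuity hypothesis is what removes the atom of $Y$ at $\theta_1^-$, reconciling the ``$\geq$'' in $\nu_1^-$ with the ``$>$'' in $g_1$. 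Everything else is substitution.
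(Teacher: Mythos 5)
Your proposal is correct and follows essentially the same route as the paper's proof: both rest on the conditional moment equation $\bbE\{\omega_{\theta_1^-}(Y;t)\mid A=1,X\}=0$, which gives $t\,f_1(X;\theta_1^-)=\widetilde f_1(X;\theta_1^-)$, after which the identity $\nu_1^- f_1 = g_1$ is direct substitution. Your explicit handling of the $\geq$ versus $>$ discrepancy via the no-atom consequence of the density continuity assumption is a point the paper's own proof glosses over, but it is the same argument.
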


\begin{newproof}
Omitting arguments, on the left-hand side of \eqref{eq:nufg} we have
\begin{equation}
	\nu f_1 = f_1 \bbP( Y \geq \theta \mid A = 1, X=x ) + t f_1 \bbP( Y \leq \theta \mid A = 1, X = x). \label{eq:nuf}
\end{equation}
For the second summand on the right-hand side of \eqref{eq:nuf},
\begin{align*}
	t f_1 \bbP( Y \leq \theta \mid A = 1, X = x) &= t \bbE \left\{ \left( \theta - Y \right) \one\left( \theta > Y \right) \mid A = a, X = x \right\} \bbP( Y \leq \theta \mid A = 1, X = x) \\
	&= - t \bbE \left\{ \left( Y - \theta\right) \one\left( \theta > Y \right) \mid A = a, X = x \right\} \bbP( Y \leq \theta \mid A = 1, X = x) \\
	&= \bbE \left\{ \left( Y - \theta\right) \one\left( Y > \theta \right) \mid A = a, X = x \right\} \bbP( Y \leq \theta \mid A = 1, X = x)
\end{align*}
where the third equality follows because $\theta$ solves the moment equation $\bbE ( \omega_\theta (Y) \mid A = 1, X = x) = 0$ (see, \eqref{eq:moment}). Hence,
\begin{align*}
	\nu f_1 &= f_1 \bbP( Y \geq \theta \mid A = 1, X=x ) \\
	&+ \bbE \left\{ \left( Y - \theta\right) \one\left( Y > \theta \right) \mid A = a, X = x \right\} \bbP( Y \leq \theta \mid A = 1, X = x) \\
	&= g_1 (X; \theta).
\end{align*}
\end{newproof}

\begin{lemma} \label{lem:upper_diff}
Under the setup of Lemma~\ref{lem:diff}, for $t^\prime > 0$,
\begin{equation}
	\frac{\partial}{\partial t} \theta_1^+(X; t) \Big|_{t = t^\prime} = \frac{\widetilde f_1 (X; \theta_1^+ )^2}{g_1( X; \theta_1^+ )} = \frac{\widetilde f_1(X; \theta_1^+)}{t^\prime \nu_1^+ \big( X; \theta_1^+, t^\prime \big) },
\end{equation}
where $\theta_1^+ := \theta_1^+(X; t^\prime)$ on the right-hand side, and $\widetilde f_a(x; \theta)$, $g_a(x; \theta)$ and $\nu_a^+$ are defined in \eqref{eq:f_tilde}, \eqref{eq:g_func}, and \eqref{eq:nu_plus}, respectively.
\end{lemma}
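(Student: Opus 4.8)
The plan is to prove Lemma~\ref{lem:upper_diff} in close parallel to Lemma~\ref{lem:odds_deriv}: express $t$ as a function of $\theta$ through the first-order condition satisfied by $\theta_1^+$, differentiate, invert via the inverse function theorem, and then simplify the resulting expression with the first-order condition exactly as Lemma~\ref{lem:odds_simplify} did for the lower bound. First I would use the identity $\ell_t^+ = \ell_{1/t}^-$, so that $\theta_1^+(X;t) = \theta_1^-(X;1/t)$ and, under Assumption~\ref{asmp:continuity}, Lemma 2.3 of \citet{yadlowsky2022bounds} gives that $\theta_1^+(X;t)$ is (up to measure-zero sets) the unique minimizer of $\bbE\{\ell_t^+(\theta,Y)\mid A=1,X\}$ and solves $\bbE\{\omega_\theta(Y;1/t)\mid A=1,X\}=0$, i.e.
\[
\bbE\{(Y-\theta)\one(Y>\theta)\mid A=1,X\} + \tfrac1t\,\bbE\{(Y-\theta)\one(Y<\theta)\mid A=1,X\} = 0 .
\]
Written in terms of $f_1$ and $\widetilde f_1$ from \eqref{eq:f_func}--\eqref{eq:f_tilde}, this rearranges to $\widetilde f_1(X;\theta) = \tfrac1t f_1(X;\theta)$, equivalently $t(\theta) = f_1(X;\theta)/\widetilde f_1(X;\theta)$.

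Next I would argue, exactly as in the proof of Lemma~\ref{lem:odds_deriv}, that for $t^\prime>0$ both $f_1(X;\theta_1^+)$ and $\widetilde f_1(X;\theta_1^+)$ are strictly positive and bounded: each is nonnegative by definition and bounded by Assumption~\ref{asmp:continuity}, they cannot vanish simultaneously because the continuity and upper-boundedness of $p(y\mid A=1,x)$ prevent $Y\mid A=1,X=x$ from being a point mass, and $\widetilde f_1=\tfrac1t f_1$ then forces both to be positive. Consequently $t(\cdot)$ is continuously differentiable near $\theta_1^+$, and by the quotient rule with Leibniz's rule (the moving-boundary terms vanish since the integrand factors $(\theta-y)$ and $(y-\theta)$ are zero at $y=\theta$; the steps mirror those in Lemma~\ref{lem:odds_deriv}) one obtains $\partial_\theta f_1 = \bbP(Y<\theta\mid A=1,X)$ and $\partial_\theta\widetilde f_1 = -\bbP(Y>\theta\mid A=1,X)$, hence
\[
\frac{\partial}{\partial\theta}t(\theta) = \frac{f_1(X;\theta)\,\bbP(Y>\theta\mid A=1,X) + \widetilde f_1(X;\theta)\,\bbP(Y<\theta\mid A=1,X)}{\widetilde f_1(X;\theta)^2} = \frac{g_1(X;\theta)}{\widetilde f_1(X;\theta)^2},
\]
with $g_1$ as in \eqref{eq:g_func}. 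This is strictly positive (both probabilities cannot be zero) and bounded, so the inverse function theorem applies and gives the first claimed equality, $\tfrac{\partial}{\partial t}\theta_1^+(X;t)\big|_{t=t^\prime} = \widetilde f_1(X;\theta_1^+)^2/g_1(X;\theta_1^+)$.

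Finally, for the second equality I would reproduce the computation of Lemma~\ref{lem:odds_simplify} with the roles swapped: substituting the first-order relation $f_1(X;\theta_1^+) = t^\prime\widetilde f_1(X;\theta_1^+)$ into $g_1$ yields
\[
g_1(X;\theta_1^+) = t^\prime\widetilde f_1(X;\theta_1^+)\,\bbP(Y>\theta_1^+\mid A=1,X) + \widetilde f_1(X;\theta_1^+)\,\bbP(Y<\theta_1^+\mid A=1,X) = t^\prime\,\widetilde f_1(X;\theta_1^+)\,\nu_1^+(X;\theta_1^+,t^\prime),
\]
by the definition of $\nu_1^+$ in \eqref{eq:nu_plus}, and dividing gives $\widetilde f_1^2/g_1 = \widetilde f_1/(t^\prime\nu_1^+)$. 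An alternative, shorter route is to differentiate the identity $\theta_1^+(X;t)=\theta_1^-(X;1/t)$ by the chain rule and invoke Lemma~\ref{lem:odds_deriv}, which produces $f_1(X;\theta_1^+)^2/(t^{\prime 2} g_1(X;\theta_1^+))$, and then convert using $f_1 = t^\prime\widetilde f_1$. No genuinely new difficulty arises here; the main, and rather mild, obstacle is bookkeeping — tracking the $1/t$-versus-$t$ swap consistently through the first-order condition and the derivative, and handling the Leibniz differentiation and the positivity of $f_1,\widetilde f_1$ carefully — and this is already templated by the proofs of Lemmas~\ref{lem:odds_deriv} and~\ref{lem:odds_simplify}.
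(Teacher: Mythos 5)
Your proposal is correct and follows essentially the same route as the paper: derive the moment condition $\widetilde f_1 = \tfrac{1}{t}f_1$ for $\theta_1^+$ via $\ell_t^+=\ell_{1/t}^-$, invert $t(\theta)=f_1/\widetilde f_1$ by the inverse function theorem exactly as in Lemma~\ref{lem:odds_deriv}, and then obtain the second equality by using $f_1 = t'\widetilde f_1$ to rewrite $g_1$ as $t'\widetilde f_1\nu_1^+$ (the paper performs the same algebra by expanding $g_1/\widetilde f_1$ rather than substituting into $g_1$, which is equivalent). Your filled-in details, including the positivity of $f_1,\widetilde f_1$ and the Leibniz-rule derivatives, match the paper's intended argument.
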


\begin{newproof}
By the same argument as in \citet{yadlowsky2022bounds} (see, Section 2.3), $\theta_1^+(X; t)$ minimizes $\bbE \{ \ell_t^+ (\theta, Y) \mid A = 1, X \} = \bbE \{ \ell_{1/t}^- (\theta, Y) \mid A =1, X \}$ and solves the moment condition
$$
\bbE \left\{ \left( Y - \theta \right) \one\left( Y > \theta \right) \mid A = 1, X = x \right\} - \frac{1}{t} \bbE \left\{ \left( \theta - Y \right) \one\left( \theta > Y \right) \mid A = 1, X = x \right\} = 0.
$$
Therefore, the first result follows by the same arguments as in Lemma~\ref{lem:odds_deriv} using the inverse function theorem.

\bigskip

\noindent For the second result, omitting $X$ arguments and letting $\theta \equiv \theta_1^+$,
\begin{align*}
	\frac{\widetilde f_1(\theta; t^\prime)}{\frac{\partial}{\partial t} \theta_1^+}  &= \frac{g_1(\theta; t^\prime)}{\widetilde f_1(\theta; t^\prime)} \\
	&= \frac{\bbE \left\{ \left( \theta - Y \right) \one\left( \theta > Y \right) \mid A = 1, X = x \right\}  \bbE \left( Y \geq \theta \mid A = 1, X = x \right)}{\bbE \left\{ \left( Y - \theta \right) \one\left( Y > \theta \right) \mid A = 1, X = x \right\}} \\
	&+ \frac{\bbE \left\{ \left( Y - \theta \right) \one \left( Y > \theta \right) \mid A = 1, X= x\right\} \bbP\left( Y \leq \theta \mid A = 1, X = x\right)}{\bbE \left\{ \left( Y - \theta \right) \one\left( Y > \theta \right) \mid A = 1, X = x \right\}} \\
	&= \frac{\bbE \left\{ \left( \theta - Y \right) \one\left( \theta > Y \right) \mid A = 1, X = x \right\}}{\bbE \left\{ \left( Y - \theta \right) \one\left( Y > \theta \right) \mid A = 1, X = x \right\}}  \bbP \left( Y \geq \theta \mid A = 1, X = x \right) + \bbP(Y \leq \theta \mid A = 1, X = x),
\end{align*}
where the first line follows by rearranging the first result and letting $\frac{\partial}{\partial t} \theta_1^+ \equiv \frac{\partial}{\partial t} \theta_1^+(X; t) \Big|_{t = t^\prime}$. Because $\theta_1^+$ solves the moment equation
$$
\bbE \left\{ \left( Y - \theta \right) \one\left( Y > \theta \right) \mid A = 1, X = x \right\} - \frac{1}{t^\prime} \bbE \left\{ \left( \theta - Y \right) \one\left( \theta > Y \right) \mid A = 1, X = x \right\} = 0
$$
we have
$$
\frac{\widetilde f_1(\theta; t^\prime)}{\frac{\partial}{\partial y} \theta_1^+} = t^\prime \bbP \left( Y > \theta \mid A = 1, X = x \right) + \bbP \left( Y < \theta \mid A = 1, X = x \right) = t^\prime \nu_1^+ \{ X; \theta_1^+ (X; t^\prime), t^\prime\}.
$$
The result follows by rearranging.
\end{newproof}

\begin{lemma} \label{lem:deriv_technical}
Under the setup of Lemma~\ref{lem:diff},
\begin{align}
	\frac{\partial}{\partial \theta} \bbE \{ (Y - \theta) \one(Y > \theta) \mid A = a, X= x \} &= - \bbP(Y > \theta \mid A = a, X = x) \\
	\frac{\partial}{\partial \theta} \bbE \{ (\theta - Y) \one(\theta \geq Y) \mid A = a, X= x \} &= \bbP(Y \leq \theta \mid A = a, X)
\end{align}
\end{lemma}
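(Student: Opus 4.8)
The plan is to express each conditional expectation as an integral against the conditional density $p(y \mid A=a, X=x)$ and then differentiate under the integral sign. By Assumption~\ref{asmp:continuity}, this conditional density is continuous in $y$, upper bounded, and supported on a bounded set, say $[c_1, c_2]$; this is exactly the regularity needed to justify the interchange of differentiation and integration below, and to ensure the integrals have finite limits of integration.

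First I would write $\bbE\{(Y-\theta)\one(Y>\theta) \mid A=a, X=x\} = \int_\theta^{c_2} (y-\theta)\, p(y \mid A=a, X=x)\, dy$ (the integral being zero when $\theta \geq c_2$) and apply Leibniz's rule for differentiating an integral with a variable lower limit. The boundary contribution is $-(y-\theta)\, p(y\mid A=a,X=x)$ evaluated at $y=\theta$, which vanishes, so only the interior term survives: $\int_\theta^{c_2} \frac{\partial}{\partial\theta}(y-\theta)\, p(y\mid A=a,X=x)\, dy = -\int_\theta^{c_2} p(y\mid A=a,X=x)\,dy = -\bbP(Y > \theta \mid A=a, X=x)$. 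The second identity follows the same way, writing $\bbE\{(\theta-Y)\one(\theta\geq Y)\mid A=a,X=x\} = \int_{c_1}^\theta (\theta-y)\, p(y\mid A=a,X=x)\, dy$, whose $\theta$-derivative again has a vanishing boundary term at $y=\theta$ and interior term $\int_{c_1}^\theta p(y\mid A=a,X=x)\, dy = \bbP(Y \leq \theta \mid A=a, X=x)$.

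To make the differentiation under the integral rigorous I would invoke the dominated-convergence version of Leibniz's rule: on the compact support both $(y-\theta)\,p(y\mid A=a,X=x)$ and its $\theta$-partial derivative $-p(y\mid A=a,X=x)$ are bounded, hence dominated by an integrable function uniformly for $\theta$ in a neighborhood, so the interchange is valid. I would also note that continuity of the density gives $\bbP(Y=\theta\mid A=a,X=x)=0$, so the distinction between $\one(Y>\theta)$ and $\one(Y\geq\theta)$ (and between $\one(\theta>Y)$ and $\one(\theta\geq Y)$) is immaterial; the right-hand sides may be written with strict or weak inequalities as convenient for later use.

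There is no substantive obstacle here; the only point requiring care is the regularity justification for differentiating under the integral, which the boundedness and bounded-support conditions in Assumption~\ref{asmp:continuity} handle directly. This lemma is then used in Lemma~\ref{lem:odds_deriv} (and the related helper lemmas) together with the quotient rule to obtain the derivative of $t(\theta)$ in \eqref{eq:inverse}.
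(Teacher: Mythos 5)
Your proof is correct and follows essentially the same route as the paper: write each conditional expectation as an integral of $(y-\theta)\,p(y\mid A=a,X=x)$ over the relevant half of the bounded support, apply Leibniz's rule, observe that the boundary term at $y=\theta$ vanishes, and evaluate the remaining interior integral. The extra remarks on dominated convergence and on the irrelevance of strict versus weak inequalities are fine but not needed beyond what the paper already invokes from Assumption~\ref{asmp:continuity}.
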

\begin{newproof}
By the continuity assumption on $p(y \mid A = a, x)$, the boundedness of $Y$, and Leibniz' integral rule,
\begin{align*}
	\frac{\partial}{\partial \theta} \bbE \{ (Y - \theta) \one(Y > \theta) \mid A = a, X= x \} &\equiv \frac{\partial}{\partial \theta} \int_{\theta}^{\overline B} (Y - \theta) p(y \mid A = a, x ) dy\\
	&= 0 + \int_{\theta}^{\overline B} \frac{\partial}{\partial \theta} (Y - \theta) p(y \mid A = a, x ) dy \\
	&= - \int_{\theta}^{\overline B} p(y \mid A = a, x) dy \\
	&= - \bbP( Y > \theta \mid A = a, X = x),
\end{align*}
where $\overline B$ is the upper bound on $Y$, and
\begin{align*}
	\frac{\partial}{\partial \theta} \bbE \{ (\theta - Y) \one(\theta \geq Y) \mid A = a, X= x \} &= \int_{\underline B}^{\theta} \frac{\partial}{\partial \theta} (\theta - Y) p(y \mid A = a, x ) dy \\
	&= \bbP( Y \leq \theta \mid A = a, X = x),
\end{align*}
where $\underline B$ is the lower bound on $Y$.
\end{newproof}

\subsection{Proof of Lemma~\ref{lem:diff}}

Finally, with the helper lemmas in hand, we prove Lemma~\ref{lem:diff}.
\begin{newproof}
The derivatives for the upper and lower bounds on $\bbE(Y^1)$ follow from this proof and Lemmas~\ref{lem:odds_deriv}-\ref{lem:deriv_technical}.  The other derivatives for the upper and lower bounds on $\bbE(Y^0)$ follow the same argument, but replacing $A=1$ by $A=0$.  

\bigskip

Note that, by assumption $0 < \exp(\Gamma M) < \infty$.  Hence, Lemmas~\ref{lem:odds_deriv},~\ref{lem:odds_simplify}, and~\ref{lem:upper_diff} apply, and the result then follows by the chain rule, Assumption~\ref{asmp:positivity} and the dominated convergence theorem (to bring the derivative inside the expectation), and canceling terms.  The boundedness of the derivative follows because $\widetilde f_a, f_a, \nu_a^\pm$ are all bounded and $M$ and $\Gamma$ are non-zero and bounded by assumption.  The sign follows because $f_a, \widetilde f_a, \nu_a^\pm > 0$.
\end{newproof}

\section{Efficient Influence Functions} \label{app:eifs}

To construct doubly robust-style estimators for the bounds on the ATE we will use, when possible, the efficient influence functions (EIFs) for each part of the bound. The EIF is a concept from  semiparametric efficiency theory \citep{bickel1993efficient, van2000asymptotic, tsiatis2006semiparametric, van2002semiparametric, van2003unified}, and can be thought of as the first derivative in a von Mises expansion of an estimand \citep{von1947asymptotic}. The crucial benefit of the EIF is that estimators constructed from it are doubly robust, in the sense that they have bias that is a second-order product of errors of the relevant nuisance function estimators, so they can achieve $\sqrt{n}$-efficiency even when the nuisance functions are estimated at slower nonparametric rates \citep{van2003unified, chernozhukov2018double, kennedy2022semiparametric}. 

\subsection{Effect differences}

For the maximum LOO effects differences model (calibrated sensitivity model~\ref{csm:max_loo_fx}), we only require the EIF of the adjusted mean difference, provided in \eqref{eq:mde_if} in the body of the paper. 

\subsection{Odds ratio}

For the maximum LOO odds ratio model (calibrated sensitivity model~\ref{csm:max_loo_odds}), we can use the EIF from \citet{yadlowsky2022bounds} for the bounds (see, e.g., \eqref{eq:yad_eif_lower} for the EIF of the lower bound on $\bbE(Y^1)$). The efficient influence function for the upper bound on the ATE is
\begin{align}
\varphi_{U} \{ Z; \eta(\Gamma M) \} &:= AY + (1-A) \theta_1^+ \{ X; \exp(\Gamma M) \} + A\frac{\omega_{\theta_1^+}(Y; \exp(\Gamma M)) \pi_0(X)}{\nu_1^+ \{ X; \theta_1^+, \exp(\Gamma M)\} \pi_1(X)} \nonumber \\
&- (1-A)Y + A \theta_0^- \{ X; \exp(\Gamma M) \} + (1-A) \frac{\omega_{\theta_0^-}(Y; \exp(\Gamma M)) \pi_1(X)}{\nu_0^-\{ X; \theta_1^+, \exp(\Gamma M) \} \pi_0(X)} \label{eq:yad_eif_upper}
\end{align}
Meanwhile, the efficient influence function for the lower bound on the ATE is 
\begin{align}
\varphi_{L} \{ Z; \eta(\Gamma M) \} &:= AY + (1-A) \theta_1^- \{ X; \exp(\Gamma M) \} + A \frac{\omega_{\theta_1^-} \{ Y; \exp(\Gamma M) \} \pi_0(X)}{\nu_1^- \{ X; \theta_1^-, \exp(\Gamma M) \} \pi_1(X)} \nonumber \\
&- (1-A)Y + A \theta_0^+ \{ X; \exp(\Gamma M) \} + (1-A) \frac{\omega_{\theta_0^+}\{Y; \exp(\Gamma M) \} \pi_1(X)}{\nu_0^+ \{ X; \theta_1^+, \exp(\Gamma M)\} \pi_0(X)} \label{eq:yad_eif_lower}
\end{align}
where $\theta^{\pm}, \nu^{\pm}$, and $\omega$ are defined in \eqref{eq:theta}, \eqref{eq:nu_minus}, \eqref{eq:nu_plus}, and \eqref{eq:omega}.

\bigskip

Because the measured confounding is a supremum, it is not pathwise differentiable and does not admit an EIF (see, e.g., \citet{hines2022demystifying} for discussion of a lack of pathwise differentiability). Instead, we estimated its projection onto a finite dimensional model, which can be estimated at a $\sqrt{n}$-rate with a plug-in estimator, as in Definition~\ref{def:odds}.

\subsection{Outcome model}

The average LSO outcome model (calibrated sensitivity model~\ref{csm:avg_lso_out}) requires the derivation of two EIFs, for each multiplicand in the bounds in \eqref{eq:out_id}.  The following result provides them.
\begin{restatable}{lemma}{lemouteifs} \label{lem:out_eifs}
The un-centered efficient influence function of $\lVert \pi_{a}(X) \rVert_2^2$ is
\begin{equation} \label{eq:xi}
	\xi_{a}(Z) = \pi_{a}(X)^2 + 2 \pi_{a}(X) \big\{ \one(A = a) - \pi_{a}(X) \big\},
\end{equation}
while the un-centered efficient influence function of $\lVert \mu_a(X_{-S}) - \bbE \big\{ \mu_a(X) \mid A = 1-a, X_{-S} \} \rVert_2^2$ is
\begin{align}
	\lambda_a(Z; S) &= \Big[ \mu_a (X_{-S}) - \bbE \{ \mu_a(X) \mid A = 1 -a, X_{-S} \} \Big]^2 \nonumber \\
	&+ 2 \Big[ \mu_a (X_{-S}) - \bbE \{ \mu_a(X ) \mid A = 1 -a, X_{-S} \} \Big] \Bigg( \frac{\one(A = a)}{\pi_a(X_{-S})} \{ Y - \mu_a (X_{-S}) \} \nonumber \\
	&- \left\{ \frac{\one(A = a)}{\pi_a (X)} \right\} \{ Y - \mu_a (X) \} \left\{ \frac{\pi_{1-a}(X)}{\pi_{1-a}(X_{-S})} \right\} \nonumber \\
	&- \left\{ \frac{\one(A = 1-a)}{\pi_{1-a}(X_{-S})} \right\} \Big[ \mu_a (X) - \bbE \{ \mu_a (X) \mid A =1-a, X_{-S} \} \Big] \Bigg), \label{eq:lambda}
\end{align}
\end{restatable}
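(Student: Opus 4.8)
The plan is to derive both efficient influence functions from the pathwise-derivative (von Mises) characterization of the EIF (see, e.g., \citet{kennedy2022semiparametric, hines2022demystifying}): for a smooth one-dimensional submodel $\{\bbP_t\}$ through the truth with score $s(Z)$, I compute $\tfrac{d}{dt}\big|_{t=0}$ of each target functional and rewrite it as $\bbE\{\varphi(Z)\,s(Z)\}$; then $\varphi$, recentred, is the EIF. The only ingredients I would use repeatedly are: (a) the elementary building block that for a conditional mean $f(W)=\bbE(Y\mid A=a,W)$ and any fixed weight $m$, $\bbE\{m(W)\,\dot f(W)\}=\bbE\big[m(W)\tfrac{\one(A=a)}{\pi_a(W)}\{Y-f(W)\}\,s(Z)\big]$; (b) iterated expectations; and (c) positivity (Assumption~\ref{asmp:positivity}), so the inverse-probability weights are bounded.

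For $\xi_a$: write $\lVert\pi_a(X)\rVert_2^2=\bbE\{g(X)^2\}$ with $g(X)=\pi_a(X)=\bbE\{\one(A=a)\mid X\}$. Differentiating along the submodel gives $\bbE\{g(X)^2 s(Z)\}+2\bbE\{g(X)\dot g(X)\}$, and $\dot g(X)=\bbE[\{\one(A=a)-\pi_a(X)\}s(Z)\mid X]$, so the uncentred EIF is $\pi_a(X)^2+2\pi_a(X)\{\one(A=a)-\pi_a(X)\}$, i.e.\ \eqref{eq:xi}. This step is routine.

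For $\lambda_a$: set $\delta(X_{-S})=\mu_a(X_{-S})-h_a(X_{-S})$ with $h_a(X_{-S})=\bbE\{\mu_a(X)\mid A=1-a,X_{-S}\}$, so the target is $\bbE\{\delta(X_{-S})^2\}$. Differentiating along the submodel produces three contributions: (i) perturbing the marginal law of $X_{-S}$, giving the plug-in piece $\delta(X_{-S})^2$ (minus its mean upon recentring); (ii) $2\bbE\{\delta(X_{-S})\dot\mu_a(X_{-S})\}$, which by the building block gives $2\delta(X_{-S})\tfrac{\one(A=a)}{\pi_a(X_{-S})}\{Y-\mu_a(X_{-S})\}$; and (iii) $-2\bbE\{\delta(X_{-S})\dot h_a(X_{-S})\}$. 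Since $h_a$ depends on $\bbP$ through both its conditioning law and $\mu_a$, $\dot h_a$ splits: (iii-a) perturbing the conditional law of $X_S$ given $A=1-a,X_{-S}$ with $\mu_a$ fixed — another instance of the building block with "outcome" $\mu_a(X)$ — gives $-2\delta(X_{-S})\tfrac{\one(A=1-a)}{\pi_{1-a}(X_{-S})}\{\mu_a(X)-h_a(X_{-S})\}$; and (iii-b) perturbing $\mu_a$ inside $h_a$ gives $-2\bbE\big[\delta(X_{-S})\tfrac{\one(A=1-a)}{\pi_{1-a}(X_{-S})}\dot\mu_a(X)\big]$. Summing the terms and recentring by $-\lVert\delta\rVert_2^2$ is intended to produce \eqref{eq:lambda}.

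The main obstacle is term (iii-b): the inner weight $\delta(X_{-S})\tfrac{\one(A=1-a)}{\pi_{1-a}(X_{-S})}$ carries $\one(A=1-a)$, whereas $\dot\mu_a(X)$ (a deterministic function of $X$ along the path) must be paired with $\one(A=a)$, so the two indicators cannot be applied at once. I would resolve this by first integrating $(A,Y)$ out given $X$: since $\delta(X_{-S})$ and $\pi_{1-a}(X_{-S})$ are functions of $X$ and $\bbE\{\one(A=1-a)\mid X\}=\pi_{1-a}(X)$, the expression becomes $-2\bbE\big[\dot\mu_a(X)\,\delta(X_{-S})\tfrac{\pi_{1-a}(X)}{\pi_{1-a}(X_{-S})}\big]$; then applying the building block with weight $m(X)=\delta(X_{-S})\tfrac{\pi_{1-a}(X)}{\pi_{1-a}(X_{-S})}$ yields exactly $-2\delta(X_{-S})\tfrac{\one(A=a)}{\pi_a(X)}\{Y-\mu_a(X)\}\tfrac{\pi_{1-a}(X)}{\pi_{1-a}(X_{-S})}$, the second term inside the parentheses of \eqref{eq:lambda}. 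Along the way I would also verify that there is no double counting between the $\mu_a(X_{-S})$ in $\delta$ and the $\mu_a(X)$ inside $h_a$ — these are distinct nuisances with different conditioning sets — and confirm that the recentring constants are the stated $L_2^2$ norms, so that the expressions in the lemma are indeed the \emph{un-centred} EIFs.
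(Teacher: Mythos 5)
Your derivation is correct, but it takes a genuinely different route from the paper. You compute the pathwise (Gateaux) derivative of each functional along smooth submodels and read off the EIF as the Riesz representer of that derivative; the paper instead posits the candidates $\xi_a$ and $\lambda_a$, writes down the von Mises remainder $R_2(\overline{\mathcal{P}},\mathcal{P})=\bbE\{\overline\lambda_a-\lambda_a\}$, shows by a complete-the-square calculation that it is a second-order product of nuisance errors, and then invokes a general argument (second-order remainder plus the saturated nonparametric tangent space) to conclude the candidates are the EIFs. Your key technical move --- integrating out $(A,Y)$ given $X$ in term (iii-b) and using $\pi_{1-a}(X)/\pi_{1-a}(X_{-S}) = p(X\mid A=1-a,X_{-S})/p(X\mid X_{-S})$ under positivity --- is exactly the density-ratio identity the paper also needs, and your accounting of the three dependence channels of $\bbE\{\delta(X_{-S})^2\}$ on $\bbP$ (the marginal of $X_{-S}$, $\mu_a(X_{-S})$, and both ingredients of $h_a$) is complete, so the terms you obtain match \eqref{eq:xi} and \eqref{eq:lambda} exactly. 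What each approach buys: yours is constructive and does not require guessing the answer, which is a real advantage for a reader wondering where $\lambda_a$ comes from; the paper's verification-style proof is longer but produces, as a byproduct, the explicit second-order remainder expressions (\eqref{eq:xi_bias}, \eqref{eq:second_order1}, \eqref{eq:second_order2}) that are reused verbatim to control the bias terms in the proof of Theorem~\ref{thm:out_conv}. If you adopted your route, you would still need to carry out that remainder computation separately before the estimation results go through.
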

Before the proof, we provide a few comments. Each EIF follows the standard form of a plug-in plus a weighted residual.  The only nuance is that the weighted residuals are multiplied by two.  This occurs because each functional includes the mean of a square.  Intuitively, one can think of these extra terms arising in the same way that $\frac{d}{dx} f(x^2) = 2f(x)f^\prime(x)$ by the chain rule.  Indeed, these results are reminiscent of the EIF for the integral of the squared density, a canonical functional in the semiparametric / doubly robust estimation literature \citep{birge1995estimation, laurent1997estimation}.  

\begin{newproof}
We prove Lemma~\ref{lem:out_eifs} in two stages.  First, we show that each functional considered satisfies a von Mises expansion whose influence function is the relevant efficient influence function proposed in the result and which admits a second order remainder term.  We will prove this for both functionals under consideration. Second, we will relate the von Mises expansion to smooth parametric submodels and submodel scores to prove that the proposed efficient influence function is, indeed, the efficient influence function.

\subsubsection*{Second order remainder for $\lVert \pi_a(X) \rVert_2^2$}

We begin with $\lVert \pi_a(X) \rVert_2^2$.  Let $\Xi_a(\mathcal{P}) \equiv \lVert \pi_a(X) \rVert_2^2$ and $\xi_a(\mathcal{P}) \equiv \xi_a(Z)$, where $\xi_a$ was defined in \eqref{eq:xi}.  Then, by a von Mises expansion,
\begin{equation} 
	\Xi_a(\overline{\mathcal{P}}) = \Xi_a(\mathcal{P}) + \int_{\mathcal{Z}} \xi_a(\overline{\mathcal{P}}) d(\overline{\mathcal{P}} - \mathcal{P}) + R_2 (\overline{\mathcal{P}}, \mathcal{P})
\end{equation}
where $\mathcal{P}$ and $\overline{\mathcal{P}}$ are two different distributions at which $\Xi_a$ is evaluated.  Rearranging and canceling $\Xi_a(\overline{\mathcal{P}}) - \Xi_a(\overline{\mathcal{P}})$ we have
\begin{equation}
	R_2 (\overline{\mathcal{P}}, \mathcal{P}) = \Xi_a(\overline{\mathcal{P}}) - \Xi_a(\mathcal{P}) - \int_{\mathcal{Z}} \xi_a(\overline{\mathcal{P}}) d(\overline{\mathcal{P}} - \mathcal{P}) = \int_{\mathcal{Z}} \xi_a(\overline{\mathcal{P}}) - \xi_a(\mathcal{P}) d(\mathcal{P}) 
\end{equation}
The right-hand side of the above equation can be re-written as an expectation over $\mathcal{P}$, but evaluated with nuisance functions from $\overline{\mathcal{P}}$ and $\mathcal{P}$; i.e.,
$$
R_2 (\overline{\mathcal{P}}, \mathcal{P}) \equiv \bbE \left\{ \overline{\xi}_a(Z) - \xi_a (Z) \right\} 
$$
Omitting arguments, so that $\pi_a(X) \equiv \pi$, then, by the definition of $\xi_a$, we have
\begin{align}
	R_2 (\overline{\mathcal{P}}, \mathcal{P}) &= \bbE \Bigg( \overline \pi_{a}^2 + 2 \overline \pi_{a} \big\{ \one(A = a) - \overline \pi_{a} \big\} - \pi_{a}^2 - 2 \pi_{a} \big\{ \one(A = a) - \pi_{a} \big\} \Bigg) \nonumber \\
	&= \bbE \left\{ \overline \pi_{a}^2 - \pi_a^2 + 2 \overline \pi_{a} \big( \pi_a - \overline \pi_{a} \big) \right\} \nonumber \\
	&= - \bbE \left\{ \big( \overline \pi_{a} - \pi_a \big)^2 \right\} \label{eq:xi_bias}
\end{align}
where the second line follows by iterated expectations on $X$ and the third by rearranging.  This shows that the remainder term is second order, which will imply that $\xi_a$ is the efficient influence function, as we show subsequently.

\subsubsection*{General framework for the next section}

Here, we discuss an informal summary of the next section, which hopefully will help the reader as a guide as the algebra becomes more involved.  First, we can informally write the efficient influence function as
$$
\varphi^2 + 2 \rho
$$
so that
$$
R(\overline{\mathcal{P}}, \mathcal{P}) = \bbE \left( \overline{\varphi}^2 + 2 \overline \varphi \overline \rho - \varphi^2 - 2 \varphi \rho \right) 
$$
For the section above, $\varphi$ and $\rho$ were simple enough that we could analyze the remainder term directly and show it is second order.  For the next section, the algebra is more involved, but follows essentially the same format. As we will see subsequently, the algebra required to demonstrate $\lambda$ is the efficient influence functions can be split into three stages:
\begin{enumerate}
	\item Show that $\bbE(\rho) = 0$.
	\item Notice then that 
	$$
	\bbE \left( \overline{\varphi}^2 + 2 \overline \rho - \varphi^2 - 2 \rho \right) = \bbE \left\{ - (\overline \varphi - \varphi)^2 + 2 (\overline \rho + \overline \varphi^2 - \overline \varphi \varphi) \right\}
	$$
	using 1. and then completing the square.
	\item Show that $- \bbE \{ (\overline \varphi - \varphi)^2 \}$ is second order.
	\item Show that $\bbE \{ \overline \rho + \overline \varphi^2 - \overline \varphi \varphi \}$ is second order.
\end{enumerate}

\subsubsection*{Second order remainder for $\lVert \mu_a(X_{-S}) - \bbE \big\{ \mu_a(X) \mid A = 1-a, X_{-S} \} \rVert_2^2$}

Here, we consider $\lVert \mu_a(X_{-S}) - \bbE \big\{ \mu_a(X) \mid A = 1-a, X_{-S} \} \rVert_2^2$.  By the same logic as above, 
\begin{equation} \label{eq:remainder_overall_1}
	R_2 (\overline{\mathcal{P}}, \mathcal{P}) = \bbE \{ \overline{\lambda}_a (Z; S) - \lambda_a(Z; S) \}.
\end{equation}
Let 
\begin{align*}
	\varphi_a(X_{-S}) &= \mu_a (X_{-S}) - \bbE \{ \mu_a (X) \mid A =1-a, X_{-S} \} \text{ and } \\
	\rho_a(Z; S) &= \Big[ \mu_a (X_{-S}) - \bbE \{ \mu_a (X) \mid A =1-a, X_{-S} \} \Big] \Bigg( \frac{\one(A = a)}{\pi_a(X_{-S})} \{ Y - \mu_a (X_{-S}) \} \\
	&- \left\{ \frac{\one(A = a)}{\pi_a (X)} \right\} \{ Y - \mu_a (X) \} \left\{ \frac{\pi_{1-a}(X)}{\pi_{1-a}(X_{-S})} \right\} \\
	&- \left\{ \frac{\one(A = 1-a)}{\pi_{1-a}(X_{-S})} \right\} \Big[ \mu_a (X) - \bbE \{ \mu_a(X) \mid A = 1-a, X_{-S} \} \Big] \Bigg)
\end{align*}
so that
$$
\lambda_a(Z; S) = \varphi_a(X_{-S})^2 + 2 \rho_a(Z; S).
$$
Notice that
\begin{equation} \label{eq:varphi_rho_zero_1}
	\bbE \left\{ \rho_a(Z; S) \right\} =  0 
\end{equation}
by iterated expectations on $A = a, X$ and then iterated expectations on $A = a, X_{-S}$ and $A = 1-a, X_{-S}$.  Second, as in the informal proof sketch, notice that
\begin{equation} \label{eq:lambda_remainder}
	R_2 (\overline{\mathcal{P}}, \mathcal{P}) = \mathbb{E} \left[ - \{ \overline \varphi_a(X_{-S}) - \varphi_a(X_{-S}) \}^2 + 2 \Big\{ \overline \varphi_a(X_{-S})^2 + \overline \rho_a(Z; S) - \overline \varphi_a(X_{-S}) \varphi_a(X_{-S}) \} \right]
\end{equation}
which follows by \eqref{eq:remainder_overall_1}, the definition of $\lambda, \varphi,$ and $\rho$, \eqref{eq:varphi_rho_zero_1}, and completing the square.  Third, notice that $\mathbb{E} \left[ \{ \overline \varphi_a(X_{-S}) - \varphi_a(X_{-S}) \}^2 \right]$ is second order in terms of $\mu_a(X_{-S})$ and $\bbE \{ \mu_a(X) \mid A =1-a, X_{-S} \}$. Fourth, notice that
\begin{align*}
	\bbE \big\{ \overline \rho_a(Z; S)\} &= \bbE \Bigg\{ \Big[ \overline \mu_a (X_{-S}) - \overline \bbE \{ \overline \mu_a (X) \mid A =1-a, X_{-S} \} \Big] \Bigg( \frac{\one(A = a)}{\overline \pi_a(X_{-S})} \{ Y - \overline \mu_a (X_{-S}) \} \\
	&- \left\{ \frac{\one(A =a)}{\overline \pi_a (X)} \right\} \{ Y - \overline \mu_a (X)  \} \left\{ \frac{\overline \pi_{1-a}(X)}{\overline \pi_{1-a}(X_{-S})} \right\}  \\
	&- \left\{ \frac{\one(A = 1-a)}{\overline \pi_{1-a}(X_{-S})} \right\} \Big[ \overline \mu_a (X) - \overline \bbE \{ \overline \mu_a(X) \mid A = 1-a, X_{-S} \} \Big] \Bigg) \Bigg\} \\
	&\hspace{-1in}= \bbE \Bigg\{ \Big[ \overline \mu_a (X_{-S}) - \overline \bbE \{ \overline \mu_a (X) \mid A =1-a, X_{-S} \} \Big] \Bigg( \frac{\pi_a(X_{-S})}{\overline \pi_a(X_{-S})} \{ \mu_a(X_{-S}) - \overline \mu_a (X_{-S}) \} \\
	&- \left\{ \frac{\pi_a(X)}{\overline \pi_a (X)} \right\} \{ \mu_a(X) - \overline \mu_a (X)  \} \left\{ \frac{\overline \pi_{1-a}(X)}{\overline \pi_{1-a}(X_{-S})} \right\} \\
	&- \left\{ \frac{\pi_{1-a}(X_{-S})}{\overline \pi_{1-a}(X_{-S})} \right\} \Big[ (\bbE - \overline \bbE) \{ \overline \mu_a(X) \mid A = 1-a, X_{-S} \} \Big] \Bigg) \Bigg\} \\
	\vspace{0.5in} \\
	&\hspace{-1in}= \bbE \Bigg\{ \overline \varphi_a(X_{-S})  \Bigg( \left\{ \frac{\pi_a(X_{-S}) - \overline \pi_a(X_{-S})}{\overline \pi_a(X_{-S})} \right\} \{ \mu_a(X_{-S}) - \overline \mu_a (X_{-S}) \} + \{ \mu_a(X_{-S}) - \overline \mu_a (X_{-S}) \} \\
	&- \left\{ \frac{\pi_a(X) - \overline \pi_a(X}{\overline \pi_a (X)} \right\} \{ \mu_a(X) - \overline \mu_a (X)  \} \left\{ \frac{\overline \pi_{1-a}(X)}{\overline \pi_{1-a}(X_{-S})} \right\}  \\
	&- \{ \mu_a(X_{[p}) - \overline \mu_a(X) \} \left\{ \frac{\overline \pi_{1-a}(X)}{\overline \pi_{1-a}(X_{-S})} \right\} \\
	&- \left\{ \frac{\pi_{1-a}(X_{-S}) - \overline \pi_{1-a}(X_{-S})}{\overline \pi_{1-a}(X_{-S})} \right\} \Big[ (\bbE - \overline \bbE) \{ \overline \mu_a(X) \mid A = 1-a, X_{-S} \} \Big] \\
	&- (\bbE - \overline \bbE) \{ \overline \mu_a(X) \mid A = 1-a, X_{-S} \} \Bigg) \Bigg\} \\
	&\hspace{-1in} = \bbE \Bigg\{ \overline \varphi_a(X_{-S})  \Bigg( \left\{ \frac{\pi_a(X_{-S}) - \overline \pi_a(X_{-S})}{\overline \pi_a(X_{-S})} \right\} \{ \mu_a(X_{-S}) - \overline \mu_a (X_{-S}) \} \\
	&- \left\{ \frac{\pi_a(X) - \overline \pi_a(X)}{\overline \pi_a (X)} \right\} \{ \mu_a(X) - \overline \mu_a (X)  \} \left\{ \frac{\overline \pi_{1-a}(X)}{\overline \pi_{1-a}(X_{-S})} \right\} \\
	&- \left\{ \frac{\pi_{1-a}(X_{-S}) - \overline \pi_{1-a}(X_{-S})}{\overline \pi_{1-a}(X_{-S})} \right\} \Big[ (\bbE - \overline \bbE) \{ \overline \mu_a(X) \mid A = 1-a, X_{-S} \} \Big] \\
	&- \{ \mu_a(X_{[p}) - \overline \mu_a(X) \}  \left\{ \frac{\overline \pi_{1-a}(X)}{\overline \pi_{1-a}(X_{-S})} - \frac{\pi_{1-a}(X)}{\pi_{1-a}(X_{-S})} \right\} \\
	&-\{ \mu_a(X_{[p}) - \overline \mu_a(X) \} \left\{ \frac{\pi_{1-a}(X)}{\pi_{1-a}(X_{-S})} \right\} \\
	&- (\bbE - \overline \bbE) \{ \overline \mu_a(X) \mid A = 1-a, X_{-S} \} + \{ \mu_a(X_{-S}) - \overline \mu_a (X_{-S}) \}  \Bigg) \Bigg\}
\end{align*}
where the first line follows definition, the second by iterated expectations on $X$ and $A = a$ and then on $X_{-S}$, the third by adding zero, and the fourth again by adding zero and rearranging. 
Notice that $\overline \varphi$ was factored out at the start and that the first four summands are second order.  Focusing on the last two summands, notice that 
\begin{align*}
	\bbE \Bigg( \overline \varphi_a(X_{-S}) &\Bigg[ -\{ \mu_a(X_{[p}) - \overline \mu_a(X) \} \left\{ \frac{\pi_{1-a}(X)}{\pi_{1-a}(X_{-S})} \right\} \\
	&- (\bbE - \overline \bbE) \{ \overline \mu_a(X) \mid A = 1-a, X_{-S} \} + \{ \mu_a(X_{-S}) - \overline \mu_a (X_{-S}) \}  \Bigg] \Bigg) \\
	= \bbE \Bigg( \overline \varphi_a(X_{-S}) \Bigg[ &-\{ \mu_a(X_{[p}) - \overline \mu_a(X) \} \left\{ \frac{\pi_{1-a}(X)}{\pi_{1-a}(X_{-S})} \right\} \\
	&- \overline \varphi(X_{-S}) + \mu_a(X_{-S}) - \bbE \{ \overline \mu_a(X) \mid 1-a, X_{-S} \} ] \\
	&\hspace{-1in}= \bbE \Bigg[ \overline \varphi_a(X_{-S}) \Bigg\{ \varphi_a(X_{-S}) - \overline \varphi_a(X_{-S}) \Bigg\} \Bigg]
\end{align*}
where the second line follows by the definition of $\overline \varphi_a(X_{-S})$ and the final line follows because
\begin{align*}
	\bbE \Bigg[ \{ \mu_a(X_{[p}) - \overline \mu_a(X) \} \left\{ \frac{\pi_{1-a}(X)}{\pi_{1-a}(X_{-S})} \right\} \mid X_{-S} \Bigg] &= \int_{\mathcal{X}_{-S}}  \{ \mu_a(X_{[p}) - \overline \mu_a(x) \} \left\{ \frac{\pi_{1-a}(x)}{\pi_{1-a}(X_{-S})} \right\} \bbP(x \mid X_{-S}) dx \\
	&\hspace{-1in}= \int_{\mathcal{X}_{-S}}  \{ \mu_a(X_{[p}) - \overline \mu_a(x) \} \left\{ \frac{\bbP(x \mid A = 1-a, X_{-S})}{\bbP(x \mid X_{-S})} \right\} \bbP(x \mid X_{-S}) dx \\
	&\hspace{-2in} = \int_{\mathcal{X}_{-S}}  \{ \mu_a(X_{[p}) - \overline \mu_a(x) \} \bbP(x \mid A = 1-a, X_{-S}) \\
	&\hspace{-2in}= \bbE \{ \mu_a (X) - \overline \mu_a (X) \mid A = 1-a, X_{-S} \} 
\end{align*} 
where the first line follows by definition, the second because, under Assumption~\ref{asmp:positivity}, $\frac{\pi_{1-a}(X)}{\pi_{1-a}(X_{-S})} = \frac{\bbP(X \mid A -1, X_{-S})}{\bbP(X \mid X_{-S})}$, the third by canceling, and the final line by definition.

\bigskip

Therefore, we have
$$
\bbE \{ \rho_a(Z; S) \} = \bbE \Big[ f(Z; S)- \overline \varphi_a(X_{-S})^2 + \overline \varphi_a(X_{-S})   \varphi_a(X_{-S}) \Big] 
$$
where $\bbE \{ f(Z; S) \}$ is a second order product of errors.  Plugging this back into \eqref{eq:lambda_remainder}, notice that 
\begin{equation}
	R_2 (\overline{\mathcal{P}}, \mathcal{P}) = \mathbb{E} \left[ - \{ \overline \varphi_a(X_{-S}) - \varphi_a(X_{-S}) \}^2 + 2 \Big\{ f(Z; S) \Big\} \right] \label{eq:second_order1}
\end{equation}
where $\bbE \{ f(Z; S) \}$ is the second order product of errors; specifically, we have
\begin{align}
	\hspace{-0.2in}\bbE \{ f(Z; S) \} = \bbE \Bigg\{ \overline \varphi_a(X_{-S})  \Bigg( &\left\{ \frac{\pi_a(X_{-S}) - \overline \pi_a(X_{-S})}{\overline \pi_a(X_{-S})} \right\} \{ \mu_a(X_{-S}) - \overline \mu_a (X_{-S}) \} \nonumber \\
	&- \left\{ \frac{\pi_a(X) - \overline \pi_a(X)}{\overline \pi_a (X)} \right\} \{ \mu_a(X) - \overline \mu_a (X)  \} \left\{ \frac{\overline \pi_{1-a}(X)}{ \overline \pi_{1-a}(X_{-S})} \right\} \nonumber  \\
	&- \left\{ \frac{\pi_{1-a}(X_{-S}) - \overline \pi_{1-a}(X_{-S})}{\overline \pi_{1-a}(X_{-S})} \right\} \Big[ (\bbE - \overline \bbE) \{ \overline \mu_a(X) \mid A = 1-a, X_{-S} \} \Big] \nonumber \\
	&- \{ \mu_a(X_{[p}) - \overline \mu_a(X) \} \left\{ \frac{\overline \pi_{1-a}(X)}{ \overline \pi_{1-a}(X_{-S})} - \frac{\pi_{1-a}(X)}{\pi_{1-a}(X_{-S})} \right\} \Bigg) \Bigg\}. \label{eq:second_order2}
\end{align}
Therefore, we conclude that the remainder term is second-order.

\subsubsection*{Proof that second order remainder implies efficient influence function}

Now, we relate each of the efficient influence functions back to scores of smooth parametric submodels.  We will make a general argument that applies to both efficient influence functions.  We ignore subscript notation, and let $\psi: \mathcal{P} \to \bbR$ denote a generic functional.  We include this last part of the proof for completeness, but it is an imitation of Lemma 2 in \citet{kennedy2023density}.

\bigskip

Recall from semiparametric efficiency theory that the nonparametric efficiency bound for a functional is given by the supremum of Cramer-Rao lower bounds for that functional across smooth parametric submodels \citep{bickel1993efficient, van2002semiparametric}.  The efficient influence function is the unique mean-zero function that is a valid submodel score satisfying pathwise differentiability; i.e., if $\varphi$ is the efficient influence function of $\psi$, it satisfies
\begin{equation} \label{eq:pathdiff}
	\frac{d}{d \epsilon} \psi (\mathcal{P}_{\epsilon}) \bigg|_{\epsilon = 0} = \int_{\mathcal{Z}} \varphi(\mathcal{P}) \left( \frac{d}{d\epsilon} \log d\mathcal{P}_{\epsilon} \right) \bigg|_{\epsilon = 0} d\mathcal{P}
\end{equation}
for $\mathcal{P}_\epsilon$ any smooth parametric submodel.  

\bigskip

To see that each of the proposed efficient influence functions are, indeed, efficient influence functions, observe that the relevant von Mises expansions imply
\begin{align*}
	\frac{d}{d \epsilon} \psi(\mathcal{P}_{\epsilon}) &= \frac{d}{d\epsilon} \left( \psi(\mathcal{P}) - \int_{\mathcal{Z}} \varphi(\mathcal{P}) d(\mathcal{P} - \mathcal{P}_\epsilon) -  R_2 (\mathcal{P}, \mathcal{P}_\epsilon) \right) \\
	&= \frac{d}{d\epsilon} \int_{\mathcal{Z}}  \varphi(\mathcal{P}) d(\mathcal{P}_\epsilon - \mathcal{P}) -  \frac{d}{d\epsilon}  R_2 (\mathcal{P}, \mathcal{P}_\epsilon)  \\
	&= \int_{\mathcal{Z}}  \varphi(\mathcal{P})  \frac{d}{d\epsilon} d\mathcal{P}_\epsilon -  \frac{d}{d\epsilon}  R_2 (\mathcal{P}, \mathcal{P}_\epsilon) \\
	&= \int_{\mathcal{Z}}  \varphi(\mathcal{P})  \left(\frac{d}{d\epsilon} \log d\mathcal{P}_\epsilon \right) d\mathcal{P}_\epsilon -  \frac{d}{d\epsilon}  R_2 (\mathcal{P}, \mathcal{P}_\epsilon)
\end{align*}
where the second line follows because $\psi(\mathcal{P})$ does not depend on $\epsilon$, the third because $\int \varphi(\mathcal{P}) d\mathcal{P} = 0$, and the fourth and final line since $\frac{d}{d\epsilon} \log d\mathcal{P}_\epsilon  = \frac{1}{d\mathcal{P}_\epsilon} \frac{d}{d\epsilon} d\mathcal{P}_\epsilon$.  Evaluating this expression at $\epsilon = 0$, we have
$$
\int_{\mathcal{Z}}  \varphi(\mathcal{P})  \left(\frac{d}{d\epsilon} \log d\mathcal{P}_\epsilon \right) d\mathcal{P}_\epsilon -  \frac{d}{d\epsilon}  R_2 (\mathcal{P}, \mathcal{P}_\epsilon) \bigg|_{\epsilon = 0} = \int_{\mathcal{Z}}  \varphi(\mathcal{P})  \left(\frac{d}{d\epsilon} \log d\mathcal{P}_\epsilon \right) \bigg|_{\epsilon = 0} d\bbP -  0 
$$
since
$$
\frac{d}{d\epsilon} R_2 (\mathcal{P}, \mathcal{P}_\epsilon) \Big|_{\epsilon = 0} = 0,
$$
which shows that $\varphi$ satisfies the property in eq. \eqref{eq:pathdiff}.  The last equation involving $R_2$ follows because $R_2$ consists of only second-order products of errors between $\mathcal{P}_\epsilon$ and $\mathcal{P}$ by the arguments above.  Therefore, the derivative is composed of a sum of terms, each of which is a product of a derivative term that may not equal zero and an error term involving the differences of components of $\mathcal{P}$ and $\mathcal{P}_\epsilon$, which will be zero when $\epsilon = 0$ since $\mathcal{P} = \mathcal{P}_\epsilon$. 

\bigskip

Since the model is nonparametric, the tangent space is the entire Hilbert space of mean-zero finite-variance functions, and so there is only one influence function satisfying eq. \eqref{eq:pathdiff} and it is the efficient one \citep{tsiatis2006semiparametric}.   Therefore, both efficient influence functions proposed in Lemma~\ref{lem:out_eifs} are indeed, the efficient influence functions.
\end{newproof}

\section{Proofs for Section~\ref{sec:est}} \label{app:est}

Here, we prove Theorems~\ref{thm:fx_conv}-\ref{thm:out_conv}.  Each proof is split into two steps:
\begin{enumerate}
\item Establishing that the estimator for measured confounding is regular and asymptotically linear (possibly under doubly robust conditions), and
\item Establishing that the estimator for the upper bound on the ATE is regular and asymptotically linear.
\end{enumerate}
Let $\bbP$ denote expectation conditional on the training sample throughout what follows.  Moreover, note that the positivity assumption (Assumption~\ref{asmp:positivity}) implies $\mu_a(X_{-S})$ are well-defined for all $S \in 2^{[d]}$.

\subsubsection*{Proof of Theorem~\ref{thm:fx_conv}}
\begin{newproof}
Here, measured confounding is $|\psi - \psi_{-j^\prime}|$.  Therefore, first, we establish convergence guarantees for each adjusted mean difference estimator.  For all $j \in [d] \cup \emptyset$, we have
\begin{equation} \label{eq:psihat_decomp}
	\widehat \psi_{-j} - \psi_{-j} = (\bbP_n - \bbE) \{ \phi(Z_{-j}) \} + (\bbP_n - \bbP) \{ \widehat \phi(Z_{-j}) - \phi(Z_{-j}) \} + \bbP\{ \widehat \phi(Z_{-j}) - \phi(Z_{-j}) \}
\end{equation}
by adding zero, and where $\phi$ is defined in \eqref{eq:mde_if}. Then, by condition 2 of the theorem, sample splitting, and Lemma 2 in \citet{kennedy2020sharp}, the empirical process term satisfies $(\bbP_n - \bbP) \{ \widehat \phi(Z_{-j}) - \phi(Z_{-j}) \} = o_\bbP(n^{-1/2})$. Moreover, by condition 2 and Jensen's inequality, the bias satisfies
$$
\bbP\{ \widehat \phi(Z_{-j}) - \phi(Z_{-j}) \} \leq \sqrt{\bbP\Big[ \{ \widehat \phi(Z_{-j}) - \phi(Z_{-j}) \}^2 \Big]} \equiv \lVert \widehat \phi(Z_{-j}) - \phi(Z_{-j}) \rVert_2 = o_\bbP(1).
$$
Hence, by applying the standard CLT to $(\bbP_n - \bbE) \{ \phi(Z_{-j}) \}$, 
$$
\widehat \psi_{-j}- \psi_{-j} = o_\bbP(1) \ \forall\ j \in [d] \cup \emptyset.
$$
For $j \in \{j^\prime, \emptyset \}$, the convergence guarantee can be strengthened. We have
$$
\widehat \psi_{-j} - \psi_{-j} = (\bbP_n - \bbE) \{ \phi(Z_{-S}) \} + o_\bbP(n^{-1/2})
$$
because the bias term from \eqref{eq:psihat_decomp} satisfies $\bbP\{ \widehat \phi(Z_{-j}) - \phi(Z_{-j}) \} = o_\bbP(n^{-1/2})$ by iterated expectations, Cauchy-Schwarz, and conditions 1 and 3 of the theorem.

\bigskip

Next, we can analyze the convergence of the estimator for measured confounding, $\max_{j \in [d]} | \widehat \psi - \widehat \psi_{-j}|$.  Because $f(x, y) = |x-y|$ is continuous for all $x, y \in \bbR$,
\begin{equation} \label{eq:pf_one}
	| \widehat \psi - \widehat \psi_{-j}| - |\psi - \psi_{-j}| = o_\bbP(1)\ \forall \ j \in [d]
\end{equation}
by the continuous mapping theorem.  Moreover, by Assumption~\ref{asmp:separation_max_fx}, which guarantees $|\psi - \psi_{-j^\prime}|$ is unique, and the delta method,
\begin{equation} \label{eq:pf_two}
	| \widehat \psi - \widehat \psi_{-j^\prime} | - | \psi - \psi_{-j^\prime} | = (\bbP_n - \bbE) \big[ \text{sign}\left(\psi - \psi_{-j^\prime}\right) \left\{ \phi(Z) - \phi(Z_{-j^\prime}) \right\} \big] + o_\bbP(n^{-1/2}).
\end{equation}
Therefore, by Theorem~\ref{thm:technical_max} and \eqref{eq:pf_one} and \eqref{eq:pf_two}, 
$$
\max_{j \in [d]} | \widehat \psi - \widehat \psi_{-j}| - | \psi - \psi_{j^\prime}| = (\bbP_n - \bbE) \big[ \text{sign}\left(\psi - \psi_{-j^\prime}\right) \left\{ \phi(Z) - \phi(Z_{-j^\prime}) \right\} \big] + o_\bbP (n^{-1/2}).
$$
Finally, because $\widehat \psi - \psi = (\bbP_n - \bbE) \{ \phi(Z) \} + o_\bbP(n^{-1/2})$, the result then follows by the delta method and because $0 < \Gamma  | \psi - \psi_{j^\prime}| < \infty$ by Assumption~\ref{asmp:bounded}.
\end{newproof}

\subsubsection*{Proof of Theorem~\ref{thm:odds_conv}}

\begin{newproof}
We begin with the estimator for measured confounding. By conditions 1 and 2 of the result,  $\widehat \pi_1^\perp$ is a regular and asymptotically linear estimator for $\pi_1^\perp$ by Theorem 5.39 in \cite{van2000asymptotic}; moreover, the estimator for the $j^{th}$ coefficient of $\beta$ satisfies 
$$
\widehat \beta_j - \beta_j = e_j^T I_{\beta}^{-1} (\bbP_n - \bbE) \big\{ s(Z; \beta) \big\} + o_\bbP(n^{-1/2})
$$
by the delta method, where $I$ and $s$ are defined in the result.  Then, as with the proof of Theorem~\ref{thm:fx_conv}, by the continuous mapping theorem, delta method, and Theorem~\ref{thm:technical_max},
$$
\widehat M - M = (\bbP_n - \bbP) \{ \phi(Z) \} + o_\bbP(n^{-1/2}),
$$
where $\phi_M(Z)$ is defined in the result, and $\widehat M$ and $M$ are given in Definition~\ref{def:odds}.

\bigskip

Next, we analyze the estimator for the bound.  By iterated expectations in the first line and adding zero in the second line,
\begin{align*}
	\bbP_n \big[ \varphi_{\mathcal{U}} \{ Z; \widehat \eta(\Gamma \widehat M) \} \big] - \mathcal{U}(\Gamma) &= \bbP_n \big[ \varphi_{\mathcal{U}} \{ Z; \widehat \eta(\Gamma \widehat M) \} \big] -  \bbE \big[ \varphi_{\mathcal{U}} \{ Z; \eta(\Gamma M) \} \big] \\
	&= (\bbP_n - \bbE) \big[ \varphi_{\mathcal{U}} \{ Z; \eta(\Gamma M) \} \big] \\
	&+ (\bbP_n - \bbP) \big[ \varphi_{U} \{ Z; \widehat \eta(\Gamma \widehat M) \} -  \varphi_{U} \{ Z; \eta(\Gamma M) \} \big]  \\
	&+ \bbP \big[ \varphi_{U} \{ Z; \widehat \eta(\Gamma \widehat M) \} -  \varphi_{U} \{ Z; \eta(\Gamma \widehat M) \} \big] \\
	&+ \bbP \big[ \varphi_{U} \{ Z; \eta(\Gamma \widehat M) \} -  \varphi_{U} \{ Z; \eta(\Gamma M) \} \big],
\end{align*} 
where $\varphi_{\mathcal{U}}$ is defined in \eqref{eq:yad_eif_upper}. The empirical process term $(\bbP_n - \bbP) \big[ \varphi_{U} \{ Z; \widehat \eta(\Gamma \widehat M) \} -  \varphi_{U} \{ Z; \eta(\Gamma M) \} \big] = o_\bbP(n^{-1/2})$ by sample splitting, Lemma 2 in \citet{kennedy2020sharp}, and condition 3 of the result.  The first bias term satisfies $\bbP \big[ \varphi_{U} \{ Z; \widehat \eta(\Gamma \widehat M) \} -  \varphi_{U} \{ Z; \eta(\Gamma \widehat M) \} \big] = o_\bbP(n^{-1/2})$ by condition 4 of the result. The second bias term can be decomposed using Lemma~\ref{lem:diff} and Taylor's theorem:
$$
\bbP \big[ \varphi_{U} \{ Z; \eta(\Gamma \widehat M) \} -  \varphi_{U} \{ Z; \eta(\Gamma M) \} \big] = \frac{\partial}{\partial t} \bbE \big[ \varphi_{U} \{ Z; \eta(\Gamma t) \} \big] \Big|_{t = M} \left( \Gamma \widehat M - \Gamma M \right) + o \left(\Gamma \left| \widehat M - M \right|\right).
$$
Notice that, by iterated expectations, $\frac{\partial}{\partial t} \bbE \big[ \varphi_{U} \{ Z; \eta(\Gamma t) \} \big] \Big|_{t = M} \equiv \frac{\partial}{\partial M} \mathcal{U}(\Gamma)$, where $\frac{\partial}{\partial M} \mathcal{U}(\Gamma)$ is defined in Lemma~\ref{lem:show_diff_monotone}.  Meanwhile, $o \left(\Gamma \left| \widehat M - M \right|\right) = o_\bbP(n^{-1/2})$ because $\widehat M - M = O_\bbP(n^{-1/2})$ by the argument above for estimating measured confounding. The result then follows by the delta method.
\end{newproof}

\subsubsection*{Proof of Theorem~\ref{thm:out_conv}}

\begin{newproof}
First, we consider estimating the components of measured confounding.  To begin, notice that
$$
\bbP_n \{ \widehat \xi_a(X) \} - \lVert \pi_a(X) \rVert_2^2 = (\bbP_n - \bbE) \{ \xi_a(X) \} + (\bbP_n - \bbP) \{ \widehat \xi_a(X) - \xi_a(X) \} + \bbP\{ \widehat \xi_a(X) - \xi_a(X) \},	
$$
where $\xi$ is defined in \eqref{eq:xi}. 	Then, by condition 3 of the result, sample splitting, and Lemma 2 of \citet{kennedy2020sharp}, the empirical process term satisfies $(\bbP_n - \bbP) \{ \widehat \xi_a(X) - \xi_a(X) \} = o_\bbP(n^{-1/2})$. Next, notice that the bias term is a second-order product of errors by the same proof as for deriving the EIF in Lemma~\ref{lem:out_eifs}.  Indeed, the bias is second-order by \eqref{eq:xi_bias}, but replacing overlines by hats.  Therefore, by condition 6 and Cauchy-Schwarz,  $\bbP\{ \widehat \xi_a(X) - \xi_a(X) \} = o_\bbP(n^{-1/2}).$  Therefore,
$$
\bbP_n \{ \widehat \xi_a(X) \} - \lVert \pi_a(X) \rVert_2^2 = (\bbP_n - \bbE) \{ \xi_a(X) \} + o_\bbP(n^{-1/2}).
$$
So, by Assumption~\ref{asmp:positivity} and the delta method,
$$
\sqrt{\bbP_n \{ \widehat \xi_a(X) \}} - \lVert \pi_a(X) \rVert_2 = (\bbP_n - \bbE) \left\{ \frac{\xi_a(X)}{2 \lVert \pi_a(X) \rVert_2} \right\}  + o_\bbP(n^{-1/2}).
$$

\bigskip

Next,
\begin{align*}
	&\bbP_n \{ \widehat \lambda_a(Z; S) \} - \lVert \mu_a(X_{-S}) - \bbE\{ \mu_a(X) \mid A = 1-a, X_{-S} \} \rVert_2^2 \\
	&\hspace{0.5in}= (\bbP_n - \bbE) \{ \lambda_a(Z; S) \} + (\bbP_n - \bbP) \{  \lambda_a(Z; S) -  \lambda_a(Z; S) \} + \bbP\{ \widehat \lambda_a(Z; S) -  \lambda_a(Z; S) \},
\end{align*}
where $\lambda$ is defined in \eqref{eq:lambda}. The empirical process term is $o_\bbP(n^{-1/2})$ by sample splitting, Lemma 2 of \citet{kennedy2020sharp}, and condition 4.  Then, \eqref{eq:second_order1} and \eqref{eq:second_order2} define the bias of $\bbP\{ \widehat \lambda_a(Z; S) - \lambda_a(Z; S) \}$, replacing overlines by hats. Therefore, the bias term can be controlled by conditions 1, 5, and 7 and Cauchy-Schwarz. In conclusion,
$$
\bbP_n \{ \widehat \lambda_a(Z; S) \} - \lVert \mu_a(X_{-S}) - \bbE\{ \mu_a(X) \mid A = 1-a, X_{-S} \} \rVert_2^2 = (\bbP_n - \bbE) \{ \lambda_a(Z; S) \} + o_\bbP(n^{-1/2})
$$
for $a \in \{ 0,1\}$ and all $S \in \mathcal{S}$.  Then, because measured confounding is non-zero by Assumption~\ref{asmp:bounded}, the delta method yields
$$
\frac{1}{|\mathcal{S}|} \sum_{S \in \mathcal{S}} \bbP_n \{ \widehat \lambda_a(Z; S) \} - \frac{1}{|\mathcal{S}|} \sum_{S\in \mathcal{S}} \lVert \mu_a(X_{-S}) - \bbE \{ \mu_a(X) \mid A =1-a, X_{-S} \} \rVert_2^2  =\frac{1}{|\mathcal{S}|} \sum_{S\in \mathcal{S}} (\bbP_n - \bbP) \{ \lambda_a(Z; S) \} + o_\bbP(n^{-1/2})
$$
and
$$
\hspace{-0.5in}\sqrt{\frac{1}{|\mathcal{S}|} \sum_{S \in \mathcal{S}} \bbP_n \{ \widehat \lambda_a(Z; S) \} } - \sqrt{\frac{1}{|\mathcal{S}|} \sum_{S\in \mathcal{S}} \lVert \mu_a(X_{-S}) - \bbE \{ \mu_a(X) \mid A =1-a, X_{-S} \} \rVert_2^2  } = \frac{1}{|\mathcal{S}|} \sum_{S \in \mathcal{S}} (\bbP_n - \bbP) \left\{ \frac{\lambda_a(Z; S)}{2M_a} \right\} + o_\bbP(n^{-1/2})
$$
where $M_a$ is defined in the result.

\bigskip

Finally, for the other piece of the bound, $\bbP_n \{ \widehat \phi(Z) \} - \psi = (\bbP_n - \bbE) \{ \phi(Z) \} + o_\bbP(n^{-1/2})$, by conditions 1, 2, and 5, and the same analysis as in the proof of Theorem~\ref{thm:fx_conv}. The result then follows by the delta method.
\end{newproof}

\section{Technical results for the estimator of maximum measured confounding} \label{app:technical}

Here, we state and prove a helper result that was used in the proofs of Theorems~\ref{thm:fx_conv} and~\ref{thm:odds_conv}, which, in context of this paper, implies that the estimator of maximum measured confounding is regular and asymptotically linear providing the estimator of the measured confounding corresponding to the maximum is regular and asymptotically linear and the estimators for all the non-maximum sub-quantifications of measured confounding are consistent.

\begin{theorem} \label{thm:technical_max}
Suppose, for $J < \infty$ bounded real-valued functionals $\psi_1, \dots, \psi_J$,
\begin{enumerate}
	\item without loss of generality, $\psi_1 > \psi_j$ for all $j > 1$,
	\item $\widehat \psi_1$ satisfies $\widehat \psi_1 - \psi_1 = (\bbP_n - \bbP) \varphi_1 + o_\bbP(n^{-1/2})$, and
	\item $\widehat \psi_j$ satisfies $\widehat \psi_j - \psi_j = o_\bbP(1)$ for all $j > 1$. 
\end{enumerate}
Let
\begin{equation}
	\widehat j = \argmax_{j \in \{1, \dots, J\}} \widehat \psi_j.
\end{equation} 
Then,
\begin{equation}
	\widehat \psi_{\widehat j} - \psi_1 = (\bbP_n - \bbP) \varphi_1 + o_\bbP(n^{-1/2}) 
\end{equation}
\end{theorem}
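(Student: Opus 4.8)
The plan is to exploit the strict separation of the maximum: since $\psi_1$ strictly dominates all other $\psi_j$, and all the estimators are consistent, the estimated argmax $\widehat j$ must coincide with the true argmax index $1$ with probability tending to one, after which $\widehat\psi_{\widehat j}$ is literally equal to $\widehat\psi_1$ on that high-probability event. First I would set $\delta := \min_{j \in \{2,\dots,J\}}(\psi_1 - \psi_j)$, which is strictly positive by assumption~1 and finite because $J < \infty$. Assumption~3 gives $\widehat\psi_j - \psi_j = o_\bbP(1)$ for $j > 1$, and assumption~2 (together with the CLT for $(\bbP_n-\bbP)\varphi_1$, using $\varphi_1 \in L_2(\mathcal P)$) gives $\widehat\psi_1 - \psi_1 = o_\bbP(1)$ as well. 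On the event $\mathcal A_n := \bigcap_{j=1}^{J}\{|\widehat\psi_j - \psi_j| < \delta/2\}$, which has probability tending to one, one has $\widehat\psi_1 > \psi_1 - \delta/2 \ge \psi_j + \delta/2 > \widehat\psi_j$ for every $j > 1$, so $\widehat j = 1$; hence $\bbP(\widehat j \neq 1) \le \bbP(\mathcal A_n^c) \to 0$.

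Next I would record the (harmless but needed) boundedness fact: since each $\psi_j$ is bounded and $\widehat\psi_j = \psi_j + o_\bbP(1)$, each $\widehat\psi_j = O_\bbP(1)$, so $\widehat\psi_{\widehat j} = \max_{1 \le j \le J}\widehat\psi_j = O_\bbP(1)$ and therefore $\widehat\psi_{\widehat j} - \psi_1 = O_\bbP(1)$. Then I would decompose
\[
\widehat\psi_{\widehat j} - \psi_1 = (\widehat\psi_1 - \psi_1) \;-\; (\widehat\psi_1 - \psi_1)\,\one(\widehat j \neq 1) \;+\; (\widehat\psi_{\widehat j} - \psi_1)\,\one(\widehat j \neq 1).
\]
The key quantitative observation is that an indicator of an event of vanishing probability decays faster than any deterministic rate: for fixed $\epsilon > 0$ and all large $n$, $\bbP\{\sqrt n\,\one(\widehat j \neq 1) > \epsilon\} = \bbP(\widehat j \neq 1) \to 0$, so $\one(\widehat j \neq 1) = o_\bbP(n^{-1/2})$. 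Combining this with $\widehat\psi_1 - \psi_1 = O_\bbP(n^{-1/2})$ (assumption~2) and $\widehat\psi_{\widehat j} - \psi_1 = O_\bbP(1)$ shows that the last two terms above are each $o_\bbP(n^{-1/2})$. Substituting the expansion $\widehat\psi_1 - \psi_1 = (\bbP_n - \bbP)\varphi_1 + o_\bbP(n^{-1/2})$ from assumption~2 into the first term then delivers $\widehat\psi_{\widehat j} - \psi_1 = (\bbP_n - \bbP)\varphi_1 + o_\bbP(n^{-1/2})$.

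I do not expect a genuine obstacle here; the argument is short once framed correctly. The one place requiring a little care is upgrading the statement ``the contribution of the event $\{\widehat j \neq 1\}$ is asymptotically negligible'' from $o_\bbP(1)$ to the sharper $o_\bbP(n^{-1/2})$, and this is precisely where the boundedness of the target functionals (hence $\widehat\psi_{\widehat j} = O_\bbP(1)$) and the super-polynomial decay of an indicator of a probability-$o(1)$ event are used together. Everything else is bookkeeping with $o_\bbP$/$O_\bbP$ calculus, so I would keep it brief.
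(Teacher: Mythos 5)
Your proposal is correct and follows essentially the same route as the paper: the same decomposition into $(\widehat\psi_1-\psi_1)$ plus terms carried by $\one(\widehat j\neq 1)$, the same union-bound-plus-consistency argument showing $\bbP(\widehat j\neq 1)\to 0$ (the paper isolates this as a separate technical lemma establishing $\one(\widehat j\neq 1)=o_\bbP(n^{-\alpha})$ for all $\alpha>0$), and the same use of boundedness of the $\psi_j$ to make the multiplied factor $O_\bbP(1)$. No gaps.
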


\begin{newproof}
We can decompose the error like so:
$$
\widehat \psi_{\widehat j} - \psi_1 = \widehat \psi_{\widehat j} - \widehat \psi_1 + \widehat \psi_1 - \psi_1 = \underbrace{\one(\widehat j \neq 1) \left( \widehat \psi_{\widehat j} - \widehat \psi_1  \right)}_{R_n} +  \underbrace{\widehat \psi_1 - \psi_1}_{Z}.
$$
The second term $Z$ satisfies the result.  Meanwhile, $\one(\widehat j \neq 1) = o_\bbP(n^{-1/2})$ by Lemma~\ref{lem:technical_max}.  Finally, notice that
$$
\left| \widehat \psi_{\widehat j} - \widehat \psi_1  \right| \leq \left| \widehat \psi_{\widehat j} - \psi_{\widehat j} \right| + \left| \psi_{\widehat j} - \psi_1 \right| + \left| \psi_1 - \widehat \psi_1 \right| = O_\bbP(1),
$$
where the last equality follows by conditions 2 and 3 and because $|\psi_{\widehat j} - \psi_1|$ is bounded by assumption.  Therefore, because $|X| = O_\bbP(1) \implies X = O_\bbP(1)$ for any random variable $X$, 
$$
R_n = o_\bbP(n^{-1/2}) O_\bbP(1) = o_\bbP(n^{-1/2}).
$$
The result follows.
\end{newproof}

\begin{lemma} \label{lem:technical_max}
Under the setup of Theorem~\ref{thm:technical_max}, but weakening condition 2 such that $\widehat \psi_1 - \psi_1 = o_\bbP(1)$ (i.e., such that $\widehat \psi_1$ is merely consistent),
$$
\one( \widehat j \neq 1) = o_\bbP(n^{-\alpha}) 
$$
for all $\alpha > 0$.
\end{lemma}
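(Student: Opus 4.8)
The plan is to reduce the claim to an ordinary consistency statement for the index estimator $\widehat j$, exploiting that $\one(\widehat j \neq 1)$ is $\{0,1\}$-valued. Indeed, for any $\alpha > 0$ and any $\epsilon \in (0,1)$, once $n$ is large enough that $n^\alpha > \epsilon$ we have $\bbP\{ n^\alpha \one(\widehat j \neq 1) > \epsilon \} = \bbP(\widehat j \neq 1)$; hence $n^\alpha \one(\widehat j \neq 1) \inprob 0$ for every $\alpha > 0$ if and only if $\bbP(\widehat j \neq 1) \to 0$. So it suffices to show that $\widehat j$ equals the true maximizing index with probability tending to one.

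To that end, I would set $\delta := \min_{j > 1}(\psi_1 - \psi_j)$, which is strictly positive because there are only $J < \infty$ functionals and $\psi_1 > \psi_j$ for all $j > 1$ by condition 1. The key deterministic observation is that $\widehat j \neq 1$ forces a large estimation error somewhere: if $\widehat j = j_0 \neq 1$, then $\widehat\psi_{j_0} \geq \widehat\psi_1$ by definition of the argmax (this handles ties, since we only need a non-strict inequality), so
\[
\delta \;\leq\; \psi_1 - \psi_{j_0} \;\leq\; (\widehat\psi_{j_0} - \psi_{j_0}) + (\psi_1 - \widehat\psi_1) \;\leq\; |\widehat\psi_{j_0} - \psi_{j_0}| + |\widehat\psi_1 - \psi_1|,
\]
whence at least one of the two summands on the right is at least $\delta/2$. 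Therefore
\[
\{\widehat j \neq 1\} \;\subseteq\; \{ |\widehat\psi_1 - \psi_1| \geq \delta/2 \} \;\cup\; \bigcup_{j > 1} \{ |\widehat\psi_j - \psi_j| \geq \delta/2 \}.
\]

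Finally, a union bound yields $\bbP(\widehat j \neq 1) \leq \bbP\{ |\widehat\psi_1 - \psi_1| \geq \delta/2 \} + \sum_{j>1} \bbP\{ |\widehat\psi_j - \psi_j| \geq \delta/2 \}$; each probability on the right tends to zero by the weakened consistency of $\widehat\psi_1$ in condition 2 and by condition 3, and the sum is a finite sum since $J < \infty$, so $\bbP(\widehat j \neq 1) \to 0$. Combined with the reduction in the first paragraph, this gives $\one(\widehat j \neq 1) = o_\bbP(n^{-\alpha})$ for every $\alpha > 0$. There is essentially no hard step: the entire content is the elementary event inclusion together with the fact that a $\{0,1\}$-valued sequence going to $0$ in probability automatically beats any polynomial rate. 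The only points needing a little care are the treatment of ties in the argmax (handled above) and the observation that $\delta > 0$ relies jointly on the strict separation in condition 1 and the finiteness of the index set.
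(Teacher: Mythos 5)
Your proposal is correct and follows essentially the same route as the paper's proof: reduce the polynomial-rate claim to $\bbP(\widehat j \neq 1) \to 0$ using that the indicator is $\{0,1\}$-valued, then control $\bbP(\widehat j \neq 1)$ by a union bound over $j>1$ with the separation gap split between the two estimation errors. The only cosmetic differences are that you use a uniform gap $\delta$ where the paper uses $\psi_1 - \psi_j$ for each $j$, and you treat argmax ties slightly more carefully via the non-strict inequality.
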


\begin{newproof}
Let $\epsilon > 0$. For all $n > \epsilon^{1/\alpha}$, 
\begin{align*}
	\bbP \left\{ |n^\alpha \one(\widehat j \neq 1)| > \epsilon \right\} &= \bbP (\widehat j \neq 1) \\
	&\leq \sum_{j \in \{ 2, \dots, J\}} \bbP \left( \widehat \psi_{1} < \widehat \psi_{j} \right),
\end{align*}
where the inequality follows by the union bound. Notice that the final term on the right-hand side converges to zero as $n \to \infty$ because, for all $j > 1$,
\begin{align*}
	\bbP( \widehat \psi_1 < \widehat \psi_j ) &= \bbP \left( \widehat \psi_j - \psi_j + \psi_1 - \widehat \psi_1 > \psi_1 - \psi_j \right) \\
	&\leq \bbP \left( |\widehat \psi_j - \psi_j| + |\widehat \psi_1 - \psi_1| > \psi_1 - \psi_j \right) \\
	&\leq \bbP \left( |\widehat \psi_j - \psi_j| > \frac{\psi_1 - \psi_j}{2} \right) + \bbP \left(|\widehat \psi_1 - \psi_1| > \frac{\psi_1 - \psi_j}{2} \right) \\
	&\to 0 \text{ as } n \to \infty,
\end{align*}
where the first line follows by adding zero and rearranging, the third by the union bound and because, for $a, b, c > 0$, $a + b > c \implies a > c/2 \text{ or } b > c/2$, and the fourth because $\widehat \psi_{j} \inprob \psi_j$ for all $j \in [J]$ by conditions 2 and 3 of Theorem~\ref{thm:technical_max} and because $\psi_1 > \psi_j$ for all $j > 1$ by condition 1 of Theorem~\ref{thm:technical_max}. Hence, $\one(\widehat j \neq 1) = o_\bbP(n^{-\alpha})$ by the definition of $o_\bbP$ notation.
\end{newproof}

\section{Proofs of results in Appendix~\ref{app:robustness}} \label{app:proofs-robustness}

\subsection{Proof of Proposition~\ref{prop:gamma_zero_def}}
\begin{proof}
	First, we note that Assumption~\ref{asmp:bounded} ensures Assumption~\ref{asmp:existence} is feasible. Without Assumption~\ref{asmp:bounded}, the bounds in a calibrated sensitivity model would be meaningless --- either equal to the true causal effect when $M = 0$ or infinitely wide. Below is the rest of the proof.

	\smallskip

	Consider first the case where $\mathcal{U}(0) = \mathcal{L}(0) = 0$.  In that case, $\Gamma = 0$ is the unique solution of $\mathcal{U}(\Gamma) \mathcal{L}(\Gamma) = 0$. 

    \smallskip

    Otherwise, $\mathcal{U}(0) = \mathcal{L}(0) \neq 0$. Then, because $\mathcal{U}(\cdot)$ and $\mathcal{L}(\cdot)$ are strictly monotone, $\mathcal{U}(\Gamma) > \mathcal{L}(\Gamma)$ for all $\Gamma > 0$. Therefore, we can consider two cases separately, if $\psi_\ast < 0$ or if $\psi_\ast > 0$.  
	
    \smallskip
	
    If $\psi_\ast < 0$, $\mathcal{L}(\Gamma) < 0$ for all $\Gamma \geq 0$. Moreover, the final part of Assumption~1 applies to $\mathcal{U}(\Gamma)$, and there exists $\Gamma < \infty$ such that $\mathcal{U}(\Gamma) = 0$. Then, by the intermediate value theorem that $\Gamma_0$ can be defined as the unique root of $\mathcal{U}(\Gamma) = 0$.  Therefore, $\Gamma_0$ can be defined as the unique solution of $\mathcal{L}(\Gamma)\mathcal{U}(\Gamma) = 0$.
	
    \medskip
    
    Meanwhile, if $\psi_\ast > 0$, $\Gamma_0$ can be defined as the unique root of $\mathcal{L}(\Gamma) = 0$, and $\mathcal{U}(\Gamma) > 0$ for all $\Gamma \geq 0$. Therefore, $\Gamma_0$ can be defined as the unique solution of $\mathcal{L}(\Gamma)\mathcal{U}(\Gamma) = 0$.
\end{proof}

\subsection{Proof of Theorem~\ref{thm:gamma_zero}}

We establish this result in several steps.  First, we establish a uniform convergence guarantee for $\widehat \Psi$.

\begin{proposition} \label{prop:psi-conv}
    Let $\widehat \Psi(\Gamma) = \widehat{\mathcal{U}}(\Gamma) \widehat{\mathcal{L}}(\Gamma)$.  Suppose the conditions of Theorem~\ref{thm:gamma_zero} hold. Then,
    $$
    \left\| (\widehat \Psi - \Psi) - (\bbP_n - \bbE) \varphi_\Psi \right\|_{\mathcal{G}} = o_{\mathbb{P}}(n^{-1/2}),
    $$
    where $\varphi_\Psi(Z; \Gamma) = \varphi_u(Z; \Gamma) \mathcal{L}(\Gamma) + \varphi_l(Z; \Gamma) \mathcal{U}(\Gamma)$.
\end{proposition}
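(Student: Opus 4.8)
The plan is to exploit the product structure $\Psi = \mathcal U \mathcal L$ together with the uniform asymptotic linearity of $\widehat{\mathcal U}$ and $\widehat{\mathcal L}$ supplied by Assumption~\ref{asmp:est-unif-conv}. First I would write $\delta_u := \widehat{\mathcal U} - \mathcal U$ and $\delta_l := \widehat{\mathcal L} - \mathcal L$ and record the exact algebraic identity (bilinearity of the product, adding and subtracting)
\begin{equation*}
\widehat\Psi(\Gamma) - \Psi(\Gamma) = \widehat{\mathcal U}(\Gamma)\widehat{\mathcal L}(\Gamma) - \mathcal U(\Gamma)\mathcal L(\Gamma) = \mathcal U(\Gamma)\,\delta_l(\Gamma) + \mathcal L(\Gamma)\,\delta_u(\Gamma) + \delta_u(\Gamma)\,\delta_l(\Gamma),
\end{equation*}
valid pointwise in $\Gamma \in \mathcal G$.

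Next I would substitute the expansions from Assumption~\ref{asmp:est-unif-conv}, i.e.\ $\delta_u = (\bbP_n - \bbE)\varphi_u + r_u$ and $\delta_l = (\bbP_n - \bbE)\varphi_l + r_l$ with $\|r_u\|_{\mathcal G} = o_{\bbP}(n^{-1/2})$ and $\|r_l\|_{\mathcal G} = o_{\bbP}(n^{-1/2})$. Because $\mathcal U(\Gamma)$ and $\mathcal L(\Gamma)$ are deterministic functions of $\Gamma$, they pass through the centered empirical average, so $\mathcal U(\Gamma)(\bbP_n - \bbE)\varphi_l(\cdot;\Gamma) + \mathcal L(\Gamma)(\bbP_n - \bbE)\varphi_u(\cdot;\Gamma) = (\bbP_n - \bbE)\{\varphi_u(\cdot;\Gamma)\mathcal L(\Gamma) + \varphi_l(\cdot;\Gamma)\mathcal U(\Gamma)\} = (\bbP_n - \bbE)\varphi_\Psi(\cdot;\Gamma)$, which is exactly the claimed leading term. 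Subtracting it, the residual is $\mathcal U\,r_l + \mathcal L\,r_u + \delta_u\delta_l$, and I would bound each piece in $\|\cdot\|_{\mathcal G}$: for the first two, $\|\mathcal U r_l\|_{\mathcal G} \le \|\mathcal U\|_{\mathcal G}\|r_l\|_{\mathcal G}$ and $\|\mathcal L r_u\|_{\mathcal G}\le\|\mathcal L\|_{\mathcal G}\|r_u\|_{\mathcal G}$, where $\|\mathcal U\|_{\mathcal G}$ and $\|\mathcal L\|_{\mathcal G}$ are finite since the bounds are bounded on $\mathcal G$ (Assumption~\ref{asmp:existence}), so both products are $o_{\bbP}(n^{-1/2})$.

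The one place requiring a little care is the quadratic cross term $\delta_u\delta_l$, which I would control via $\|\delta_u\delta_l\|_{\mathcal G}\le\|\delta_u\|_{\mathcal G}\|\delta_l\|_{\mathcal G}$: pointwise in $\Gamma$ this term is trivially $O_{\bbP}(n^{-1})$ by two applications of the CLT, but the statement asks for a bound that is uniform over $\mathcal G$, so I need $\|\delta_u\|_{\mathcal G} = O_{\bbP}(n^{-1/2})$ and $\|\delta_l\|_{\mathcal G} = O_{\bbP}(n^{-1/2})$. This is where Assumption~\ref{asmp:donsker} enters: since $\{\varphi_u(\cdot;\Gamma):\Gamma\in\mathcal G\}$ is Donsker, $\sqrt n(\bbP_n - \bbE)\varphi_u$ converges weakly in $\ell^\infty(\mathcal G)$ to a tight Gaussian process, whence $\|(\bbP_n - \bbE)\varphi_u\|_{\mathcal G} = O_{\bbP}(n^{-1/2})$; combined with $\|r_u\|_{\mathcal G} = o_{\bbP}(n^{-1/2})$ this gives $\|\delta_u\|_{\mathcal G} = O_{\bbP}(n^{-1/2})$, and symmetrically for $\delta_l$. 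Hence $\|\delta_u\delta_l\|_{\mathcal G} = O_{\bbP}(n^{-1}) = o_{\bbP}(n^{-1/2})$, and adding the three bounds gives $\|(\widehat\Psi - \Psi) - (\bbP_n - \bbE)\varphi_\Psi\|_{\mathcal G} = o_{\bbP}(n^{-1/2})$, the claim. The main obstacle is thus really just this uniform handling of the second-order term; the rest is triangle-inequality bookkeeping, with the key structural observation being that $\mathcal U(\cdot)$ and $\mathcal L(\cdot)$ are nonrandom and so factor out of the empirical-process operator, letting the two linear leading terms coalesce into $(\bbP_n - \bbE)\varphi_\Psi$.
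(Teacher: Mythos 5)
Your proof is correct and follows essentially the same route as the paper: the identical product decomposition $\widehat\Psi-\Psi=\mathcal U\,\delta_l+\mathcal L\,\delta_u+\delta_u\delta_l$, factoring the deterministic $\mathcal U,\mathcal L$ through the empirical-process operator, and bounding the remainder terms via boundedness of the bounds on $\mathcal G$ together with the uniform expansions. If anything, you are slightly more careful than the paper's write-up on the quadratic term, where the paper attributes $\lVert\delta_u\delta_l\rVert_{\mathcal G}=O_{\bbP}(n^{-1})$ to Assumption~\ref{asmp:est-unif-conv} and the triangle inequality alone, whereas you correctly note that uniform $O_{\bbP}(n^{-1/2})$ control of $\lVert(\bbP_n-\bbE)\varphi_u\rVert_{\mathcal G}$ also relies on the Donsker property of Assumption~\ref{asmp:donsker}.
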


\begin{proof}
    By adding zero, 
    \begin{align*}
        (\widehat \Psi - \Psi) (\Gamma) &= \widehat{\mathcal{U}}(\Gamma) \widehat{\mathcal{L}}(\Gamma) - \mathcal{U}(\Gamma) \mathcal{L}(\Gamma) \\
        &= \left\{ \widehat{\mathcal{U}}(\Gamma) - \mathcal{U}(\Gamma) \right\} \left\{ \widehat{\mathcal{L}}(\Gamma) - \mathcal{L}(\Gamma) \right\} + \mathcal{U}(\Gamma) \left\{ \widehat{\mathcal{L}}(\Gamma) - \mathcal{L}(\Gamma) \right\} + \mathcal{L}(\Gamma) \left\{ \widehat{\mathcal{U}}(\Gamma) - \mathcal{U}(\Gamma) \right\} 
    \end{align*}
    Therefore, by the definition of $\varphi_\Psi$, 
    \begin{align*}
        (\widehat \Psi - \Psi) (\Gamma) - (\bbP_n - \bbE) \left\{ \varphi_\Psi (Z; \Gamma) \right\} &= \left\{ \widehat{\mathcal{U}}(\Gamma) - \mathcal{U}(\Gamma) \right\} \left\{ \widehat{\mathcal{L}}(\Gamma) - \mathcal{L}(\Gamma) \right\} \\
        &+ \mathcal{U}(\Gamma) \left[ \left\{ \widehat{\mathcal{L}}(\Gamma) - \mathcal{L}(\Gamma) \right\} - (\bbP_n - \bbE) \left\{ \varphi_l(Z; \Gamma) \right\} \right] \\
        &+ \mathcal{L}(\Gamma) \left[ \left\{ \widehat{\mathcal{U}}(\Gamma) - \mathcal{U}(\Gamma) \right\} - (\bbP_n - \bbE) \left\{ \varphi_u(Z; \Gamma) \right\} \right] 
    \end{align*}
    By Assumption~\ref{asmp:est-unif-conv} and the triangle inequality,
    $$
    \left\| \left( \widehat{\mathcal{U}} - \mathcal{U} \right) \left( \widehat{\mathcal{L}} - \mathcal{L} \right) \right\|_{\mathcal{G}} = O_{\mathbb{P}}(n^{-1}) = o_{\mathbb{P}}(n^{-1/2}).
    $$
    Meanwhile, by Assumptions~\ref{asmp:existence} and \ref{asmp:est-unif-conv},
    \begin{align*}
        &\left\| \mathcal{U} \left\{ \left( \widehat{\mathcal{L}} - \mathcal{L} \right) - (\bbP_n - \bbE) \varphi_l \right\} \right\|_{\mathcal{G}} = O(1) o_{\bbP}(n^{-1/2}) = o_{\mathbb{P}}(n^{-1/2}) \text{ and } \\
        &\left\| \mathcal{L} \left\{ \left( \widehat{\mathcal{U}} - \mathcal{U} \right) - (\bbP_n - \bbE) \varphi_u \right\} \right\|_{\mathcal{G}} = O(1) o_{\bbP}(n^{-1/2}) = o_{\mathbb{P}}(n^{-1/2}),
    \end{align*}
    where $O(1)$ denotes boundedness. The result then follows by the triangle inequality.
\end{proof}

\noindent The next result establishes that $\varphi_\Psi(\cdot; \gamma)$ is Donsker in $\Gamma$.

\begin{proposition} \label{prop:varphi-psi-donsker}
    Under the conditions of Theorem~\ref{thm:gamma_zero}, $\varphi_\Psi(\cdot; \Gamma)$ forms a Donsker class. 
\end{proposition}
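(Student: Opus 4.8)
The plan is to read off the Donsker property of $\{\varphi_\Psi(\cdot;\Gamma):\Gamma\in\mathcal{G}\}$ from the decomposition $\varphi_\Psi(Z;\Gamma) = \mathcal{L}(\Gamma)\varphi_u(Z;\Gamma) + \mathcal{U}(\Gamma)\varphi_l(Z;\Gamma)$ together with the facts already in hand: the classes $\mathcal{F}_u := \{\varphi_u(\cdot;\Gamma):\Gamma\in\mathcal{G}\}$ and $\mathcal{F}_l := \{\varphi_l(\cdot;\Gamma):\Gamma\in\mathcal{G}\}$ are Donsker (Assumption~\ref{asmp:donsker}), and $\|\mathcal{L}\|_{\mathcal{G}}<\infty$, $\|\mathcal{U}\|_{\mathcal{G}}<\infty$ (Assumption~\ref{asmp:existence}). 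The point is that $\varphi_\Psi$ is obtained from $\mathcal{F}_u$ and $\mathcal{F}_l$ only by reweighting with bounded \emph{deterministic} functions of the index $\Gamma$ and adding, an operation that preserves weak convergence of the empirical process indexed by $\mathcal{G}$.

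Concretely, I would argue as follows. First, because a finite union of Donsker classes is Donsker, the ``tagged'' class $\{\varphi_u(\cdot;\Gamma):\Gamma\in\mathcal{G}\}\sqcup\{\varphi_l(\cdot;\Gamma):\Gamma\in\mathcal{G}\}$, indexed by $\mathcal{G}\times\{u,l\}$, is Donsker; equivalently, the paired empirical processes $\bigl(\sqrt{n}(\bbP_n-\bbE)\varphi_u(\cdot;\cdot),\,\sqrt{n}(\bbP_n-\bbE)\varphi_l(\cdot;\cdot)\bigr)$ converge weakly, jointly, in $\ell^\infty(\mathcal{G})\times\ell^\infty(\mathcal{G})$ to a tight Gaussian limit. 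Second, the map $T(\alpha,\beta) = \mathcal{L}\,\alpha + \mathcal{U}\,\beta$ from $\ell^\infty(\mathcal{G})\times\ell^\infty(\mathcal{G})$ to $\ell^\infty(\mathcal{G})$ is linear and bounded, since $\|T(\alpha,\beta)\|_{\mathcal{G}}\le\|\mathcal{L}\|_{\mathcal{G}}\|\alpha\|_{\mathcal{G}}+\|\mathcal{U}\|_{\mathcal{G}}\|\beta\|_{\mathcal{G}}$ by Assumption~\ref{asmp:existence}, hence continuous. Applying the continuous mapping theorem to $T$ then gives that $\sqrt{n}(\bbP_n-\bbE)\varphi_\Psi(\cdot;\cdot) = T\bigl(\sqrt{n}(\bbP_n-\bbE)\varphi_u(\cdot;\cdot),\sqrt{n}(\bbP_n-\bbE)\varphi_l(\cdot;\cdot)\bigr)$ converges weakly in $\ell^\infty(\mathcal{G})$ to a tight Gaussian process, i.e.\ $\{\varphi_\Psi(\cdot;\Gamma):\Gamma\in\mathcal{G}\}$ is Donsker. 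An equivalent route is to quote the standard preservation calculus directly (\citet{van1996weak}, Section 2.10): multiplying $\mathcal{F}_u$ by the uniformly bounded class of constant functions $\{z\mapsto c:|c|\le\|\mathcal{L}\|_{\mathcal{G}}\}$ yields a Donsker class whose diagonal subclass $\{\mathcal{L}(\Gamma)\varphi_u(\cdot;\Gamma):\Gamma\in\mathcal{G}\}$ is Donsker (a subclass of a Donsker class is Donsker), likewise for $\{\mathcal{U}(\Gamma)\varphi_l(\cdot;\Gamma)\}$, and then sum-preservation plus a final diagonal-subclass step gives the claim.

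The step requiring the most care is the upgrade from Donsker-ness of $\mathcal{F}_u$ and $\mathcal{F}_l$ \emph{separately} to \emph{joint} weak convergence of the pair of empirical processes --- marginal convergence alone would not license the continuous mapping argument --- which is exactly where ``a finite union of Donsker classes is Donsker'' is used. Apart from that, the only bookkeeping is the usual measurability/tightness hygiene of empirical-process arguments in $\ell^\infty$, and, if one prefers the preservation-theorem route, checking that $\mathcal{F}_u$ and $\mathcal{F}_l$ have square-integrable (in fact bounded) envelopes; this last point is immediate in the concrete models of the paper, where $\varphi_u(\cdot;\Gamma)$ and $\varphi_l(\cdot;\Gamma)$ are bounded functions of $Z$ uniformly over $\Gamma\in\mathcal{G}$ thanks to positivity, boundedness of $Y$, and boundedness of the $\theta_a^{\pm}$ (Assumptions~\ref{asmp:positivity} and~\ref{asmp:continuity}), with $\mathcal{G}$ bounded. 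Hence the proposition reduces to an application of off-the-shelf Donsker calculus, with no substantive obstacle.
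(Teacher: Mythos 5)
Your proposal is correct and follows essentially the same route as the paper, which simply decomposes $\varphi_\Psi(\cdot;\Gamma) = \mathcal{L}(\Gamma)\varphi_u(\cdot;\Gamma) + \mathcal{U}(\Gamma)\varphi_l(\cdot;\Gamma)$ and invokes Assumption~\ref{asmp:donsker} together with the fact that sums of Donsker classes are Donsker. In fact your write-up is more careful than the paper's one-line argument: you explicitly handle the reweighting by the bounded deterministic index functions $\mathcal{L}$ and $\mathcal{U}$ (via preservation results or the continuous mapping theorem on the jointly convergent pair of empirical processes) and the passage to the diagonal subclass, details the paper leaves implicit.
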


\begin{proof}
    By definition, $\varphi_\Psi(\cdot; \Gamma) = \varphi_u(\cdot; \Gamma) \mathcal{L}(\Gamma) + \varphi_l(\cdot; \Gamma) \mathcal{U}(\Gamma)$.  By Assumption~\ref{asmp:donsker}, $\varphi_u(\cdot; \Gamma)$ and $\varphi_l(\cdot; \Gamma)$ are Donsker.  Therefore, $\varphi_\Psi(\cdot; \Gamma)$ forms a Donsker class because sums of Donsker classes are Donsker. 
\end{proof}

The next result establishes that $\widehat \Gamma_0$ is a consistent estimator for $\Gamma_0$. 
\begin{proposition} \label{prop:g0-consistency}
    Let $\widehat \Psi(\gamma) = \widehat{\mathcal{U}}(\Gamma) \widehat{\mathcal{L}}(\Gamma)$ and define $\widehat \Gamma_0$ as the solution to $\widehat \Psi(\widehat \Gamma_0) = o_{\mathbb{P}}(n^{-1/2})$. Suppose the conditions of Theorem~\ref{thm:gamma_zero} hold. Then,
    $$
    \widehat \Gamma_0 \inprob \Gamma_0.
    $$
\end{proposition}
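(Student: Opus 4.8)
The plan is to recognize $\widehat\Gamma_0$ as a $Z$-estimator and invoke the standard consistency argument for such estimators (the argmin/argmax-type result, e.g.\ \citet[Theorem 5.9]{van2000asymptotic}): it suffices to show (a) $\widehat\Psi$ converges to $\Psi$ uniformly over $\mathcal{G}$ in probability, (b) $\Gamma_0$ is a \emph{well-separated} zero of $\Psi$, meaning $\inf\{|\Psi(\Gamma)| : \Gamma\in\mathcal{G},\ |\Gamma-\Gamma_0|\ge\varepsilon\}>0$ for every $\varepsilon>0$, and (c) $\widehat\Psi(\widehat\Gamma_0)=o_\bbP(1)$. Point (c) is immediate, since by construction $\widehat\Psi(\widehat\Gamma_0)=o_\bbP(n^{-1/2})$.

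For (a), I would combine the two preceding propositions. Proposition~\ref{prop:psi-conv} gives $\big\|(\widehat\Psi-\Psi)-(\bbP_n-\bbE)\varphi_\Psi\big\|_{\mathcal{G}}=o_\bbP(n^{-1/2})$, and Proposition~\ref{prop:varphi-psi-donsker} shows $\{\varphi_\Psi(\cdot;\Gamma):\Gamma\in\mathcal{G}\}$ is Donsker, so the empirical process term satisfies $\big\|(\bbP_n-\bbE)\varphi_\Psi\big\|_{\mathcal{G}}=O_\bbP(n^{-1/2})$; hence $\|\widehat\Psi-\Psi\|_{\mathcal{G}}=O_\bbP(n^{-1/2})=o_\bbP(1)$.

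For (b), I would use the structure $\Psi=\mathcal{L}\mathcal{U}$ together with Assumption~\ref{asmp:existence} and the uniqueness of $\Gamma_0$ from Proposition~\ref{prop:gamma_zero_def}. Taking the representative case $\psi_\ast<0$ (the case $\psi_\ast>0$ is symmetric; when $\psi_\ast=0$, $\Gamma_0=0$ and the argument degenerates), $\mathcal{L}$ is continuous, strictly decreasing, and strictly negative throughout $\mathcal{G}$, while $\mathcal{U}$ is continuous, strictly increasing, with $\mathcal{U}(\Gamma_0)=0$; hence $\Psi>0$ on $[0,\Gamma_0)$ and $\Psi<0$ on $(\Gamma_0,G)$, with $|\Psi|=|\mathcal{L}|\,|\mathcal{U}|$ strictly increasing on the latter interval. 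For fixed $\varepsilon>0$ I would then split $\{\Gamma\in\mathcal{G}:|\Gamma-\Gamma_0|\ge\varepsilon\}$ into the compact piece $[0,\Gamma_0-\varepsilon]$, where continuity and strict positivity of $\Psi$ force a positive infimum, and the (possibly non-compact) tail $[\Gamma_0+\varepsilon,G)$, where monotonicity gives $|\Psi(\Gamma)|\ge|\mathcal{L}(\Gamma_0+\varepsilon)|\,|\mathcal{U}(\Gamma_0+\varepsilon)|>0$. Combining the two pieces yields a uniform positive lower bound $\delta_\varepsilon>0$. The conclusion then follows in one line: on $\{|\widehat\Gamma_0-\Gamma_0|\ge\varepsilon\}$ we have $|\Psi(\widehat\Gamma_0)|\ge\delta_\varepsilon>0$, whereas $|\Psi(\widehat\Gamma_0)|\le|\widehat\Psi(\widehat\Gamma_0)|+\|\widehat\Psi-\Psi\|_{\mathcal{G}}=o_\bbP(1)$ by (a) and (c), so $\bbP(|\widehat\Gamma_0-\Gamma_0|\ge\varepsilon)\le\bbP(|\Psi(\widehat\Gamma_0)|\ge\delta_\varepsilon)\to0$, i.e.\ $\widehat\Gamma_0\inprob\Gamma_0$.

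I expect step (b) to be the main obstacle, specifically handling the non-compactness of $\mathcal{G}=[0,G)$: continuity of $\Psi$ alone does not preclude $|\Psi|$ tending to zero as $\Gamma\uparrow G$, so one must genuinely exploit the strict monotonicity and boundedness of $\mathcal{L}$ and $\mathcal{U}$ from Assumption~\ref{asmp:existence} to pin $|\Psi|$ away from zero on the tail. A secondary technicality is that the separation argument presumes $\widehat\Gamma_0$ takes values in $\mathcal{G}$ (or its closure); this should be built into the estimator's definition, and if needed the argument can be run on $\overline{\mathcal{G}}$, where the limiting values of $\mathcal{U}$ and $\mathcal{L}$ at $G$ remain nonzero by the same monotonicity.
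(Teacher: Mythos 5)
Your proof is correct and follows essentially the same route as the paper's: both establish uniform convergence $\lVert \widehat\Psi - \Psi\rVert_{\mathcal{G}} = o_\bbP(1)$ from Propositions~\ref{prop:psi-conv} and~\ref{prop:varphi-psi-donsker} and then invoke the standard Z-estimator consistency theorem (the paper cites \citet[Theorem 2.10]{kosorok2008introduction}, whose ``identifiability'' condition is equivalent to your well-separation condition). The only difference is that you explicitly verify well-separation via the case analysis and the monotonicity bound on the non-compact tail $[\Gamma_0+\varepsilon, G)$, a step the paper asserts in one line from strict monotonicity; your version is a useful elaboration but not a different argument.
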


\begin{proof}
    By Assumption~\ref{asmp:existence}, $\mathcal{U}(\Gamma)$ and $\mathcal{L}(\Gamma)$ are strictly monotone.  Therefore, the identifiability condition in \citet[Theorem 2.10]{kosorok2008introduction} is satisfied: $\mathcal{L}(\Gamma_n)\mathcal{U}(\Gamma_n) \to 0$ implies $|\Gamma_n - \Gamma_0| \to 0$ for any sequence $\{ \Gamma_n \}\in \mathcal{G}$.  Next, by the triangle inequality and Proposition \ref{prop:psi-conv}, $\lVert \widehat \Psi - \Psi \rVert_{\mathcal{G}} = \left\lVert (\mathbb{P}_n - \bbE) \varphi_\Psi \right\rVert_{\mathcal{G}} + o_{\mathbb{P}}(n^{-1/2}).$  Therefore, $\lVert \widehat \Psi - \Psi \rVert_{\mathcal{G}} = o_{\mathbb{P}}(1)$ because $\varphi_\Psi$ is Donsker (see Proposition~\ref{prop:varphi-psi-donsker}) and therefore Glivenko-Cantelli. Consequently, because $\widehat \Psi(\widehat \Gamma_0) = o_{\bbP}(1)$ by construction and $\lVert \widehat \Psi - \Psi \rVert_{\mathcal{G}} = o_{\bbP}(1)$ by the argument above, $|\widehat \Gamma_0 - \Gamma_0| \inprob 0$ by \citet[Theorem 2.10]{kosorok2008introduction}.
\end{proof}

\subsection*{Proof of Theorem~\ref{thm:gamma_zero}}

\noindent Finally, we conclude with a proof of Theorem~\ref{thm:gamma_zero}.

\begin{proof}
    Theorem~\ref{thm:gamma_zero} follows by \citet[Theorem 3.3.1]{van1996weak}.  Therefore, this proof consists of verifying the following conditions:
    \begin{enumerate}
        \item $\Psi$ is differentiable at $\Gamma_0$ with continuous invertible derivative,
        \item $\Psi(\Gamma_0) = 0$ and $\widehat \Psi (\widehat \Gamma_0) = o_{\mathbb{P}}(n^{-1/2})$,
        \item $\sqrt{n} (\widehat \Psi - \Psi) (\Gamma_0) \indist Z$ for some tight random element $Z$, and
        \item $\sqrt{n} (\widehat \Psi - \Psi) (\widehat \Gamma_0) - \sqrt{n}(\widehat \Psi - \Psi) (\Gamma_0) = o_{\mathbb{P}} \left( 1 + \sqrt{n} |\widehat \Gamma_0 - \Gamma_0 | \right)$.
    \end{enumerate}
    The first condition is assumed to be true, in Assumption~\ref{asmp:g0-diff}.  The second condition follows by construction.  The third condition follows by Proposition~\ref{prop:psi-conv} and the triangle inequality, where
    $$
    Z = N \left( 0, \bbV \{ \varphi_\Psi(Z; \Gamma_0) \} \right).
    $$
    Therefore, the rest of the proof consists of establishing condition 4. 

    \medskip

    First, notice that by the triangle inequality and Proposition~\ref{prop:psi-conv}, 
    $$
    \sqrt{n} (\widehat \Psi - \Psi) (\widehat \Gamma_0) - \sqrt{n}(\widehat \Psi - \Psi) (\Gamma_0) = o_{\mathbb{P}}(1) + \sqrt{n} (\bbP_n - \bbE) \varphi_\Psi(Z; \widehat \Gamma_0) - \sqrt{n} (\bbP_n - \bbE) \varphi_\Psi(Z; \Gamma_0).
    $$
    We can prove the right-hand side satisfies the convergence result using \citet[Lemma 3.3.5]{van1996weak}.  Notice that $\varphi_\Psi(\cdot; \Gamma)$ is Donsker by Proposition~\ref{prop:varphi-psi-donsker}. Meanwhile, $\widehat \Gamma_0 \inprob \Gamma_0$ by Proposition~\ref{prop:g0-consistency}.  Therefore, it remains to show
    $$
    \bbE \{ \varphi_\Psi(Z; \Gamma) - \varphi_\Psi(Z; \Gamma_0) \}^2 \to 0 \text{ as } \Gamma \to \Gamma_0.
    $$
    By the definition of $\varphi_\Psi$, adding zero, and because $(a+b)^2 \leq 2(a^2 + b^2)$,
    \begin{align*}
        \bbE \{ \varphi_\Psi(Z; \Gamma) - \varphi_\Psi(Z; \Gamma_0) \}^2 &\lesssim \left\{ \mathcal{L}(\Gamma) - \mathcal{L}(\Gamma_0) \right\}^2 \bbE \{ \varphi_u(\Gamma) \}^2 + \left\{ \mathcal{U}(\Gamma) - \mathcal{U}(\Gamma_0) \right\}^2 \bbE \{ \varphi_l(\Gamma)\}^2 \\
        &+ \mathcal{L}(\Gamma_0)^2 \bbE\{ \varphi_u(\Gamma) - \varphi_u(\Gamma_0) \}^2 + \mathcal{U}(\Gamma_0)^2 \bbE\{ \varphi_l(\Gamma) - \varphi_l(\Gamma_0)\}^2.
    \end{align*}
    The first two summands converge to zero as $\Gamma \to \Gamma_0$ because $\mathcal{L}(\cdot)$ and $\mathcal{U}(\cdot)$ are continuous and because $\varphi_u$ and $\varphi_l$ have bounded mean. Meanwhile, the second two summands converge to zero as $\Gamma \to \Gamma_0$ because $\mathcal{L}(\Gamma_0)$ and $\mathcal{U}(\Gamma_0)$ are bounded by Assumption~\ref{asmp:smooth-ifs}.

    \medskip

    Because the conditions of \citet[Lemma 3.3.5]{van1996weak} have been satisfied, it follows that 
    $$
    \sqrt{n} (\bbP_n - \bbE) \varphi_\Psi(Z; \widehat \Gamma_0) - \sqrt{n} (\bbP_n - \bbE) \varphi_\Psi(Z; \Gamma_0) = o_{\mathbb{P}} \left( 1 + \sqrt{n} |\widehat \Gamma_0 - \Gamma_0 | \right)
    $$
    and therefore
    $$
    \sqrt{n} (\widehat \Psi - \Psi) (\widehat \Gamma_0) - \sqrt{n}(\widehat \Psi - \Psi) (\Gamma_0) = o_{\mathbb{P}} \left( 1 + \sqrt{n} |\widehat \Gamma_0 - \Gamma_0 | \right).
    $$
    Hence, the conditions of \citet[Theorem 3.3.1]{van1996weak} are satisfied and the result follows.
\end{proof}

\subsection{Proof of Corollary~\ref{cor:gamma_zero}}

\begin{proof}
	Under the assumptions of the theorem, $\Psi^\prime(\Gamma) = \mathcal{U}^\prime(\Gamma) \mathcal{L}(\Gamma) + \mathcal{U}(\Gamma) \mathcal{L}^\prime(\Gamma)$. Moreover, $\varphi_\Psi(Z; \Gamma) = \varphi_u(Z; \Gamma) \mathcal{L}(\Gamma) + \varphi_l(Z; \Gamma) \mathcal{U}(\Gamma)$. If $\mathcal{U}(\Gamma_0) = 0$, then $\Psi^\prime(\Gamma_0) = \mathcal{U}^\prime(\Gamma_0) \mathcal{L}(\Gamma_0)$, and $\varphi_\Psi(Z; \Gamma_0) = \varphi_u(Z; \Gamma_0) \mathcal{L}(\Gamma_0)$. The result follows. If $\mathcal{L}(\Gamma_0) = 0$, then the roles of $\varphi_u$ and $\varphi_l$ are reversed, and the result follows similarly. 
\end{proof}

\end{document}